\theoremstyle{plain}
\newtheorem{lemma}{Lemma}
\newtheorem{proposition}{Proposition}
\newtheorem{theorem}{Theorem}
\newtheorem{corollary}{Corollary}
\theoremstyle{remark}
\newtheorem{example}{Example}
\newcommand{\LHC}{\ensuremath{\mathcal{L\hspace*{-0.5mm}H}}}
\newcommand{\RHC}{\ensuremath{\mathcal{R\hspace*{-0.5mm}H}}}
\newcommand{\HC}{\ensuremath{\mathcal{H}}}
\newcommand{\Pref}{\ensuremath{\mathrm{Pref}}}
\newcommand{\Suff}{\ensuremath{\mathrm{Suff}}}
\newcommand{\ind}{\ensuremath{\mathrm{ind}}}
\newcommand{\maxind}{\ensuremath{\mathrm{maxind}}}
\newcommand{\calpha}{\ensuremath{\overline{\alpha}}}
\title{On the regularity of iterated hairpin completion of a single word}
\author{Lila Kari, Steffen Kopecki, and Shinnosuke Seki}
\begin{document}

\maketitle

\begin{abstract}
	Hairpin completion is an abstract operation modeling a DNA bio-operation which receives as input a DNA strand $w = x\alpha y \calpha$, and outputs $w' = x \alpha y \bar{\alpha} \overline{x}$, where $\overline{x}$ denotes the Watson-Crick complement of $x$. 
	In this paper, we focus on the problem of finding conditions under which the iterated hairpin completion of a given word is regular. 
	According to the numbers of words $\alpha$ and $\calpha$ that initiate hairpin completion and how they are scattered, we classify the set of all words $w$. 
	For some basic classes of words $w$ containing small numbers of occurrences of $\alpha$ and $\calpha$, we prove that the iterated hairpin completion of $w$ is regular. 
	For other classes with higher numbers of occurrences of $\alpha$ and $\calpha$, we prove a necessary and sufficient condition for the iterated hairpin completion of a word in these classes to be regular. 
\end{abstract}

	\section{Introduction}

\begin{figure}
\begin{center}
\includegraphics{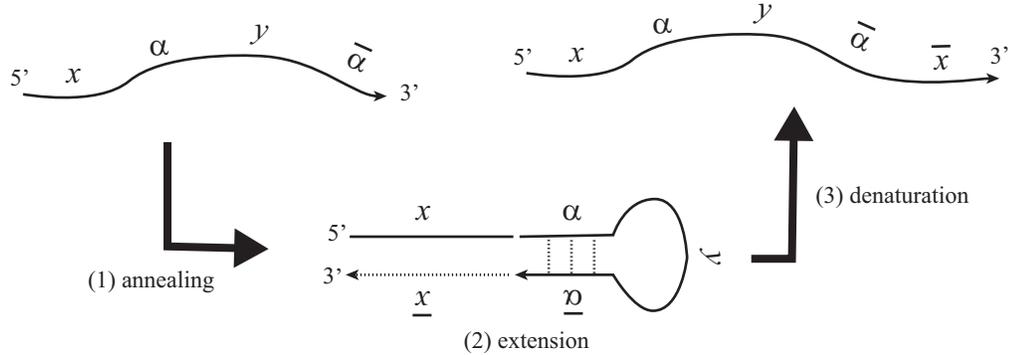}
\caption{
	Hairpin completion by polymerase chain reaction \cite{HAKSY00, SKKGYISH99}. 
	The operation input is $x\alpha y \calpha$, the output is $x\alpha y \bar{\alpha} \overline{x}$, and the primer is $\alpha$. 
}
\label{fig:selfPCR}
\end{center}
\end{figure}

A DNA strand can be abstractly viewed as a word over the alphabet $\{{\tt A}, {\tt C}, {\tt G}, {\tt T}\}$, where in {\tt A} is Watson-Crick complementary to {\tt T} and {\tt C} to {\tt G}, and two complementary DNA single strands of opposite orientation bind together to form a double DNA strand (intermolecular structure). 
Also, if subwords of a DNA strand are complementary, the strand may bind to itself forming intramolecular structures such as stem-loops, also known more commonly as {\it hairpins} (Figure~\ref{fig:selfPCR} (2)). 
Hairpins can be a building block of a larger-scale structure of RNA strands, and play a role in determining various chemical and thermodynamical properties (stability, structures, functions) of the structure, and make significant contributions to the genetic information processing as illustrated in their function as a stopper for messenger RNA (mRNA) transcription. 
A {\tt CG}-rich sequence of an mRNA folds into its Watson-Crick complement on the RNA and forms a stable hairpin. 
Transcription of the mRNA is terminated when RNA polymerase reaches the hairpin. 
At that time, {\it nusA} protein bound to the polymerase interacts with the hairpin and takes the polymerase off the mRNA. 
This hairpin-driven mechanism is called {\it intrinsic termination} \cite{WilsonHippel95}. 
As such, hairpins tend to interfere with reactions, and therefore were given the cold shoulder by DNA computing experimentalists. 
See \cite{Adleman94, AritaKobayashi02, JoKeMa02, JonoskaMahalingam04, KaKoLoSoTh06, PaRoYo01} about this problem and about some of the ``good'' designs of DNA strands that are free of hairpins. 

Hairpin is not a foe to all DNA computing experiments; many molecular computing machineries have been proposed which make good use of hairpins. 
Such hairpin-driven systems include DNA RAM \cite{KaYaOhYaHa08, TakinoueSuyama04, TakinoueSuyama06} and Whiplash PCR \cite{HAKSY00, SKKGYISH99}. 
In particular, Whiplash PCR features a self-directed polymerase chain reaction (PCR) of DNA strand, which practically motivates the investigation of a formal language operation called {\it hairpin completion}. 
Hairpin completion proceeds as follows (Figure~\ref{fig:selfPCR}): 
Starting from a DNA strand $w = x\alpha y \calpha$, a segment $\calpha$ at the 3'-end of $w$ binds to its Watson-Crick complementary strand $\alpha$ on the strand (annealing). 
A polymerase chain reaction then extends $w$ at its 3'-end in the $5' \to 3'$ direction so as to generate the strand $x\alpha y \bar{\alpha} \overline{x}$ (let us call $\alpha$ and $\calpha$ that bind with each other to initiate this PCR reaction {\it primers}). 
Despite the intrinsic $5' \to 3'$ polarity of polymerases, a mechanism exists to make polymerase reaction work in the $3' \to 5'$ direction (Okazaki fragment \cite{OkOkSaSuSu68}). 

As an abstract model of the above-mentioned self-directed PCR, Cheptea, Mart\'{i}n-Vide, and Mitrana proposed the hairpin completion in \cite{ChMaMi06}, and since then this abstract operation has been studied on its algorithmic and formal linguistic aspects \cite{DiekertKopecki11, MaViMi09, MaMiYo09, MaMiYo10} together with its variant called {\it bounded hairpin completion} \cite{ItLeMaMi11, Kopecki10}, where the length of extension in one operation is bounded by a constant. 
Ito et al.~\cite{ItLeMaMi11} and Kopecki~\cite{Kopecki10} proved that all classes in the Chomsky Hierarchy are closed under iterated bounded hairpin completion. 
In contrast, the class of regular languages was proved not to be closed under iterated (unbounded) hairpin completion \cite{ChMaMi06}, and a surprising fact is that iterated hairpin completion of {\it a} word can be non-regular \cite{Kopecki10}. 
In this paper, we focus on a problem proposed by Kopecki in \cite{Kopecki10}; is it decidable whether the iterated hairpin completion of a given word is regular? 
The iterated hairpin completion of a singleton language (a word) is known to be in {\tt NL} \cite{ChMaMi06}, but can be non-regular as shown in the following example. 

\begin{example}\label{ex:nonx_cs}
	Let $\alpha = a^k$ and $w = \alpha b \alpha c \alpha \bar{\alpha} \bar{d} \bar{\alpha}$, where $a, \bar{a}, b, \bar{b}, c, \bar{c}, d, \bar{d}$ are all distinct letters. 
	Then the intersection of the iterated hairpin completion of $w$ with $(\alpha b \alpha c (\alpha b)^+ \alpha d)^2 \alpha b \alpha c \alpha \bar{\alpha} \bar{d} \bar{\alpha} (\bar{b} \bar{\alpha})^+ \bar{c} \bar{\alpha} \bar{b} \bar{\alpha}$ is $\{(\alpha b \alpha c (\alpha b)^i \alpha d)^2 \alpha b \alpha c \alpha \bar{\alpha} \bar{d} \bar{\alpha} (\bar{b} \bar{\alpha})^i \bar{c} \bar{\alpha} \bar{b} \bar{\alpha} \mid i \ge 1 \}$. 
	This intersection is not context-free, and neither is the iterated hairpin completion. 
\end{example}

In this paper, we give a partial answer to the regularity-test decidability problem. 
We focus our attention on the number of primers a given word contains as its factors and on how these primers are scattered over the given word. 
All the words are classified in accordance with these two criteria, and for some basic classes, we give a necessary and sufficient condition for the iterated hairpin completion of a word in the class to be regular.

	\section{Preliminaries}

Let $\Sigma$ be an alphabet, $\Sigma^*$ be the set of all words over $\Sigma$, and for an integer $k \ge 0$, $\Sigma^k$ be the set of all words of length $k$ over $\Sigma$. 
The word of length 0 is called the empty word, denoted by $\lambda$, and let $\Sigma^+ = \Sigma^* \setminus \{\lambda\}$. 
A subset of $\Sigma^*$ is called a language over $\Sigma$. 
For a word $w \in \Sigma^*$, we employ the notation $w$ when we mean the word as well as the singleton language $\{w\}$ unless confusion arises. 
For a language $L \subseteq \Sigma^*$, we denote by $L^*$ the set $\{w_1 \cdots w_n \mid n \ge 0, w_1, \ldots, w_n \in L\}$. 

We equip $\Sigma$ with a function $\bar{\hspace{2mm}}: \Sigma \to \Sigma$ satisfying $\forall a \in \Sigma, \overline{\overline{a}} = a$; such a function is called an {\it involution}. 
This involution $\bar{\hspace*{2mm}}$ is naturally extended to words as: for $a_1, a_2, \ldots, a_n \in \Sigma$, $\overline{a_1 a_2 \cdots a_n} = \overline{a_n} \cdots \overline{a_2} \ \overline{a_1}$. 
For example, over the 4-letter alphabet $\Delta = \{{\tt A, C, G, T}\}$, if we define an involution $d: \Delta \to \Delta$ as $d({\tt A}) = {\tt T}$ and $d({\tt C}) = {\tt G}$, then $d$, being thus extended, maps the Watson strand of a complete DNA double strand into its Crick strand. 
The involution $d$ is called the Watson-Crick involution \cite{KariMahalingam08}. 
For a word $w \in \Sigma^*$, we call $\overline{w}$ the {\it complement of $w$}, being inspired by this application. 
A word $w \in \Sigma^*$ is called a {\it pseudo-palindrome} if $w = \overline{w}$. 
For a language $L \subseteq \Sigma^*$, $\overline{L} = \{\overline{w} \mid w \in L\}$. 

For words $u, w \in \Sigma^*$, if $w = xuy$ holds for some words $x, y \in \Sigma^*$, then $u$ is called a {\it factor} of $w$; a factor that is distinct from $w$ is said to be {\it proper}. 
If the equation holds with $x = \lambda$ ($y = \lambda$), then the factor $u$ is especially called a {\it prefix} (resp.~a {\it suffix}) of $w$. 
The prefix relation can be regarded as a partial order $\le_p$ over $\Sigma^*$; $u \le_p w$ means that $u$ is a prefix of $w$. 
Analogously, by $w \ge_s v$ we mean that $v$ is a suffix of $w$. 
For a word $w \in \Sigma^*$ and a language $L \subseteq \Sigma^*$, a factor $v$ of $w$ is {\it minimal with respect to $L$} if $v \in L$ and none of the proper factors of $v$ is in $L$. 

A nonempty word $w \in \Sigma^+$ is {\it primitive} if $w = x^i$ implies $i = 1$ for any nonempty word $x \in \Sigma^+$. 
It is well-known that for any nonempty word $w$, there exists a unique primitive word $u$ with $w \in u^+$. 
Such $u$ is called the {\it primitive root of $w$} and denoted by $\rho(w)$. 
Two words $x, y \in \Sigma^*$ {\it commute} if $xy = yx$, and this is known to be equivalent to $\rho(x) = \rho(y)$. 
See \cite{ChKa97} for details of primitivity and commutativity of words and related results. 

Now we introduce the operation investigated in this paper, that is, hairpin completion, and define it formally. 
Imagine that we have a DNA sequence $5'-{\tt CAATCGTATGAT}-3'$. 
The suffix ${\tt GAT}$ can find its $d$-image as a factor ${\tt ATC}$ on this sequence. 
Hence, this DNA sequence may bend over into a hairpin form by ${\tt GAT}$ binding with ${\tt ATC}$. 
This formation of hairpin structure leaves ${\tt CA}$ as a free sticky-end, and DNA polymerase converts it into the complete double strand by extending its 3'-end by ${\tt TG} = d({\tt CA})$. 
This exemplifies the mechanism of hairpin completion. 
We call two words whose thus binding initiate hairpin completion {\it primers}. 
In the above example, ${\tt GAT}$ and ${\tt ATC}$ are primers. 

Let $k$ be a constant that is assumed to be the length of a primer. 
Throughout this paper, we will not use the notation `$k$' for any other purpose. 
Let $\alpha \in \Sigma^k$ be a primer. 
If a given word $w \in \Sigma^*$ has a factorization $u \alpha v \overline{\alpha}$ for some $u, v \in \Sigma^*$ and $\alpha \in \Sigma^k$, then its {\it right hairpin completion with respect to $\alpha$} results in the word $u \alpha v \overline{\alpha} \bar{u}$. 
As long as $\alpha$ is clear from context, this operation is simply called {\it (single-primer) right hairpin completion}. 
By $w \to_{\RHC_\alpha} w'$, or by $w \to_{\RHC} w'$, we mean that $w'$ can be obtained from $w$ by right hairpin completion (with respect to $\alpha$). 
The {\it left hairpin completion} is defined analogously as an operation to derive $u' \alpha v' \overline{\alpha} \overline{u'}$ from $\alpha v' \overline{\alpha} \overline{u'}$, and the relation $\to_{\LHC_\alpha}$ is naturally introduced. 
By $\to_{\LHC}^*$ and $\to_{\RHC}^*$, we denote the reflexive transitive closure of $\to_{\LHC}$ and that of $\to_{\RHC}$, respectively. 
The relation $\to_{\HC}$ is defined as the union of $\to_{\LHC}$ and $\to_{\RHC}$. 

For a given language $L \subseteq \Sigma^*$, we define the set of words obtained by left hairpin completion from $L$, and the set of words obtained by iterated left hairpin completion from $L$, respectively, as follows: 
\[
	\LHC_\alpha(L) = \{w' \mid \exists w \in L, w \to_{\LHC_\alpha} w'\}, \hspace*{5mm} \LHC_\alpha^*(L) = \{w' \mid \exists w \in L, w \to_{\LHC_\alpha}^* w'\}. 
\]
Analogously, $\RHC_\alpha(L)$ and $\RHC_\alpha^*(L)$ are defined based on $\to_{\RHC}$ and $\to_{\RHC}^*$, and $\HC_\alpha(L)$ and $\HC_\alpha^*(L)$ are defined based on $\to_{\HC}$ and $\to_{\HC}^*$

\begin{proposition}\label{prop:presentation_RH}
	For a word $w \in \Sigma^*$, $\RHC_k^*(w) = \overline{\LHC_k^*(\overline{w})}$. 
\end{proposition}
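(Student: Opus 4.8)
The plan is to reduce everything to a single-step identity and then lift it by induction. Concretely, I would first prove that for every primer $\alpha\in\Sigma^k$ and every word $w$ one has $\RHC_\alpha(w)=\overline{\LHC_\alpha(\overline w)}$, and then deduce the iterated version by induction on the number of hairpin-completion steps.

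For the single step, take $w'\in\RHC_\alpha(w)$, so that $w=u\alpha v\overline\alpha$ and $w'=u\alpha v\overline\alpha\,\overline u$ for some $u,v\in\Sigma^*$. Applying the antimorphic involution and using $\overline{\overline a}=a$ gives $\overline w=\alpha\,\overline v\,\overline\alpha\,\overline u$, which is precisely of the form $\alpha v'\overline\alpha\,\overline{u'}$ that triggers left hairpin completion, with $v'=\overline v$ and $u'=u$. The resulting word $u'\alpha v'\overline\alpha\,\overline{u'}=u\alpha\,\overline v\,\overline\alpha\,\overline u$ equals $\overline{w'}$ by a direct computation, so $\overline{w'}\in\LHC_\alpha(\overline w)$, i.e.\ $w'\in\overline{\LHC_\alpha(\overline w)}$. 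The reverse inclusion is obtained symmetrically: a word obtained from $\overline w$ by left hairpin completion with respect to $\alpha$, when complemented, is seen to be obtained from $w=\overline{\overline w}$ by right hairpin completion with respect to the same $\alpha$. The point that makes ``the same $\alpha$'' legitimate is that $\overline\alpha$ again has length $k$ and $\overline{\overline\alpha}=\alpha$, so complementation does not change the primer, it only swaps the roles of $\alpha$ and $\overline\alpha$ at the two ends.

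Lifting to the iterated operation is then routine. From the single-step identity, $w\to_{\RHC}w'$ holds (with respect to some primer in $\Sigma^k$) if and only if $\overline w\to_{\LHC}\overline{w'}$ holds with respect to the same primer. Since complementation is an involution on $\Sigma^*$, an easy induction on the length of a derivation gives $w\to_{\RHC}^*w'\iff\overline w\to_{\LHC}^*\overline{w'}$. Unfolding the definitions on the singleton $w$ then yields $\RHC_k^*(w)=\{w'\mid w\to_{\RHC}^*w'\}=\{\overline{w''}\mid\overline w\to_{\LHC}^*w''\}=\overline{\LHC_k^*(\overline w)}$, which is the claim.

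I do not expect a genuine obstacle: the whole proof is bookkeeping with the antimorphic involution. The only place to be attentive is the shape-matching step --- one must verify that the factorization $w=u\alpha v\overline\alpha$ witnessing a right hairpin completion complements to a factorization of $\overline w$ of exactly the form $\alpha v'\overline\alpha\,\overline{u'}$ required by the definition of left hairpin completion (and conversely), so that the positions of $\alpha$, $\overline\alpha$ and the copied block are tracked precisely. Once that is done, both inclusions fall out immediately; indeed, the proposition is just the formal statement of the evident left/right mirror symmetry of hairpin completion, which is what will later let us transfer results about $\LHC$ to $\RHC$ for free.
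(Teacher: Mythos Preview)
Your proof is correct. The paper itself states this proposition without proof, treating it as an immediate consequence of the definitions and the antimorphic nature of the involution, so there is no argument in the paper to compare against; your explicit verification of the single-step identity $\RHC_\alpha(w)=\overline{\LHC_\alpha(\overline w)}$ followed by induction on derivation length is exactly the natural way to spell it out.
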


	\section{Word structures relevant to the power of iterated hairpin completion}
	\label{sec:apref_casuf}

In this section, we describe several structural properties of a word $w$ that will be relevant for the characterization of its iterated hairpin completion $\HC_\alpha^*(w)$, where $\alpha \in \Sigma^k$ is a fixed parameter. 

A word $u \in \Sigma^*$ is called an {\it $\alpha$-prefix} of a word $w \in \Sigma^*$ if $w = u\alpha x$ for some word $x \in \Sigma^*$. 
In a similar manner, a word $v \in \Sigma^*$ is an {\it $\overline{\alpha}$-suffix} of $w$ if $w = y\overline{\alpha} v$ for some $y \in \Sigma^*$. 
If $w = y \overline{\alpha} v$ begins with $\alpha$, then this prefix can bind with the occurrence of $\overline{\alpha}$ (unless they overlap with each other), and left hairpin completion results in $\overline{v}w$. 
By $\Pref_\alpha(w)$ and $\Suff_{\overline{\alpha}}(w)$, we denote the set of all $\alpha$-prefixes and that of all $\overline{\alpha}$-suffixes of $w$, respectively. 
One can easily observe that $\Suff_{\calpha}(w) = \overline{\Pref_\alpha(\overline{w})}$. 
Throughout this paper, we let $\Pref_\alpha(w) = \{u_1, \ldots, u_m\}$ and $\Suff_{\calpha}(w) = \{\overline{v_1}, \ldots, \overline{v_n}\}$ for some $m, n \ge 0$. 
It will be convenient to assume that these $\alpha$-prefixes are sorted in the ascending order of their length. 
Likewise, we assume that $|\overline{v_1}| < |\overline{v_2}| < \cdots < |\overline{v_n}|$. 

Our investigation on the properties of $\alpha$-prefix and $\calpha$-suffix of word begins with a basic observation. 

\begin{proposition}\label{prop:aprefix_basic}
	For a word $w \in \alpha \Sigma^*$, the following statements hold: 
	\begin{enumerate}
	\item	for any $u \in \Pref_\alpha(w)$, $\alpha \le_p u\alpha$; 
	\item	for any $x_1, \ldots, x_n \in \Pref_\alpha(w)$, $\alpha \le_p x_1 \cdots x_n \alpha$; 
	\end{enumerate}
\end{proposition}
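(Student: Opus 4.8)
The plan is to prove (1) first and then bootstrap (2) from it by a short induction on $n$. For (1), suppose $u \in \Pref_\alpha(w)$, so that $w = u\alpha x$ for some $x \in \Sigma^*$. Since by hypothesis $w \in \alpha\Sigma^*$, we also have $w = \alpha y$ for some $y \in \Sigma^*$. If $|u\alpha| \ge |\alpha| = k$ — which is automatic because $|u\alpha| = |u| + k \ge k$ — then both $\alpha$ and $u\alpha$ are prefixes of $w$ with $|\alpha| \le |u\alpha|$, and two prefixes of a common word are always comparable under $\le_p$; hence $\alpha \le_p u\alpha$. This is the whole content of (1), and the only subtlety is noting that $u$ may be empty, in which case $\alpha \le_p \alpha$ holds trivially.

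For (2), I would argue by induction on $n$. The base case $n = 0$ gives $x_1\cdots x_n\alpha = \alpha$ and $\alpha \le_p \alpha$ is immediate; the case $n = 1$ is exactly part (1). For the inductive step, assume $\alpha \le_p x_1 \cdots x_{n-1}\alpha$, say $x_1\cdots x_{n-1}\alpha = \alpha z$ for some $z \in \Sigma^*$. Then $x_1 \cdots x_{n-1} x_n \alpha = \alpha z' $ where we need to relocate the trailing $\alpha$: write $x_1 \cdots x_{n-1} = \alpha z_0$ (valid once $n - 1 \ge 1$; if $n-1 = 0$ fall back to part (1) applied to $x_n$) so that $x_1\cdots x_{n-1}x_n\alpha = \alpha z_0 x_n \alpha$, which visibly has $\alpha$ as a prefix. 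The one case needing care is when some of the $x_i$ are empty or when $n - 1 = 0$, but those degenerate to part (1) or to the base case.

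I expect the main (modest) obstacle to be purely bookkeeping: making sure the argument is uniform whether or not the $x_i$'s are empty and whether or not $\alpha$ actually occurs ``past'' the $x_i$ prefixes, i.e. handling the boundary between $x_1\cdots x_{n-1}$ and the final $\alpha$ cleanly. Once one observes that $x_1\cdots x_{n-1}$ itself has $\alpha$ as a prefix (which follows from the induction hypothesis, since $\alpha \le_p x_1\cdots x_{n-1}\alpha$ and $|x_1 \cdots x_{n-1}| \ge |\alpha|$ as soon as at least one nonempty $x_i$ appears — and if all are empty the claim is the base case), the step $\alpha \le_p x_1\cdots x_{n-1}x_n\alpha$ is immediate because $x_1\cdots x_{n-1}$ starts with $\alpha$. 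No properties of $\alpha$ beyond $|\alpha| = k$ and $\alpha \le_p w$ are used.
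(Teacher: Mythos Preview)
Your argument for (1) is fine and matches the paper's one-line justification.

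For (2), the inductive step has a genuine gap. You claim that $x_1\cdots x_{n-1} = \alpha z_0$ for some $z_0$, justified by the assertion that $|x_1\cdots x_{n-1}| \ge |\alpha|$ ``as soon as at least one nonempty $x_i$ appears.'' This is false: nonempty $\alpha$-prefixes can be strictly shorter than $\alpha$. For instance, take $\alpha = aa$ and $w = aaa$; then $u = a$ is a nonempty $\alpha$-prefix of $w$ with $|u| = 1 < 2 = |\alpha|$. With $n = 2$ and $x_1 = a$, we get $x_1\cdots x_{n-1} = a$, which is not of the form $\alpha z_0$. So the step ``$x_1\cdots x_{n-1}$ itself starts with $\alpha$'' fails, and your argument does not cover this case (nor the analogous cases where several short $x_i$ still add up to less than $|\alpha|$).

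The paper's proof avoids this by reducing from the right rather than the left: by part (1), $\alpha \le_p x_n\alpha$, so $x_n\alpha = \alpha z$ for some $z$, whence $x_1\cdots x_n\alpha = x_1\cdots x_{n-1}\alpha z$; the induction hypothesis $\alpha \le_p x_1\cdots x_{n-1}\alpha$ then immediately gives $\alpha \le_p x_1\cdots x_n\alpha$. This substitution of $x_n\alpha$ by $\alpha z$ keeps the trailing $\alpha$ in place and never requires comparing $|x_1\cdots x_{n-1}|$ with $|\alpha|$, which is exactly the comparison your approach stumbles on.
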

\begin{proof}
	The first statement derives directly from the definition of $\alpha$-prefix. 
	For the second one, induction on $n$ works. 
	Due to the first statement, $\alpha \le_p x_n \alpha$ so that proving $\alpha \le_p x_1 \cdots x_{n-1} x_n \alpha$ is reduced to proving $\alpha \le_p x_1 \cdots x_{n-1} \alpha$. 
\end{proof}

From this proposition, we can easily deduce that for a word $w \in \Sigma^* \overline{\alpha}$ and $\overline{y_1}, \ldots, \overline{y_t} \in \Suff_{\overline{\alpha}}(w)$, $\bar{\alpha} \overline{y_1} \cdots \overline{y_t} \ge_s \overline{\alpha}$, which means $\alpha \le_p y_t \cdots y_1 \alpha$. 
This deepens the above observation further as follows. 

\begin{corollary}\label{cor:apref_casuff_transitive}
	For a word $w \in \alpha \Sigma^* \cap \Sigma^* \overline{\alpha}$, any word in $(\Pref_\alpha(w) \cup \overline{\Suff_{\overline{\alpha}}(w)})^* \alpha$ has $\alpha$ as its prefix. 
\end{corollary}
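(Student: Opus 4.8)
The plan is to reduce the claim to Proposition~\ref{prop:aprefix_basic}(2) together with the observation recorded between that proposition and the corollary, namely that for $w \in \Sigma^*\overline{\alpha}$ and $\overline{y_1},\ldots,\overline{y_t} \in \Suff_{\overline{\alpha}}(w)$ one has $\alpha \le_p y_t \cdots y_1 \alpha$. The only new feature here is that we are allowed to interleave $\alpha$-prefixes of $w$ and complements of $\overline{\alpha}$-suffixes of $w$ in an arbitrary order inside a single product, whereas the two earlier facts each handle a homogeneous product. So the real content is showing that this mixing changes nothing.

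First I would fix a word $z \in (\Pref_\alpha(w) \cup \overline{\Suff_{\overline{\alpha}}(w)})^* \alpha$, say $z = z_1 z_2 \cdots z_r \alpha$ with each $z_i$ either an $\alpha$-prefix of $w$ or the complement of an $\overline{\alpha}$-suffix of $w$. I would argue by induction on $r$, the number of factors, with the base case $r = 0$ being the trivial $\alpha \le_p \alpha$. For the inductive step, the key point is to look only at the \emph{last} factor $z_r$. If $z_r \in \Pref_\alpha(w)$, then by Proposition~\ref{prop:aprefix_basic}(1) we have $\alpha \le_p z_r \alpha$, so it suffices to show $\alpha \le_p z_1 \cdots z_{r-1}\alpha$, which is the inductive hypothesis applied to the shorter product. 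If instead $z_r = \overline{y}$ for some $\overline{y} \in \Suff_{\overline{\alpha}}(w)$, then $y$ is an $\alpha$-prefix-like object on the reversed side; concretely $w = x\overline{\alpha}\,\overline{y}$ gives $\overline{w} = y\alpha\overline{x}$, so $y \in \Pref_\alpha(\overline{w})$, and by Proposition~\ref{prop:aprefix_basic}(1) (applied to $\overline{w}$) we get $\alpha \le_p y\alpha = \overline{\alpha}\,\overline{y}$ reversed, i.e.\ $\overline{\alpha}\,\overline{y} \ge_s \overline{\alpha}$, equivalently $\alpha \le_p z_r \alpha$ again. So in both cases the last factor can be "peeled off" and we reduce to the length-$(r-1)$ product.

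Thus the induction goes through uniformly once we observe that membership in $\Pref_\alpha(w) \cup \overline{\Suff_{\overline{\alpha}}(w)}$ is exactly the condition "$\alpha \le_p z_i \alpha$", which is all that is used in the proof of Proposition~\ref{prop:aprefix_basic}(2). In fact the cleanest write-up may be to state a one-line lemma — "if $S \subseteq \Sigma^*$ is a set of words each satisfying $\alpha \le_p s\alpha$, then every word in $S^*\alpha$ has $\alpha$ as a prefix" — prove it by the peeling-off induction above, and then instantiate $S = \Pref_\alpha(w) \cup \overline{\Suff_{\overline{\alpha}}(w)}$, the two inclusions being handled by Proposition~\ref{prop:aprefix_basic}(1) for $w$ and for $\overline{w}$ respectively (recalling $\Suff_{\overline{\alpha}}(w) = \overline{\Pref_\alpha(\overline{w})}$).

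The main obstacle, such as it is, is purely bookkeeping: one must be careful that when the last factor is a complemented suffix, the relevant "$\alpha \le_p z_r\alpha$" statement is obtained by passing to $\overline{w}$ and reversing, rather than directly, and that the hypothesis $w \in \alpha\Sigma^* \cap \Sigma^*\overline{\alpha}$ is what guarantees both $\Pref_\alpha(w)$ and $\Suff_{\overline{\alpha}}(w)$ behave as expected (the $\alpha\Sigma^*$ part so that Proposition~\ref{prop:aprefix_basic} applies to $w$, the $\Sigma^*\overline{\alpha}$ part — equivalently $\overline{w} \in \alpha\Sigma^*$ — so that it applies to $\overline{w}$). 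There is no genuine difficulty with overlaps or lengths here because we only ever use the prefix relation $\alpha \le_p s\alpha$, never the stronger non-overlapping conditions that matter elsewhere in the paper.
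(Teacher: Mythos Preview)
Your proposal is correct and follows essentially the same route as the paper: the paper leaves the corollary without an explicit proof, relying on the reader to see that the induction from Proposition~\ref{prop:aprefix_basic}(2) goes through verbatim once one knows that every element $s$ of $\Pref_\alpha(w) \cup \overline{\Suff_{\overline{\alpha}}(w)}$ satisfies $\alpha \le_p s\alpha$, and you spell out exactly this peeling-off induction. One notational slip to fix: when you write ``$z_r = \overline{y}$ for some $\overline{y} \in \Suff_{\overline{\alpha}}(w)$'' you have placed $z_r$ in $\Suff_{\overline{\alpha}}(w)$ rather than its complement; you want $z_r = y$ with $\overline{y} \in \Suff_{\overline{\alpha}}(w)$ (equivalently $z_r \in \Pref_\alpha(\overline{w})$), after which your conclusion $\alpha \le_p z_r\alpha$ follows directly.
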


Due to the second statement of Proposition~\ref{prop:aprefix_basic}, $\alpha \le_p x_1 \alpha \le_p x_1 x_2 \alpha \le_p \cdots \le_p x_1 x_2 \cdots x_s \alpha$ holds for $\alpha$-prefixes $x_1, \ldots, x_s \in \Pref_\alpha(w)$. 
Hence, from a word $x_1 x_2 \cdots x_s \alpha w' \calpha$, one-step right hairpin completion can produce at least the words $x_1 x_2 \cdots x_s \alpha w' \calpha \{\lambda, \overline{x_1}, \overline{x_1x_2}, \ldots, \overline{x_1 x_2 \cdots x_s}\}$.\footnote{$\overline{x_1x_2 \cdots x_s} = \overline{x_s} \cdots \overline{x_2} \ \overline{x_1}$.} 
Now, if we know that one-step hairpin completion extends the word to the right by $\overline{u}$, what can we say about the word $u$? 
Firstly, as long as $|u| \le |x_1 \cdots x_s|$, we can say that $u\alpha \le_p x_1 \cdots x_s \alpha$ by definition of hairpin completion. 
Moreover, Corollary~\ref{cor:apref_casuff_transitive} enables us to find $0 \le i < s$ such that $|x_1 \cdots x_i| < |u| \le |x_1 \cdots x_{i+1}|$. 
Then, one can let $u = x_1 \cdots x_i z$ for some prefix $z$ of $x_{i+1}$. 
Since $z \alpha \le_p x_{i+1} \alpha \le_p w$, $z$ is an $\alpha$-prefix of $w$ that is properly shorter than $x_{i+1}$. 
By defining $\ind(x_{i+1})$ to be the index satisfying $u_{\ind(x_{i+1})} = x_{i+1}$, we have $z \in \{u_1, \ldots, u_{\ind(x_{i+1})-1}\}$; recall that elements of $\Pref_\alpha(w)$ is sorted with respect to their length. 
The above argument is summarized by the next lemma. 

\begin{lemma}\label{lem:aprefix_prefix}
	Let $x_1, \ldots, x_s \in \Pref_\alpha(w)$. 
	If a word $u$ satisfies $u\alpha \le_p x_1 \cdots x_s \alpha$, then there exists an integer $0 \le i < s$ such that $u = x_1 \cdots x_i z$ for some $z \in \{u_1, \ldots, u_{\ind(x_{i+1})-1}\}$. 
\end{lemma}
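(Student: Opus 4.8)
The plan is to make precise the informal argument that precedes the lemma statement, organizing it as a clean induction on $s$ together with one structural observation about $\alpha$-prefixes. Let me first fix the base case: if $s = 0$ then $x_1 \cdots x_s \alpha = \alpha$, so $u\alpha \le_p \alpha$ forces $u = \lambda$, and the claim is vacuous since there is no integer $0 \le i < 0$ — but in fact this case will not arise because any $u$ under consideration satisfies $|u| \geq 0$ and the interesting content is for $s \geq 1$. For $s \geq 1$, I would split on the length of $u$ relative to $|x_1 \cdots x_{s-1}|$.

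The first key step is to handle the case $|u| \le |x_1 \cdots x_{s-1}|$. Here I observe that $x_1 \cdots x_{s-1} \in \Pref_\alpha(w)^*$, and by Proposition~\ref{prop:aprefix_basic}(2) we have $\alpha \le_p x_1 \cdots x_{s-1}\alpha \le_p x_1 \cdots x_s \alpha$ (the last step again by part~(1) applied to $x_s$, as in the proof of that proposition). Since $|u\alpha| \le |x_1 \cdots x_{s-1}\alpha|$ and both $u\alpha$ and $x_1 \cdots x_{s-1}\alpha$ are prefixes of the same word $x_1 \cdots x_s \alpha$, we get $u\alpha \le_p x_1 \cdots x_{s-1}\alpha$, and the induction hypothesis applied to $x_1, \ldots, x_{s-1}$ yields the desired $i < s-1 < s$ and $z$. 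The second key step is the remaining case $|x_1 \cdots x_{s-1}| < |u|$ (when $s=1$ this means $0 < |u|$, i.e. $u \neq \lambda$, and $|x_1 \cdots x_{s-1}| = 0$). Since $u\alpha \le_p x_1 \cdots x_s\alpha$ and $|u| > |x_1 \cdots x_{s-1}|$, the prefix $x_1 \cdots x_{s-1}\alpha$ of $x_1 \cdots x_s\alpha$ is shorter than $u\alpha$; comparing these two prefixes of the common word $x_1 \cdots x_s \alpha$, we obtain $x_1 \cdots x_{s-1}\alpha \le_p u\alpha$, hence (cancelling the common prefix $x_1 \cdots x_{s-1}$) we may write $u = x_1 \cdots x_{s-1} z$ with $|z| < |x_s|$, i.e. with $z$ a proper prefix of $x_s$. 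It remains to check $z \alpha \le_p x_s \alpha$, which follows because $z\alpha$ and $x_s\alpha$ are prefixes of $x_1 \cdots x_s\alpha$ with $|z\alpha| \le |x_s\alpha|$ once we cancel $x_1\cdots x_{s-1}$ from $u\alpha \le_p x_1\cdots x_s\alpha$ and from $x_1\cdots x_{s-1}\alpha \le_p u \alpha$; then $z\alpha \le_p x_s\alpha \le_p w$ shows $z \in \Pref_\alpha(w)$, and since $|z| < |x_s|$ and $\Pref_\alpha(w)$ is length-sorted, $z \in \{u_1, \ldots, u_{\ind(x_s)-1}\}$. Taking $i = s-1$ finishes this case.

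The main obstacle I anticipate is purely bookkeeping rather than conceptual: one must be careful that cancellation of common prefixes is legitimate (it is, since the prefix relation on a fixed word is a total order and Levi's lemma / the standard fact that two prefixes of the same word are comparable applies), and that the case $z = \lambda$ is admissible — indeed $z = \lambda \in \Pref_\alpha(w)$ since $\lambda\alpha = \alpha \le_p w$ by hypothesis $w \in \alpha\Sigma^*$, so $u_1 = \lambda$ always and the index set $\{u_1, \ldots, u_{\ind(x_s)-1}\}$ is nonempty whenever $x_s \neq u_1$. A second minor point to verify is that $x_s \in \Pref_\alpha(w)$ guarantees $x_s\alpha \le_p w$, which is exactly the definition of $\alpha$-prefix, so no extra work is needed there. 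Apart from these routine checks, the induction is immediate, and the proof should be short.
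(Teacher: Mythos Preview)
Your approach is the paper's argument recast as an induction on $s$: the paper directly locates the index $i$ with $|x_1\cdots x_i| < |u| \le |x_1\cdots x_{i+1}|$ (using that $\alpha \le_p x_1\alpha \le_p \cdots \le_p x_1\cdots x_s\alpha$ by Proposition~\ref{prop:aprefix_basic}) and then checks that the leftover $z$ satisfies $z\alpha \le_p x_{i+1}\alpha \le_p w$; your two cases simply perform this search recursively, with the case $|u| > |x_1\cdots x_{s-1}|$ being exactly the paper's direct step at $i = s-1$.

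There is, however, one unjustified step---and the paper's own argument shares it. You write ``we may write $u = x_1\cdots x_{s-1}z$ with $|z| < |x_s|$'', but from $u\alpha \le_p x_1\cdots x_s\alpha$ you only obtain $|u| \le |x_1\cdots x_s|$, hence only $|z| \le |x_s|$. When $u = x_1\cdots x_s$ this gives $z = x_s$, which is \emph{not} in $\{u_1,\ldots,u_{\ind(x_s)-1}\}$, so the conclusion of the lemma actually fails at that boundary (take for instance $s=1$, $x_1 = u_m$, $u = u_m$). The paper glosses over this in the same way, asserting that $z$ is ``properly shorter than $x_{i+1}$'' without justification. In every later use of the lemma only the weaker consequence $u \in \{u_1,\ldots,u_j\}^*$ (for the appropriate $j$) is needed, and that holds trivially in the boundary case, so no downstream result is affected; but strictly speaking you should either add the hypothesis that $u\alpha$ is a \emph{proper} prefix of $x_1\cdots x_s\alpha$, or treat $u = x_1\cdots x_s$ separately.
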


A more natural setting is to assume that each of $x_1, \ldots, x_s$ is either an element of $\Pref_\alpha(w)$ {\it or} an element of $\overline{\Suff_{\calpha}(w)}$ because, by left hairpin completion, the complement of a $\calpha$-suffix of $w$ can be produced to the left of $w$. 
We need to generalize the function $\ind$ by extending its domain as follows: for $x_i \in \overline{\Suff_{\calpha}(w)}$, $\ind(x_i) = j$ if $x_i = v_j$. 
Note that this generalized $\ind$ is not a function any more in cases when $\Pref_\alpha(w) \cap \overline{\Suff_{\calpha}(w)} \neq \emptyset$, but this will not cause any problem in this paper. 

\begin{lemma}\label{lem:apref_casuf_apref}
	Let $x_1, \ldots, x_s \in \Pref_\alpha(w) \cup \overline{\Suff_{\calpha}(w)}$. 
	If a word $u$ satisfies $u\alpha \le_p x_1 \cdots x_t \alpha$, then there exists an integer $0 \le i < s$ such that $u = x_1 \cdots x_i z$, where 
	\[
	\begin{cases}
	z \in \{u_1, \ldots, u_{\ind(x_{i+1})-1}\} & \text{if $x_{i+1} \in \Pref_\alpha(w)$}; \\
	z \in \{v_1, \ldots, v_{\ind(x_{i+1})-1}\} & \text{if $x_{i+1} \in \overline{\Suff_{\calpha}(w)}$}. 
	\end{cases}
	\] 
\end{lemma}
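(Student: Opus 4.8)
The plan is to run the argument sketched in the paragraph preceding Lemma~\ref{lem:aprefix_prefix} essentially unchanged, now invoking Corollary~\ref{cor:apref_casuff_transitive} (which presupposes the standing situation $w\in\alpha\Sigma^*\cap\Sigma^*\calpha$) wherever Proposition~\ref{prop:aprefix_basic} was used, and then splitting into two symmetric cases according to whether the ``active'' factor $x_{i+1}$ belongs to $\Pref_\alpha(w)$ or to $\overline{\Suff_\calpha(w)}$. The only genuinely new ingredient relative to Lemma~\ref{lem:aprefix_prefix} is a passage to complements in the second case, which is most cleanly packaged through the identity $\Suff_\calpha(w)=\overline{\Pref_\alpha(\overline w)}$: under it, an element of $\overline{\Suff_\calpha(w)}$ is literally an $\alpha$-prefix of $\overline w$, so the second case reduces to the first with $w$ replaced by $\overline w$.

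Concretely, first I would record the ascending chain
\[
\alpha\le_p x_1\alpha\le_p x_1x_2\alpha\le_p\cdots\le_p x_1\cdots x_s\alpha,
\]
obtained by applying Corollary~\ref{cor:apref_casuff_transitive} to each single factor: $x_{j+1}\alpha$ lies in $(\Pref_\alpha(w)\cup\overline{\Suff_\calpha(w)})^*\alpha$, hence has $\alpha$ as a prefix, which promotes $x_1\cdots x_j\alpha\le_p x_1\cdots x_{j+1}\alpha$. Given that $u\alpha$ is a prefix of $x_1\cdots x_s\alpha$, I would take $i$ to be the \emph{largest} index with $|x_1\cdots x_i|<|u|$; this automatically forces $0\le i<s$, forces $|u|\le|x_1\cdots x_{i+1}|$, and forces $x_{i+1}\ne\lambda$, the last point being what prevents any degenerate list entry $x_j=\lambda$ from causing trouble. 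Comparing lengths along the chain then gives $x_1\cdots x_i\alpha\le_p u\alpha\le_p x_1\cdots x_{i+1}\alpha$, and cancelling the common prefix $x_1\cdots x_i$ leaves $u=x_1\cdots x_i z$ with $z$ a nonempty prefix of $x_{i+1}$ such that $z\alpha\le_p x_{i+1}\alpha$.

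It then remains to pin down $z$, and this is where the case split happens. If $x_{i+1}\in\Pref_\alpha(w)$, then $x_{i+1}\alpha\le_p w$, so $z\alpha\le_p w$, i.e.\ $z\in\Pref_\alpha(w)$; since $|z|\le|x_{i+1}|=|u_{\ind(x_{i+1})}|$ and the $u_l$ are length-sorted, $z\in\{u_1,\dots,u_{\ind(x_{i+1})-1}\}$. If $x_{i+1}\in\overline{\Suff_\calpha(w)}$, I would complement $z\alpha\le_p x_{i+1}\alpha$, obtaining that $\calpha\,\overline z$ is a suffix of $\calpha\,\overline{x_{i+1}}$, which is itself a suffix of $w$; hence $\overline z\in\Suff_\calpha(w)$, and since the $\overline{v_l}$ are length-sorted, $z\in\{v_1,\dots,v_{\ind(x_{i+1})-1}\}$. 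I expect the second case to be where the care is needed: the prefix/suffix and $\le_p/\ge_s$ dualities must be pushed in the correct direction, and one must recall that the extended $\ind$ is multivalued once $\Pref_\alpha(w)\cap\overline{\Suff_\calpha(w)}\ne\emptyset$ — harmless here, because the two cases are handled independently, so an $x_{i+1}$ lying in the intersection may be given whichever of its indices is convenient. The residual boundary situations ($u=\lambda$, and $|z|=|x_{i+1}|$ so that $z=x_{i+1}$ and the decomposition should be rewritten with $i+1$ in place of $i$) are dispatched by a one-line reduction carrying no new content.
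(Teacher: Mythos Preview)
Your proposal is correct and follows essentially the same approach as the paper's own proof: locate $i$ so that $u=x_1\cdots x_i z$ with $z\alpha\le_p x_{i+1}\alpha$, then case-split on whether $x_{i+1}\in\Pref_\alpha(w)$ or $x_{i+1}\in\overline{\Suff_\calpha(w)}$, handling the second case by complementing the prefix relation into the suffix relation $\calpha\,\overline{x_{i+1}}\ge_s\calpha\,\overline z$. The paper's proof is simply a terser rendition of exactly this argument (it begins ``As done previously'' and then writes only the complementation step), so your more explicit version---recording the ascending chain via Corollary~\ref{cor:apref_casuff_transitive}, choosing $i$ maximal to force $x_{i+1}\ne\lambda$, and flagging the boundary cases $u=\lambda$ and $z=x_{i+1}$---adds detail but no new ideas.
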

\begin{proof}
	As done previously, we can find $0 \le i < s$ and a nonempty word $z \in \Sigma^+$ satisfying $u = x_1 \cdots x_i z$ and $z \alpha \le_p x_{i+1} \alpha$. 
	Since this prefix relation can be rewritten as $\bar{\alpha} \overline{x_{i+1}} \ge_s \bar{\alpha} \overline{z}$, if $\overline{x_{i+1}}$ is an $\overline{\alpha}$-suffix of $w$, so is $\overline{z}$. 
	The case when $x_{i+1} \in \Pref_\alpha(w)$ is clear from the previous argument. 
\end{proof}

Having considered prefix relations among $\alpha$-prefixes and $\calpha$-suffixes of a word, now we proceed our study to more general factor relationships among them. 

\begin{lemma}\label{lem:suf_relation_aprefs}
	If $u_j \alpha \ge_s u_i \alpha$ for some integers $2 \le i \le j \le m$, then $u_j \in \{u_1, u_2, \ldots, u_{j-1}\}u_i$. 
\end{lemma}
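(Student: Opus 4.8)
The plan is to use the suffix relation $u_j\alpha \ge_s u_i\alpha$ to split a prefix off $u_j$, and then to check that this leftover prefix is again an $\alpha$-prefix of $w$ that is strictly shorter than $u_j$, so that it must be one of $u_1,\dots,u_{j-1}$.

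First I would unwind the hypothesis: $u_j\alpha \ge_s u_i\alpha$ means precisely that $u_j\alpha = t\,u_i\alpha$, where $t$ is the prefix of $u_j\alpha$ of length $|u_j|-|u_i|$. Deleting the last $k$ letters from both sides (on each side these $k$ letters form exactly $\alpha$, since $|\alpha|=k$) yields the factorization $u_j = t\,u_i$; in particular $u_i$ is a suffix of $u_j$. Because $i\ge 2$ and the $\alpha$-prefixes are listed in strictly increasing length, $|u_i| > |u_1| \ge 0$, so $|u_i|\ge 1$ and hence $|t| = |u_j|-|u_i| < |u_j|$.

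It then remains to show $t \in \Pref_\alpha(w)$, i.e.\ $t\alpha \le_p w$. Since $u_j$ is an $\alpha$-prefix, $u_j\alpha = t\,u_i\alpha$ is a prefix of $w$, so it suffices to prove $t\alpha \le_p t\,u_i\alpha$, which after cancelling the common prefix $t$ reduces to $\alpha \le_p u_i\alpha$ — and this is exactly the first statement of Proposition~\ref{prop:aprefix_basic} applied to the $\alpha$-prefix $u_i$. Hence $t\alpha \le_p w$, so $t = u_\ell$ for some index $\ell$, and $|t| < |u_j|$ forces $\ell < j$, i.e.\ $t \in \{u_1,\dots,u_{j-1}\}$. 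Combined with $u_j = t\,u_i$, this gives $u_j \in \{u_1,\dots,u_{j-1}\}u_i$, as required (the degenerate case $i=j$ just gives $t=\lambda=u_1$, which still lies in the listed set).

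The argument itself is short; the one step that really needs care is upgrading ``$t$ is a prefix of $w$'' to ``$t$ is an $\alpha$-prefix of $w$''. The suffix relation alone only places $t$ as a prefix of $w$, and to conclude $t\alpha\le_p w$ one genuinely uses $\alpha \le_p u_i\alpha$, i.e.\ Proposition~\ref{prop:aprefix_basic}, which presupposes $w\in\alpha\Sigma^*$; without this hypothesis the leftover prefix $t$ need not begin with $\alpha$, so the reliance on that proposition is the crux of the proof rather than a formality.
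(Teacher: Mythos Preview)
Your proof is correct and follows essentially the same route as the paper's: write $u_j\alpha = x\,u_i\alpha$, invoke Proposition~\ref{prop:aprefix_basic} to get $x\alpha \le_p u_j\alpha \le_p w$ so that $x \in \Pref_\alpha(w)$, and use $|x| < |u_j|$ to place $x$ among $u_1,\dots,u_{j-1}$. Your write-up is more explicit about the cancellation steps and the edge case $i=j$, but the underlying argument is identical.
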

\begin{proof}
	We can let $x u_i \alpha = u_j \alpha$ for some $x \in \Sigma^*$. 
	Combining this with Proposition~\ref{prop:aprefix_basic}, we have $x \alpha \le_p u_j \alpha$ so that $x \in \Pref_\alpha(w)$. 
	Since $|x| < |u_j|$, $x$ is in $\{u_1, u_2, \ldots, u_{j-1}\}$. 
\end{proof}

\begin{lemma}\label{lem:factor_relation_shortests}
	If $v_2 \alpha$ is a factor of $u_2 \alpha$, then $u_2 = v_2$. 
\end{lemma}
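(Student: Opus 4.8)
The plan is to lean on the extremal nature of $u_2$ and $v_2$. Under the standing assumption $w \in \alpha\Sigma^* \cap \Sigma^*\calpha$ we have $u_1 = v_1 = \lambda$, so $u_2$ is the \emph{shortest nonempty} $\alpha$-prefix of $w$; moreover, from $\Suff_{\calpha}(w) = \overline{\Pref_\alpha(\overline{w})}$ and $\overline{w} \in \alpha\Sigma^*$ together with the first statement of Proposition~\ref{prop:aprefix_basic}, we get $\alpha \le_p v_2\alpha$, and $\overline{v_2} \neq \lambda$ forces $|v_2| \ge 1$. The engine of the proof is the following observation about occurrences of $\alpha$ inside the word $u_2\alpha$: every factor occurrence of $\alpha$ in $u_2\alpha$ is either the prefix occurrence (beginning at position $1$) or the suffix occurrence (beginning at position $|u_2|+1$). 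Indeed, if $\alpha$ occurs in $u_2\alpha$ beginning at position $i+1$ with $0 \le i \le |u_2|$, the length-$i$ prefix $z$ of $u_2\alpha$ satisfies $z\alpha \le_p u_2\alpha \le_p w$, hence $z \in \Pref_\alpha(w)$; since $|z| \le |u_2|$, minimality of $u_2$ leaves only $z \in \{u_1, u_2\} = \{\lambda, u_2\}$, i.e.\ $i \in \{0, |u_2|\}$.

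Next I would apply this to the assumed factorization. Suppose $v_2\alpha$ is a factor of $u_2\alpha$, say $u_2\alpha = p\,v_2\alpha\,q$. First pin down the copy of $\alpha$ immediately following $v_2$: it is a factor occurrence of $\alpha$ in $u_2\alpha$ beginning at position $|p|+|v_2|+1$, and it fits inside $u_2\alpha$ because $|p|+|v_2| \le |p|+|v_2|+|q| = |u_2|$. The observation above then yields $|p|+|v_2| \in \{0,|u_2|\}$, and since $|v_2| \ge 1$ this forces $|p|+|v_2| = |u_2|$, hence $|q| = 0$. Thus $q = \lambda$ and $u_2\alpha = p\,v_2\alpha$.

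It then remains to eliminate $p$. From $\alpha \le_p v_2\alpha$ we obtain $p\alpha \le_p p\,v_2\alpha = u_2\alpha \le_p w$, so $p \in \Pref_\alpha(w)$ with $|p| = |u_2| - |v_2| < |u_2|$; minimality of $u_2$ forces $p = \lambda$, whence $u_2\alpha = v_2\alpha$ and $u_2 = v_2$.

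The argument is short, and I do not anticipate a serious obstacle. The one step that must be handled carefully is the engine observation — that an occurrence of $\alpha$ strictly between positions $1$ and $|u_2|+1$ inside $u_2\alpha$ produces a nonempty $\alpha$-prefix of $w$ that is shorter than $u_2$ — together with the accompanying offset bookkeeping that forces the occurrence of $v_2\alpha$ to be a suffix of $u_2\alpha$. One must also make sure that $\overline{v_2}$, and hence $v_2$, is genuinely nonempty, which is precisely where the strict sorting $|\overline{v_1}| < |\overline{v_2}|$ and $\overline{v_1} = \lambda$ are used.
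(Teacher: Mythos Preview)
Your argument is correct and is essentially the paper's own proof. The paper writes $u_2\alpha = x\,v_2\alpha\,y$, argues $y=\lambda$ by the minimality of $u_2$ (your ``engine observation'' applied to the one relevant occurrence of $\alpha$), and then invokes Lemma~\ref{lem:suf_relation_aprefs} to force $x=\lambda$, which is exactly your final step using $\alpha \le_p v_2\alpha$ to get $p\alpha \le_p u_2\alpha$; you have simply made explicit the facts $\alpha \le_p v_2\alpha$ and $v_2\neq\lambda$ that the paper leaves implicit.
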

\begin{proof}
	Let $u_2 \alpha = xv_2 \alpha y$ for some $x, y \in \Sigma^*$. 
	Unless $y = \lambda$, $xv_2 \alpha \le_p u_2 \alpha$ would be a nonempty $\alpha$-prefix of $w$ that is properly shorter than $u_2$, and causes a contradiction. 
	Thus, $y$ must be empty so that $u_2 \alpha = x v_2 \alpha$. 
	Now, Lemma~\ref{lem:suf_relation_aprefs} leads us to $x = \lambda$. 
\end{proof}

Finally, let us introduce interesting results that illustrate the close relationship between $\alpha$-prefixes, commutativity, and primitivity, essential notions in combinatorics on words. 

\begin{lemma}\label{lem:primitive_apref}
	Let $w \in \alpha \Sigma^*$ and $u \in \Pref_\alpha(w)$. 
	Then $\rho(u), \rho(u)^2, \ldots, \rho(u)^{|u|/|\rho(u)|} \in \Pref_\alpha(w)$. 
\end{lemma}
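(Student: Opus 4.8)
The plan is to reduce everything to comparing prefixes of one sufficiently long power of $\rho(u)$. Write $p=\rho(u)$ and $e=|u|/|p|$, so $u=p^e$; if $u=\lambda$ there is nothing to prove, so assume $e\ge 1$. Since $u\in\aPref(w)$, we have $w=u\alpha x=p^e\alpha x$ for some $x\in\Sigma^*$. The goal is to show $p^i\alpha\le_p w$ for every $1\le i\le e$ (the case $i=e$ being $u$ itself, which is automatic, but the argument will handle all $i$ uniformly).

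First I would record, via Proposition~\ref{prop:aprefix_basic}(1) applied to $u\in\aPref(w)$, that $\alpha\le_p u\alpha=p^e\alpha$. The key step is then to upgrade this to: \emph{$\alpha\le_p p^N$ for all sufficiently large $N$.} Starting from $\alpha\le_p p^e\alpha$, a straightforward induction on $j$ gives $\alpha\le_p p^{ej}\alpha$ for all $j\ge 1$: if $\alpha\le_p p^{ej}\alpha$, then prepending $p^e$ yields $p^e\alpha\le_p p^{e(j+1)}\alpha$, and combining with $\alpha\le_p p^e\alpha$ gives $\alpha\le_p p^{e(j+1)}\alpha$. Choosing $j$ with $ej\,|p|\ge|\alpha|$, the relation $\alpha\le_p p^{ej}\alpha$ together with $|\alpha|\le|p^{ej}|$ forces $\alpha\le_p p^{ej}$ (both $\alpha$ and $p^{ej}$ are prefixes of $p^{ej}\alpha$, so the shorter is a prefix of the longer); hence $\alpha\le_p p^N$ whenever $N\ge ej$.

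Finally I would fix such an $N$ and set $M=N+e$. Then $p^e\alpha\le_p p^M$, and for each $1\le i\le e$ we also have $p^i\alpha\le_p p^{i+N}\le_p p^M$. So $p^i\alpha$ and $p^e\alpha$ are both prefixes of $p^M$; since $|p^i\alpha|\le|p^e\alpha|$, it follows that $p^i\alpha\le_p p^e\alpha=u\alpha\le_p w$, i.e.\ $\rho(u)^i\in\aPref(w)$ for every $1\le i\le|u|/|\rho(u)|$.

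The main obstacle is the middle step — turning $\alpha\le_p p^e\alpha$ into "$\alpha$ is a prefix of a power of $p$." I expect this "pumping" of the prefix relation to be the only real content; everything after it is bookkeeping about prefixes of a common long power of $\rho(u)$, and before it is a direct appeal to Proposition~\ref{prop:aprefix_basic}. I would also double-check the degenerate case $u=\lambda$ and confirm that primitivity of $p$ is in fact not needed for the argument (only $u=p^e$), so the statement is really about the specific factorization $u=\rho(u)^{|u|/|\rho(u)|}$.
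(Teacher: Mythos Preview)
Your proof is correct, but it takes a genuinely different route from the paper's. The paper invokes the classical structure theorem for the conjugacy equation: from $\alpha \le_p u\alpha$ one writes $u\alpha = \alpha y$, whose solutions are known to be $u=(st)^n$, $\alpha=(st)^is$ with $st=\rho(u)$; then $\rho(u)^j\alpha=(st)^{i+j}s$ is visibly a prefix of $u\alpha=(st)^{i+n}s$ for each $1\le j\le n$. Your argument avoids this structural result altogether: you pump $\alpha\le_p p^e\alpha$ to $\alpha\le_p p^{ej}\alpha$, pass to $\alpha\le_p p^N$ once $|p^N|\ge|\alpha|$, and then compare $p^i\alpha$ and $p^e\alpha$ as prefixes of a single long power $p^M$. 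The paper's approach is shorter and yields the explicit form of $\alpha$ as a by-product, at the cost of citing an outside combinatorial fact; your approach is more self-contained and, as you note, does not actually use primitivity of $p$ (only that $u=p^e$), so it proves the slightly more general statement that for any factorization $u=p^e$ with $u\in\Pref_\alpha(w)$ one has $p,p^2,\ldots,p^e\in\Pref_\alpha(w)$.
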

\begin{proof}
	Due to the first statement of Proposition~\ref{prop:aprefix_basic}, $u \in \Pref_\alpha(w)$ enables us to let $\alpha y = u \alpha$ for some $y \in \Sigma^+$. 
	Its solution is well-known to be $u = (st)^n$ and $\alpha = (st)^i s$ for some $i \ge 0$ and $s, t \in \Sigma^*$ such that $\rho(u) = st$. 
	Hence, $u \alpha = (st)^{i+n}s = \rho(u) \alpha (ts)^{n-1} = \rho(u)^2 \alpha (ts)^{n-2} = \cdots = \rho(u)^n \alpha$. 
\end{proof}

An immediate implication of this lemma is that the shortest nonempty $\alpha$-prefix of a word that begins with $\alpha$ must be primitive. 
We should make one more step forward. 
Imagine that a word $w$ has an $\alpha$-prefix $u$. 
If $w \to_{\RHC} w \overline{u}$ is possible, then $w \to_{\RHC} w\overline{\rho(u)}$ is also possible. 
Thus, repeating the extension of $w$ to the right by $\overline{\rho(u)}$ $|u|/|\rho(u)|$ times amounts to extending $w$ by $\overline{u}$ once. 
In other words, the process to extend a word by $\overline{u}$ is not essential unless $u$ is primitive because it can be always simulated by multiple processes to extend a word by $\overline{\rho(u)}$. 

The next lemma proves that all nonempty $\alpha$-prefixes of length at most $|\alpha|$ commute with each other, and hence, only the shortest one is essential in the above sense. 

\begin{lemma}
	For nonempty words $x_1, x_2 \in \Sigma^+$, if $\alpha \le_p x_1 \alpha \le_p x_2 \alpha$ and $|x_2| \le |\alpha|$ hold, then $\rho(x_1) = \rho(x_2)$. 
\end{lemma}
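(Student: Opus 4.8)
The plan is to reduce everything to the Fine--Wilf theorem. The hypotheses $\alpha \le_p x_1\alpha$ and $\alpha \le_p x_2\alpha$ say, by the first statement of Proposition~\ref{prop:aprefix_basic} (or directly), that $\alpha$ is a prefix of both $x_1\alpha$ and $x_2\alpha$; equivalently $x_1$ and $x_2$ are both \emph{periods} of the word $\alpha$ (in the sense that $\alpha$ has periods $|x_1|$ and $|x_2|$), since $x_i\alpha$ agrees with $\alpha$ on its first $|\alpha|$ letters means $\alpha[j] = \alpha[j+|x_i|]$ whenever both indices are $\le |\alpha|$. So $\alpha$ is a word of length $k=|\alpha|$ admitting two periods $p_1 = |x_1|$ and $p_2 = |x_2|$, both of which are at most $k$ by hypothesis $|x_2|\le|\alpha|$ (and $|x_1|\le|x_2|$).

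First I would invoke the Fine--Wilf theorem: if a word of length at least $p_1 + p_2 - \gcd(p_1,p_2)$ has periods $p_1$ and $p_2$, then it has period $\gcd(p_1,p_2)$. The trouble is that we only know $|\alpha| \ge p_2$, which is in general weaker than $p_1 + p_2 - \gcd(p_1,p_2)$, so Fine--Wilf does not apply to $\alpha$ alone. The key trick is to work not with $\alpha$ but with the longer word $x_2\alpha$ (equivalently $x_1\alpha$, which is a prefix of it). By Lemma~\ref{lem:primitive_apref} applied to $x_2$, we in fact have $x_2\alpha = \rho(x_2)^{n}\alpha$ where $n = |x_2|/|\rho(x_2)|$, so $x_2\alpha$ has period $|\rho(x_2)|$; and similarly it has period $|\rho(x_1)|$ because $x_1\alpha \le_p x_2\alpha$ forces $x_1$ (hence $\rho(x_1)$) to be a period of $x_2\alpha$ as well. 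Now the word $x_2\alpha$ has length $|x_2| + |\alpha| = |x_2| + k \ge 2|x_2| \ge |\rho(x_1)| + |\rho(x_2)|$, which is comfortably at least $|\rho(x_1)| + |\rho(x_2)| - \gcd(|\rho(x_1)|,|\rho(x_2)|)$. So Fine--Wilf applies to $x_2\alpha$ with the two periods $|\rho(x_1)|$ and $|\rho(x_2)|$, yielding that $x_2\alpha$ has period $g := \gcd(|\rho(x_1)|,|\rho(x_2)|)$.

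From there the conclusion is routine: $x_2\alpha$ has a period $g$ dividing both $|\rho(x_1)|$ and $|\rho(x_2)|$, so its prefix of length $|\rho(x_1)|$, namely $\rho(x_1)$, is a power of its prefix of length $g$, and likewise $\rho(x_2)$ is a power of that same length-$g$ prefix. Since both $\rho(x_1)$ and $\rho(x_2)$ are primitive, each must equal that length-$g$ prefix, hence $\rho(x_1) = \rho(x_2)$. Equivalently, one can phrase the last step via the standard fact that two words commute iff they have the same primitive root: $\rho(x_1)$ and $\rho(x_2)$ are both prefixes of the periodic word $x_2\alpha$ of a common period, so they are powers of a common word and therefore $\rho(x_1) = \rho(x_2)$.

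The main obstacle, and the only non-obvious point, is recognizing that Fine--Wilf must be applied to $x_2\alpha$ rather than to $\alpha$: the naive application to $\alpha$ fails precisely because the available length bound $|\alpha| \ge |x_2|$ is too weak, whereas passing to $x_2\alpha$ lengthens the word by exactly enough ($|x_2| \ge |\rho(x_2)| \ge \gcd$) to push us over the Fine--Wilf threshold. Reducing $x_1, x_2$ to their primitive roots via Lemma~\ref{lem:primitive_apref} before invoking Fine--Wilf is what makes the length bound clean; alternatively one could run Fine--Wilf directly with periods $|x_1|, |x_2|$ and the same bound $|x_2\alpha| = |x_2| + k \ge 2|x_2| \ge |x_1| + |x_2|$ still works, then extract primitive roots at the end. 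Either way, everything after the correct Fine--Wilf invocation is bookkeeping.
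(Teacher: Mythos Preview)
Your plan is the paper's plan --- reduce to Fine and Wilf --- and the length bounds you compute are correct. There is, however, one genuine gap. You assert that $x_2\alpha$ has period $|x_1|$ (or $|\rho(x_1)|$) ``because $x_1\alpha \le_p x_2\alpha$ forces $x_1$ \ldots\ to be a period of $x_2\alpha$ as well.'' But periods are inherited by \emph{prefixes}, not by extensions: the fact that the prefix $x_1\alpha$ has period $|x_1|$ says nothing, by itself, about the longer word $x_2\alpha$. (The claim is true here, but establishing it already requires combining the period $|x_2|$ of $x_2\alpha$ with $|x_1\alpha|\ge |x_2|$ --- essentially the argument you are trying to set up.)

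The paper avoids this by applying Fine--Wilf to $x_1\alpha$ instead of $x_2\alpha$. Then both periods come for free: $x_1\alpha$ has period $|x_1|$ directly from $\alpha\le_p x_1\alpha$, and it has period $|x_2|$ because it is a prefix of $x_2\alpha$ (which has period $|x_2|$). Since $|x_1\alpha|=|x_1|+|\alpha|\ge |x_1|+|x_2|$, Fine--Wilf fires, and as both $x_1$ and $x_2$ are prefixes of $x_1\alpha$ (using $|x_2|\le|\alpha|$) of lengths divisible by $\gcd(|x_1|,|x_2|)$, they are powers of a common word. So the fix is simply to swap $x_2\alpha$ for $x_1\alpha$ as the word to which you apply Fine--Wilf; your detour through Lemma~\ref{lem:primitive_apref} and primitive roots is then unnecessary.
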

\begin{proof}
	If $|x_1| = |x_2|$, then the prefix relation immediately gives $x_1 = x_2$, and the conclusion of this lemma is trivial. 
	Hence, we assume $|x_1| < |x_2|$. 
	Combining $|x_1| \le |\alpha|$ with $\alpha \le_p x_1 \alpha$, we can deduce that the word $x_1 \alpha$ has a period $|x_1|$. 
	Likewise, $x_2\alpha$ has a period $|x_2|$, and hence, $x_1 \alpha$ also has this period. 
	As a result, $x_1\alpha$ has two periods $|x_1|, |x_2|$, and moreover it is of length at least the sum of these periods. 
	Thus, Fine and Wilf's theorem \cite{ChKa97, FiWi65} leads us to the conclusion of this lemma. 
\end{proof}

	\subsection{Non-crossing words and their properties}

A word $w_0 \in \Sigma^*$ is an {\it $(m, n)$-$\alpha$-word}, or simply an {\it $(m, n)$-word} when $\alpha$ is clear from the context, if $|\Pref_\alpha(w_0)| = m$ and $|\Suff_{\overline{\alpha}}(w_0)| = n$. 
Informally speaking, an $(m, n)$-word is a word on which $\alpha$ occurs $m$ times and $\overline{\alpha}$ does $n$ times. 
For a pseudo-palindromic $\alpha$ ($\alpha = \overline{\alpha}$), we regard an occurrence of $\alpha$ also as that of $\overline{\alpha}$, and as such, any word is an $(m, m)$-word for some $m \ge 0$. 

We say that $w_0$ is {\it non-$\alpha$-crossing} if the rightmost occurrence of $\alpha$ precedes the leftmost one of $\overline{\alpha}$ on $w_0$. 
When $\alpha$ is understood from the context, we simply say that $w_0$ is non-crossing. 
Otherwise, the word is {\it $\alpha$-crossing} or {\it crossing}. 
Note that if $\alpha = \overline{\alpha}$, then for a word $w$ which is either a $(0, 0)$-word or $(1, 1)$-word, $\HC_\alpha^*(w) = \{w\}$, and otherwise ($w$ is an $(m, m)$-word for some $m \ge 2$), $w$ can be considered crossing. 
Thus, whenever the non-$\alpha$-crossing word is concerned, we assume that $\alpha \neq \overline{\alpha}$. 
The definition of a word being non-$\alpha$-crossing does not force the word to begin with $\alpha$ or end with $\calpha$. 
However, it is not until $\alpha$ is a primer that this notion becomes useful in our work. 
Thus, the word should be in either $\alpha \Sigma^*$ or $\Sigma^* \calpha$. 
Actually, in the rest of this paper, we assume both of these conditions and consider only {\it single-primer iterated hairpin completion}; thus, we can assume that $w_0 \in \alpha \Sigma^* \cap \Sigma^* \calpha$. 
As let previously, elements of $\Pref_\alpha(w_0)$ are denoted by $u_1, \ldots, u_m$, those of $\Suff_{\calpha}(w_0)$ by $\overline{v_1}, \ldots, \overline{v_n}$, and they are sorted so that this assumption imposes $u_1 = v_1 = \lambda$. 

Our main focus lies on the characterization of non-crossing words whose iterated hairpin completion is regular in terms of combinatorics on words. 
Thus, in this subsection, we prove some combinatorial properties of non-crossing words. 
Let us begin with an easy observation about the longest $\alpha$-prefix and $\calpha$-suffix of $w_0$. 

\begin{proposition}\label{prop:mirror}
	$u_m = v_n$ if and only if $m = n$ and for all $1 \le i \le m$, $u_i = v_i$. 
\end{proposition}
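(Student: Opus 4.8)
The plan is to prove the nontrivial direction: assuming $u_m = v_n$, deduce that $m = n$ and $u_i = v_i$ for every $1 \le i \le m$. The reverse direction is immediate since if $m = n$ and $u_i = v_i$ for all $i$, then in particular $u_m = v_m = v_n$. For the forward direction, the key observation is that $u_m$ is simultaneously the longest $\alpha$-prefix of $w_0$ and (via the equality $u_m = v_n$) the complement-mirror of the longest $\overline{\alpha}$-suffix. So I would start by writing $w_0 = u_m \alpha x = y \overline{\alpha}\,\overline{v_n}$ for suitable $x, y$, and use $u_m = v_n$ together with $w_0 \in \alpha\Sigma^* \cap \Sigma^*\overline{\alpha}$.

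The heart of the argument is to set up a bijection between $\Pref_\alpha(w_0)$ and $\overline{\Suff_{\overline{\alpha}}(w_0)}$ by "reading from the other end." Concretely: a word $u$ is an $\alpha$-prefix of $w_0$ iff $u\alpha \le_p w_0$; since $u_m\alpha \le_p w_0$ and $|u| \le |u_m|$, this is equivalent to $u\alpha \le_p u_m\alpha$, i.e. to $u\alpha$ being a prefix of the fixed word $u_m\alpha = v_n\alpha$. Symmetrically, $\overline{v}$ is an $\overline{\alpha}$-suffix of $w_0$ iff $\overline{\alpha}\,\overline{v} \ge_s \overline{\alpha}\,\overline{v_n}$, i.e. $v\alpha \le_p v_n\alpha$ by taking complements. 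Thus \emph{both} $\Pref_\alpha(w_0)$ and $\Suff_{\overline{\alpha}}(w_0)$ are determined purely by the single word $v_n\alpha = u_m\alpha$: we have $\Pref_\alpha(w_0) = \{u : u\alpha \le_p u_m\alpha\}$ and $\overline{\Suff_{\overline{\alpha}}(w_0)} = \{v : v\alpha \le_p u_m\alpha\}$, which are literally the same set. Hence $m = n$, and since both lists are sorted by increasing length and coincide as sets, $u_i = v_i$ for all $i$.

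The step I expect to be the main obstacle — or at least the one needing care — is justifying the equivalence "$u \in \Pref_\alpha(w_0)$ $\iff$ $u\alpha \le_p u_m\alpha$." The forward implication needs that any $\alpha$-prefix is a prefix of the longest one: if $u\alpha \le_p w_0$ and $u_m\alpha \le_p w_0$ with $|u| \le |u_m|$, then $u\alpha \le_p u_m\alpha$ because two prefixes of the same word are comparable. The subtlety is whether $|u| \le |u_m|$ always holds — but this is exactly the sorting convention ($u_m$ is the longest), so it is fine. The backward implication needs that $u\alpha \le_p u_m\alpha \le_p w_0$ forces $u\alpha \le_p w_0$, which is transitivity of $\le_p$. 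The analogous suffix statement is handled the same way after applying complementation, using $\Suff_{\overline{\alpha}}(w_0) = \overline{\Pref_\alpha(\overline{w_0})}$ and the fact (from $w_0 \in \Sigma^*\overline{\alpha}$) that $\overline{w_0} \in \alpha\Sigma^*$ so the same reasoning applies to $\overline{w_0}$. Once these two equivalences are in hand, the conclusion drops out immediately from $u_m = v_n$, with no further combinatorics on words needed — in particular Proposition~\ref{prop:aprefix_basic} is not even required here, only the sorting conventions and transitivity of the prefix/suffix orders.
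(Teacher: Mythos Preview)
Your argument is correct. The paper itself treats Proposition~\ref{prop:mirror} as an ``easy observation'' and states it without proof, so there is nothing to compare against; your write-up supplies the missing details cleanly. The core idea---that both $\Pref_\alpha(w_0)$ and $\overline{\Suff_{\overline{\alpha}}(w_0)}$ are exactly the set $\{z : z\alpha \le_p u_m\alpha\}$ once $u_m = v_n$---is the natural one, and your justification via comparability of prefixes of a common word plus transitivity of $\le_p$ is sound.

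One small notational slip: where you write $\overline{\alpha}\,\overline{v} \ge_s \overline{\alpha}\,\overline{v_n}$, the paper's convention is that $w \ge_s v$ means $v$ is a suffix of $w$, so the relation should read $\overline{\alpha}\,\overline{v_n} \ge_s \overline{\alpha}\,\overline{v}$ (the longer word on the left). Your subsequent conclusion $v\alpha \le_p v_n\alpha$ is nonetheless correct, so this is cosmetic. It is also worth noting, as you observe implicitly, that the non-crossing hypothesis on $w_0$ is not actually needed for this proposition; only $w_0 \in \alpha\Sigma^* \cap \Sigma^*\overline{\alpha}$ and the sorting conventions are used.
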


Next, we will see that one-step hairpin completion can extend $w_0$ to the left by any of $v_1, \ldots, v_{n-1}$ or to the right by any of $\overline{u_1}, \ldots, \overline{u_{m-1}}$ due to the following lemma. 

\begin{lemma}\label{lem:length_nonoverlap_apref_casuf}
	Let $w_0 \in \alpha \Sigma^* \cap \Sigma^* \overline{\alpha}$ be a non-crossing word with $\Pref_\alpha(w_0) = \{u_1, \ldots, u_m\}$ and $\Suff_{\overline{\alpha}}(w_0) = \{\overline{v_1}, \ldots, \overline{v_n}\}$. 
	Then $|u_{m-1}| + |v_n| + 2|\alpha| < |w_0|$. 
\end{lemma}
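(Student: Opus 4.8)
The plan is to show that the two "witnessing" occurrences — the second-to-last occurrence of $\alpha$ (the one giving rise to $u_{m-1}$) and the leftmost occurrence of $\overline{\alpha}$ (the one giving rise to $\overline{v_n}$, since $\overline{v_n}$ is the longest $\overline{\alpha}$-suffix and hence corresponds to the \emph{leftmost} $\overline{\alpha}$) are disjoint on $w_0$, and in fact leave at least one letter strictly between them. Concretely, write $w_0 = u_{m-1}\alpha x$ and $w_0 = y\overline{\alpha} v_n$ for suitable $x,y \in \Sigma^*$. Since $w_0$ is non-crossing, the rightmost occurrence of $\alpha$ precedes the leftmost occurrence of $\overline{\alpha}$; the occurrence of $\alpha$ that ends $u_{m-1}\alpha$ is not the rightmost one (because $u_{m-1}$ is the second-longest $\alpha$-prefix, so there is a strictly longer $\alpha$-prefix $u_m$, hence a later occurrence of $\alpha$), so it certainly precedes the leftmost $\overline{\alpha}$ as well. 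I would use this to conclude $|u_{m-1}\alpha| < |y\overline{\alpha}|$ — i.e. the block $u_{m-1}\alpha$ ends strictly before the block $\overline{\alpha} v_n$ begins — which already gives $|u_{m-1}| + |\alpha| + |\alpha| + |v_n| \le |w_0|$.

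To upgrade the "$\le$" to the strict "$<$" claimed in the statement, I need one more letter of slack. This should come from the gap between the rightmost $\alpha$ and the second-to-last $\alpha$: because $|\Pref_\alpha(w_0)| = m$ with $m \ge 2$ in the relevant setting, $u_{m-1}$ is strictly shorter than $u_m$, so there is at least one more occurrence of $\alpha$ strictly to the right of the one ending $u_{m-1}\alpha$; this occurrence lies (weakly) before the leftmost $\overline{\alpha}$, and since an occurrence of $\alpha$ has positive length, the block $u_{m-1}\alpha$ must end at least one position before the block $\overline{\alpha} v_n$ begins. Hence $|u_{m-1}\alpha| + 1 \le |y| < |y| + |\overline{\alpha}|$, and combining, $|u_{m-1}| + |\alpha| + 1 + |\alpha| + |v_n| \le |w_0|$, i.e. $|u_{m-1}| + |v_n| + 2|\alpha| < |w_0|$.

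One subtlety I should address: the statement, as phrased, seems to implicitly assume $m \ge 2$ and $n \ge 1$ (otherwise $u_{m-1}$ is undefined or the inequality is about $u_0$); since $w_0 \in \alpha\Sigma^* \cap \Sigma^*\overline{\alpha}$ we automatically have $m,n \ge 1$ (with $u_1 = v_1 = \lambda$), and when $m \ge 2$ the argument above applies verbatim. I would also double-check the degenerate reading $m=1$: then there is no $u_{m-1}$ in the intended sense, or one reads $u_{m-1}=u_1=\lambda$, and the bound $|v_n| + 2|\alpha| < |w_0|$ still follows just from disjointness of the unique $\alpha$-occurrence at the front and the leftmost $\overline{\alpha}$ occurrence, \emph{provided} $w_0 \ne \alpha\overline{\alpha}$; but in that boundary case the operation is degenerate and is typically excluded. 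The main obstacle is bookkeeping the positions correctly — making sure the occurrence of $\alpha$ corresponding to $u_{m-1}$ really is not the rightmost one (this is exactly where $m \ge 2$ and the length-sorting convention $|u_1| < \cdots < |u_m|$ are used) and that "non-crossing" is applied to the correct pair of occurrences — rather than anything deep; once the positions are pinned down, the inequality is immediate.
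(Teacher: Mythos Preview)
Your argument has a genuine gap: you implicitly assume that ``non-crossing'' forces the occurrence of $\alpha$ at $u_{m-1}$ to be \emph{disjoint} from the leftmost occurrence of $\overline{\alpha}$, but the paper's notion of non-crossing only requires that the rightmost $\alpha$ \emph{precede} the leftmost $\overline{\alpha}$, and this explicitly allows overlap (see the remark right after the lemma: ``the rightmost occurrence of $\alpha$ might overlap with the suffix $\overline{\alpha}$''). Concretely, from non-crossing you get $|u_m| \le |y|$ (start positions), and combining with $|u_{m-1}| < |u_m|$ yields only $|u_{m-1}| < |y|$. Your claimed inequality $|u_{m-1}\alpha| + 1 \le |y|$, i.e.\ $|u_{m-1}| + |\alpha| < |y|$, does not follow: the presence of the $m$-th $\alpha$ ``in between'' does not buy you an extra $|\alpha|$ positions, because that occurrence may overlap both the $(m-1)$-th $\alpha$ and the leftmost $\overline{\alpha}$. (Your sentence ``$|u_{m-1}\alpha| < |y\overline{\alpha}|$ --- i.e.\ the block $u_{m-1}\alpha$ ends strictly before the block $\overline{\alpha}\overline{v_n}$ begins'' conflates two different inequalities: the first is $|u_{m-1}| < |y|$, the second is $|u_{m-1}| + |\alpha| \le |y|$.)

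The paper closes exactly this gap by a contradiction argument with a pseudo-palindrome step that your sketch is missing. Assuming $|u_{m-1}| + |v_n| + 2|\alpha| \ge |w_0|$, one writes $w_0 = u_{m-1} w \overline{v_n}$ with $w \in \alpha\Sigma^* \cap \Sigma^*\overline{\alpha}$ and $|w| \le 2|\alpha|$; the short length forces $w = \overline{w}$. Non-crossing then gives $x\alpha \le_p w$ where $u_m = u_{m-1}x$ with $x \neq \lambda$, and applying $\overline{\,\cdot\,}$ together with $w = \overline{w}$ yields $w \ge_s \overline{\alpha}\,\overline{x}$. Hence $\overline{x}\,\overline{v_n}$ is an $\overline{\alpha}$-suffix of $w_0$ strictly longer than $\overline{v_n}$, contradicting maximality of $\overline{v_n}$. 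This palindrome reflection is the missing idea; without it, your positional bookkeeping cannot produce the required extra $|\alpha|$.
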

\begin{proof}
	Suppose that this inequality did not hold. 
	Being non-crossing, $w_0$ can be written as $w_0 = u_{m-1} w \overline{v_n}$ for some $w \in \alpha \Sigma^* \cap \Sigma^* \overline{\alpha}$ with $|w| \le 2|\alpha|$. 
	Hence, $w = \overline{w}$. 
	Let $x$ be a nonempty word satisfying $u_m = u_{m-1}x$. 
	Since $w_0$ is non-crossing, $u_m \alpha \le_p u_{m-1} w$ must hold, from which we have $x \alpha \le_p w$. 
	Combining this with $w = \overline{w}$ enables us to find an $\calpha$-suffix $\overline{x} \ \overline{v_n}$ of $w_0$, but this would be longer than the longest $\overline{\alpha}$-suffix of $w_0$, a contradiction. 
\end{proof}

This lemma does not rule out the possibility that $w_0$ cannot be extended to the right by $\overline{u_m}$ by hairpin completion because the rightmost occurrence of $\alpha$ might overlap with the suffix $\calpha$. 
The analogous argument is valid for $v_n$ and left hairpin completion. 
However, Lemma~\ref{lem:length_nonoverlap_apref_casuf} leads us to one important corollary on non-crossing $(m, n)$-words for $m, n \ge 2$ that hairpin completion can extend $w_0$ to the right by the complement of any of its $\alpha$-prefix and to the left by the complement of any of its $\calpha$-suffix. 

\begin{corollary}\label{cor:1st_step_mn_ge2}
	Let $w_0 \in \alpha \Sigma^* \cap \Sigma^* \overline{\alpha}$ be a non-crossing $(m, n)$-word with $m, n \ge 2$. 
	Then $\HC_\alpha(w_0) = \{w_0\} \cup \{v_2, \ldots, v_m\}w_0 \cup w_0\{\overline{u_2}, \ldots, \overline{u_n}\}$. 
\end{corollary}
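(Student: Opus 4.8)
The plan is to prove the two containments of the claimed equality for $\HC_\alpha(w_0)$ separately, leveraging the length estimate of Lemma~\ref{lem:length_nonoverlap_apref_casuf} to rule out the overlap obstructions that prevented us from extending by $\overline{u_m}$ or $v_n$ in the general single-step analysis.

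For the inclusion $\{w_0\} \cup \{v_2, \ldots, v_m\}w_0 \cup w_0\{\overline{u_2}, \ldots, \overline{u_n}\} \subseteq \HC_\alpha(w_0)$, first note $w_0 \in \HC_\alpha(w_0)$ trivially (or via the empty $\alpha$-prefix / $\overline{\alpha}$-suffix). Next I would show that right hairpin completion can append $\overline{u_j}$ for every $2 \le j \le n$. Here the subtlety is that $j$ ranges only up to $n$, not $m$, which is exactly because the relevant occurrence of $\overline{\alpha}$ must be the rightmost one (the suffix $\overline{\alpha}$) and we need the factor $u_j\alpha$ that anneals to it to lie strictly to the left of that suffix $\overline{\alpha}$ without overlapping it. Since $w_0$ is non-crossing and an $(m,n)$-word with $n \ge 2$, we have $|v_n| \ge |\alpha| > 0$ so $w_0$ has a genuine $\overline{\alpha}$-suffix $\overline{v_n}$ that is longer than $\overline{\alpha}$ itself; combined with Lemma~\ref{lem:length_nonoverlap_apref_casuf}, $|u_{m-1}| + |v_n| + 2|\alpha| < |w_0|$, which guarantees that each $u_j$ with $j \le n \le m$ (so $|u_j| \le |u_{m-1}|$ when $j < m$, and for $j = n$ we compare against $|u_m|$, handled by Proposition~\ref{prop:mirror} plus the length bound) admits the factorization $w_0 = u_j \alpha (\cdots) \overline{\alpha}$ with the middle part nonempty, so $w_0 \to_{\RHC} w_0 \overline{u_j}$. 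The symmetric argument, via $w_0 \in \Sigma^*\overline{\alpha}$ and Proposition~\ref{prop:presentation_RH} (or a direct mirror of the above), gives $v_i w_0 \in \LHC_\alpha(w_0)$ for $2 \le i \le m$.

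For the reverse inclusion $\HC_\alpha(w_0) \subseteq \{w_0\} \cup \{v_2, \ldots, v_m\}w_0 \cup w_0\{\overline{u_2}, \ldots, \overline{u_n}\}$, I would take an arbitrary $w' \in \HC_\alpha(w_0)$ and split on whether the step was left or right hairpin completion. Say $w_0 \to_{\RHC} w'$ via a factorization $w_0 = u\alpha v\overline{\alpha}$, so $w' = w_0 \overline{u}$. By definition $u \in \Pref_\alpha(w_0)$, hence $u \in \{u_1, \ldots, u_m\}$; if $u = u_1 = \lambda$ then $w' = w_0$, and otherwise $w' \in w_0\{\overline{u_2}, \ldots, \overline{u_m}\}$. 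The remaining point is that in fact $u \ne u_m$: the rightmost occurrence of $\alpha$ in $w_0$ is $u_m\alpha$, and for the factorization $w_0 = u_m\alpha v\overline{\alpha}$ the factor $v$ would have to be empty or $u_m\alpha$ would overlap the suffix $\overline{\alpha}$ — but Lemma~\ref{lem:length_nonoverlap_apref_casuf} forces $|u_m\alpha \overline{\alpha}| < |w_0|$, so this needs a short separate check distinguishing $u_m = v_n$ (handled by Proposition~\ref{prop:mirror}, where one sees the would-be extension reproduces an existing suffix and is excluded by the crossing/length constraint) from $u_m \ne v_n$. The left case is symmetric.

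The main obstacle I anticipate is precisely the boundary index: cleanly arguing that the index in the right-extension ranges over $\{2,\ldots,n\}$ rather than $\{2,\ldots,m\}$, i.e.\ that $\overline{u_m}$ cannot be appended (when $m > n$) and that $\overline{u_j}$ for $j$ between $n{+}1$ and $m$ likewise cannot, because the only $\overline{\alpha}$ available to the right of $u_j\alpha$ is the terminal one and there simply is not room. This is a careful-but-elementary length bookkeeping argument built entirely on Lemma~\ref{lem:length_nonoverlap_apref_casuf}, Proposition~\ref{prop:mirror}, and the non-crossing hypothesis; no new combinatorics beyond what is already developed should be needed, and the $m, n \ge 2$ hypothesis is what makes the terminal $\overline{\alpha}$ (resp.\ initial $\alpha$) strictly shorter than the longest suffix (resp.\ prefix), which is the crux.
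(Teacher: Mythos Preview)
Your proposal is built around a misreading caused by a typo in the corollary's statement. In the paper's conventions $u_1,\ldots,u_m$ are the $\alpha$-prefixes and $\overline{v_1},\ldots,\overline{v_n}$ the $\overline{\alpha}$-suffixes, so the intended conclusion is
\[
\HC_\alpha(w_0) = \{w_0\} \cup \{v_2, \ldots, v_n\}w_0 \cup w_0\{\overline{u_2}, \ldots, \overline{u_m}\},
\]
with the indices $m$ and $n$ swapped relative to what is printed. The sentence preceding the corollary says exactly this (``extend $w_0$ to the right by the complement of any of its $\alpha$-prefix and to the left by the complement of any of its $\overline{\alpha}$-suffix''), and the later application to $(3,2)$-words reads $\HC_\alpha(w_0) = \{w_0\} \cup \{v_2 w_0, w_0 \overline{u_2}, w_0 \overline{u_3}\}$, confirming that right extensions run through all of $u_2,\ldots,u_m$.

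Consequently, the core of your argument --- that $\overline{u_m}$ and more generally $\overline{u_j}$ for $n < j \le m$ \emph{cannot} be appended --- is false, and the ``careful length bookkeeping'' you anticipate would not go through. In fact the opposite holds: apply Lemma~\ref{lem:length_nonoverlap_apref_casuf} to $\overline{w_0}$, which is a non-crossing $(n,m)$-word, to obtain $|v_{n-1}| + |u_m| + 2|\alpha| < |w_0|$; in particular $|u_m| + 2|\alpha| < |w_0|$, so the rightmost $\alpha$ does not overlap the suffix $\overline{\alpha}$ and $w_0 \to_{\RHC} w_0\overline{u_m}$ is valid. This is precisely why the hypothesis $n \ge 2$ is needed. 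Several auxiliary claims in your sketch are also unjustified (e.g.\ $|v_n| \ge |\alpha|$ need not hold, and there is no reason for $n \le m$).

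The paper's own argument is the one-line observation just indicated: Lemma~\ref{lem:length_nonoverlap_apref_casuf} applied to $w_0$ and to $\overline{w_0}$ shows that neither the rightmost $\alpha$ overlaps the suffix $\overline{\alpha}$ nor the leftmost $\overline{\alpha}$ overlaps the prefix $\alpha$, so every $\alpha$-prefix and every $\overline{\alpha}$-suffix is available for hairpin completion. Once you read the statement with the corrected indices, your two-inclusion outline collapses to exactly this.
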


Any word obtained from a non-crossing word by hairpin completion is non-crossing. 
Though being easily confirmed, this closure property forms the foundation of our discussions in this paper. 

\begin{proposition}\label{prop:nonx_HC}
	Let $\alpha \in \Sigma^k$ with $\alpha \neq \overline{\alpha}$, and $w_0 \in \alpha \Sigma^* \cap \Sigma^* \overline{\alpha}$ be a non-crossing word. 
	Then any word in $\HC_\alpha^*(w_0)$ is non-crossing. 
\end{proposition}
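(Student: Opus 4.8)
The plan is to argue by induction on the number of hairpin completion steps, which reduces the statement to showing that one step of hairpin completion carries a non-crossing word to a non-crossing word. I would then use a symmetry to halve the work: being non-crossing is invariant under the complementation map $w \mapsto \overline{w}$, because each occurrence of $\alpha$ in $w$ corresponds, in reversed positional order, to an occurrence of $\calpha$ in $\overline{w}$ and conversely, so ``the rightmost $\alpha$ precedes the leftmost $\calpha$'' holds in $w$ exactly when it holds in $\overline{w}$. Moreover a single step $w \to_{\LHC} w'$ is turned by the bar into a single step $\overline{w} \to_{\RHC} \overline{w'}$ (both amount to appending the complement of an $\alpha$-prefix, via $\Suff_{\calpha}(w) = \overline{\Pref_\alpha(\overline{w})}$, in the spirit of Proposition~\ref{prop:presentation_RH}). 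Hence it suffices to treat one right hairpin completion $w \to_{\RHC} w'$.

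So let $w$ be non-crossing with a factorization $w = u\alpha v\calpha$, and let $w' = w\overline{u}$ with $u \in \Pref_\alpha(w)$; then $w$ contains the occurrence of $\alpha$ at position $|u|+1$ and the occurrence of $\calpha$ as its suffix. First I would observe that $w$ and $w'$ have the same leftmost $\calpha$-occurrence: $w$ is a prefix of $w'$ and already ends with $\calpha$, so any $\calpha$-occurrence of $w'$ not contained in the prefix $w$ must begin strictly to the right of the $\calpha$-suffix of $w$, hence strictly to the right of the leftmost $\calpha$ of $w$. Consequently it is enough to show that $w'$ introduces no new occurrence of $\alpha$; for then the rightmost $\alpha$-occurrence of $w'$ also equals that of $w$, and $w'$ inherits non-crossingness from $w$. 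New occurrences of $\alpha$ could only be contained in the appended block $\overline{u}$ or straddle the junction between $w$ and $\overline{u}$, and both possibilities must be excluded.

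An $\alpha$-occurrence inside $\overline{u}$ would give a $\calpha$-occurrence inside $u$; but $u$ is exactly the prefix of $w$ immediately preceding the $\alpha$ at position $|u|+1$, so such a $\calpha$ would lie to the left of an $\alpha$ in $w$, contradicting non-crossingness. For a straddling occurrence, write $\alpha = s\,t$ with $s$ a nonempty proper suffix of $w$ and $t$ a nonempty prefix of $\overline{u}$. Then $s$ is a suffix of $\calpha$ of length $j := |s| \in \{1,\dots,k-1\}$, which forces the length-$j$ prefix of $\alpha$ to be a pseudo-palindrome; and $t$ being a prefix of $\overline{u}$ forces $u$ to end with $\overline{t}$. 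The crucial point is that these two facts together exhibit a genuine $\calpha$-occurrence inside the prefix $u\alpha$ of $w$, straddling the boundary between $u$ and the $\alpha$ at position $|u|+1$: its first $k-j$ letters are the suffix $\overline{t}$ of $u$, which matches the length-$(k-j)$ prefix of $\calpha$, and its last $j$ letters are the length-$j$ prefix of $\alpha$, which matches the length-$j$ suffix of $\calpha$ precisely because that prefix of $\alpha$ is a pseudo-palindrome. This $\calpha$ begins strictly before the $\alpha$ at position $|u|+1$, again contradicting that $w$ is non-crossing. Hence no new $\alpha$-occurrence exists and $w'$ is non-crossing, completing the inductive step.

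I expect the straddling case to be the main obstacle: it is the only place where a genuinely new occurrence of $\alpha$ can a priori arise, and ruling it out rests on the slightly delicate observation above that, because hairpin completion appends a \emph{complemented} block, a straddling $\alpha$ reflects into a forbidden $\calpha$ on the left half of $w$. Everything else is routine bookkeeping with the positions of occurrences.
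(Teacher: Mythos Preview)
Your argument is correct. The paper does not actually supply a proof of this proposition; it merely remarks that the closure property is ``easily confirmed'' and moves on. So there is nothing to compare against, and your write-up fills in what the paper omits.

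A couple of small confirmations on the only delicate point, the straddling case. With $\alpha = st$, $s$ a nonempty suffix of $w$ (hence of $\calpha$) and $t$ a nonempty prefix of $\overline{u}$, you correctly derive $s = \overline{s}$: indeed $s$ is simultaneously the length-$|s|$ prefix of $\alpha$ (from $\alpha = st$) and the length-$|s|$ suffix of $\calpha$, and the latter is $\overline{\,\text{length-}|s|\text{ prefix of }\alpha\,}$. Then $\calpha = \overline{t}\,\overline{s} = \overline{t}\,s$ occurs in $u\alpha$ starting at position $|u|-(k-j)+1 \le |u|$, strictly before the $\alpha$ at position $|u|+1$, so any interpretation of ``precedes'' is violated. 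The length condition $|u| \ge |t|$ needed for $\overline{t}$ to fit as a suffix of $u$ is automatic since $t$ is a prefix of $\overline{u}$. Your reduction to right hairpin completion via $\overline{\LHC_\alpha(w)} = \RHC_\alpha(\overline{w})$ and the invariance of non-crossingness under $\overline{\,\cdot\,}$ is also sound and in the spirit of Proposition~\ref{prop:presentation_RH}.
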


We conclude this section with a characterization of a non-$\alpha$-crossing word in terms of minimal factors with respect to the language $\alpha \Sigma^* \cap \Sigma^* \calpha$. 
With Proposition~\ref{prop:nonx_HC}, this characterization will bring a unique factorization theorem (Theorem~\ref{thm:nonx_initial_once}) of any word $w$ in $\HC_\alpha^*(w_0)$ as $w = xw_0y$ for some words $x, y$. 

\begin{lemma}
	Let $\alpha \in \Sigma^k$ with $\alpha \neq \calpha$. 
	A word $w_0 \in \alpha\Sigma^* \cap \Sigma^*\calpha$ is non-crossing if and only if it contains exactly one minimal factor $v$ from $\alpha\Sigma^* \cap \Sigma^*\overline\alpha$.
\end{lemma}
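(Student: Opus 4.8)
### Proof proposal

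The plan is to prove the two directions separately and to make ``minimal factor from $\alpha\Sigma^*\cap\Sigma^*\calpha$'' concrete. Observe first that a factor $v$ of $w_0$ lies in $\alpha\Sigma^*\cap\Sigma^*\calpha$ exactly when $v$ starts at some occurrence of $\alpha$ in $w_0$ and ends at some occurrence of $\calpha$ in $w_0$ (with these two occurrences possibly overlapping). Such a $v$ is minimal precisely when the starting occurrence of $\alpha$ is the \emph{last} occurrence of $\alpha$ that begins at or before the position where $v$ ends in $\calpha$, and symmetrically the ending occurrence of $\calpha$ is the first one that ends at or after where $v$ begins; intuitively, $v$ ``hugs'' one occurrence of $\alpha$ on the left and one of $\calpha$ on the right with nothing trimmable in between.

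For the forward direction, assume $w_0$ is non-crossing, so the rightmost occurrence of $\alpha$ precedes (does not cross) the leftmost occurrence of $\calpha$. Then there is a distinguished factor $v$: take the one spanning from the rightmost $\alpha$ to the leftmost $\calpha$. Since $w_0\in\alpha\Sigma^*\cap\Sigma^*\calpha$, $w_0$ itself is in $\alpha\Sigma^*\cap\Sigma^*\calpha$, so at least one minimal factor exists inside $w_0$; I claim it must be this $v$. Indeed, any factor $v'\in\alpha\Sigma^*\cap\Sigma^*\calpha$ starts at some occurrence of $\alpha$, which — being an occurrence of $\alpha$ — begins at or before the rightmost occurrence of $\alpha$; and ends at some occurrence of $\calpha$, which ends at or after the leftmost occurrence of $\calpha$. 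Because the rightmost $\alpha$ precedes the leftmost $\calpha$, the sub-factor of $v'$ running from the rightmost $\alpha$ to the leftmost $\calpha$ is again in $\alpha\Sigma^*\cap\Sigma^*\calpha$ and is a factor of $v'$; minimality of $v'$ forces $v'=v$. Hence the minimal factor is unique.

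For the converse, I argue the contrapositive: if $w_0$ is $\alpha$-crossing, then $w_0$ contains at least two distinct minimal factors. Being crossing means some occurrence of $\calpha$ (in particular the leftmost) starts strictly before the rightmost occurrence of $\alpha$ ends. Using Lemma~\ref{lem:length_nonoverlap_apref_casuf}-style reasoning together with the fact that $w_0$ has $\alpha$ as a prefix and $\calpha$ as a suffix, one can locate two incomparable ``$\alpha$-to-$\calpha$'' spans: roughly, the span from the first $\alpha$ (the prefix) to the leftmost $\calpha$, and the span from the rightmost $\alpha$ to the last $\calpha$ (the suffix occurrence). When $w_0$ is crossing these two spans overlap only partially — neither contains the other as a factor — since the left span ends ``too early'' (at the leftmost $\calpha$, which is before the rightmost $\alpha$ ends) while the right span begins ``too late'' (at the rightmost $\alpha$, which is after the leftmost $\calpha$ starts). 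Each span can then be shrunk to a minimal factor, and the two resulting minimal factors are distinct because they sit in genuinely different positions of $w_0$. I would also need to dispatch the degenerate sub-case where $\alpha=\calpha$ is excluded by hypothesis, and the sub-cases where some of these occurrences coincide, but those either collapse to the non-crossing situation or still leave two separated minimal factors.

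The main obstacle is the converse direction, specifically the bookkeeping of overlapping occurrences of $\alpha$ and $\calpha$: ``crossing'' is a statement about positions, and I must turn it cleanly into the existence of two distinct minimal factors without hand-waving over the cases where an occurrence of $\alpha$ and an occurrence of $\calpha$ overlap heavily (which, since $\alpha\neq\calpha$ and $|\alpha|=|\calpha|=k$, cannot be a full overlap but can be partial). The cleanest route is probably to phrase everything in terms of the starting positions $p_1<\dots$ of $\alpha$-occurrences and the ending positions of $\calpha$-occurrences, define the minimal factor associated to a fixed right endpoint, and show that in the crossing case the minimal factor anchored at the leftmost $\calpha$ differs from the one anchored at the last $\calpha$ (the suffix). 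The forward direction and the reduction of ``minimal factor'' to the position-anchoring description are routine once stated carefully.
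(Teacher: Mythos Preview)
Your forward direction is correct and in fact cleaner than the paper's. You explicitly exhibit the unique minimal factor as the span from the rightmost $\alpha$ to the leftmost $\calpha$ and show every factor in $\alpha\Sigma^*\cap\Sigma^*\calpha$ contains it; the paper instead argues by contradiction, taking two hypothetical minimal factors and analysing their overlap (which must itself lie in $\alpha\Sigma^*\cap\Sigma^*\calpha$, contradicting minimality). Your route avoids a small case the paper glosses over (overlap of length $<k$), so it is a genuine, if minor, improvement.

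Your converse direction is essentially the paper's argument: both take the contrapositive and exhibit two minimal factors in the crossing case by choosing two separated $\alpha$--to--$\calpha$ spans. The paper picks an occurrence $\calpha_0$ preceding an occurrence $\alpha_1$, then uses the span ending at $\calpha_0$ and the span starting at $\alpha_1$; you use the span ending at the leftmost $\calpha$ and the span starting at the rightmost $\alpha$. These are the same idea with different anchors. Your proposed fix at the end --- anchor minimal factors at their right endpoint and compare the one ending at the leftmost $\calpha$ with the one ending at the suffix $\calpha$ --- is exactly the right way to make the distinctness airtight, since in the crossing case these two end positions must differ.

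One concrete misstep: drop the appeal to ``Lemma~\ref{lem:length_nonoverlap_apref_casuf}-style reasoning''. That lemma concerns non-crossing words and a length inequality between $u_{m-1}$ and $v_n$; it has no role in the crossing case. All you need here is that $w_0$ begins with $\alpha$ and ends with $\calpha$, so both anchoring spans are nonempty, and the definition of crossing, which forces the two resulting minimal factors to end at different positions. Once that spurious reference is removed and the endpoint-anchoring argument is written out, your proof is complete.
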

\begin{proof}
	Let us consider the contrapositive of the converse implication. 
	So, if $w_0$ is crossing, then we can find an occurrence of $\calpha$ (let us denote it by $\calpha_0$) which precedes an occurrence of $\alpha$ ($\alpha_1$). 
	$\calpha_0$ is guaranteed to be preceded by another occurrence of $\alpha$ ($\alpha_2$) because $w_0$ begins with $\alpha$. 
	Thus, the factor of $w_0$ that spans from $\alpha_2$ to $\calpha_0$ is a minimal factor from $\alpha \Sigma^* \cap \Sigma^* \calpha$. 
	By the same token, the factor of $w_0$ that spans from $\alpha_2$ to its right adjacent occurrence of $\calpha$ becomes another minimal factor. 

	In order to prove the direct implication, suppose that $w_0$ contains two minimal factors from $\alpha \Sigma^* \cap \Sigma^* \calpha$. 
	These two factors must overlap with each other because otherwise the suffix $\calpha$ of the first factor precedes the prefix $\alpha$ of the second one and $w$ would be crossing. 
	However, if they overlap, then the overlapped part would be in $\alpha \Sigma^* \cap \Sigma^* \calpha$, and this contradicts the minimality of the two factors. 
\end{proof}

\begin{theorem}\label{thm:nonx_initial_once}
	Let $\alpha \in \Sigma^k$ with $\alpha \neq \overline{\alpha}$, and $w_0 \in \alpha \Sigma^* \cap \Sigma^* \overline{\alpha}$ be a non-crossing word. 
	On any word in $\HC_\alpha^*(w_0)$, $w_0$ occurs exactly once as a factor. 
\end{theorem}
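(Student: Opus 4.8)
The plan is to prove the theorem by combining the closure property of Proposition~\ref{prop:nonx_HC} with the characterization of non-crossing words via minimal factors in the preceding lemma, and to control where $w_0$ can sit inside a word obtained by iterated hairpin completion by an induction on the number of hairpin completion steps. The key point is that hairpin completion only appends material at the \emph{ends} of a word (the complement of a prefix on the right, or the complement of a suffix on the left), so the "core" on which $\alpha$ first occurs and $\overline\alpha$ last occurs is never split or duplicated.

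First I would fix some word $w \in \HC_\alpha^*(w_0)$ and argue that $w_0$ occurs at least once: by a trivial induction, if $w \to_{\HC_\alpha}^* w'$ and $w_0$ is a factor of $w$ then $w_0$ is a factor of $w'$, since each step of $\to_{\HC_\alpha}$ only extends a word on the left or on the right without removing any letters; the base case $w = w_0$ is immediate. The substantive half is that $w_0$ occurs \emph{at most} once. Here I would use the lemma immediately preceding the theorem: since $w \in \HC_\alpha^*(w_0)$ is non-crossing by Proposition~\ref{prop:nonx_HC}, $w$ contains exactly one minimal factor $v$ from $\alpha\Sigma^*\cap\Sigma^*\overline\alpha$. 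Now $w_0$ itself lies in $\alpha\Sigma^*\cap\Sigma^*\overline\alpha$, so any occurrence of $w_0$ as a factor of $w$ must contain an occurrence of this unique minimal factor $v$ (every factor in $\alpha\Sigma^*\cap\Sigma^*\overline\alpha$ contains a minimal one, and there is only one available in $w$). Hence the occurrences of $w_0$ in $w$ are linearly ordered: they are exactly the factors of $w$ in $\alpha\Sigma^*\cap\Sigma^*\overline\alpha$ of length $|w_0|$ that are "built around" the fixed occurrence of $v$. If $w_0$ occurred twice, these two occurrences would have different left endpoints or different right endpoints, say they start at positions $p<q$; then the suffix of the first occurrence starting at $q$ is a proper factor of $w_0$ still containing $v$, hence lies in $\alpha\Sigma^*\cap\Sigma^*\overline\alpha$ and is a proper $\alpha$-prefix-to-$\overline\alpha$-suffix factor of $w_0$ — and I would derive the contradiction from this.

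To make the last step precise I would set it up as an induction on the length $\ell$ of a derivation $w_0 = z_0 \to_{\HC_\alpha} z_1 \to_{\HC_\alpha} \cdots \to_{\HC_\alpha} z_\ell = w$, carrying the stronger inductive hypothesis that $w_0$ occurs exactly once in $z_i$ and, moreover, that $z_i = x_i\, w_0\, y_i$ with $x_i$ a product of complements of $\overline\alpha$-suffixes and $y_i$ a product of complements of $\alpha$-prefixes (so that no occurrence of $\alpha$ strictly to the left of $w_0$ in $z_i$ can pair with an occurrence of $\overline\alpha$ strictly to the right and "hide" a copy of $w_0$). The step then amounts to checking: when $z_{i+1} = u'\,z_i$ is a left hairpin completion of $z_i = \alpha v' \overline\alpha \overline{u'}$, the newly added prefix $u'$ is itself (the complement of) an $\overline\alpha$-suffix of $z_i$, so by Corollary~\ref{cor:apref_casuff_transitive} / the factorization $z_i=x_iw_0y_i$ it is a product of complements of $\overline\alpha$-suffixes of $w_0$; and any new occurrence of $\alpha$ created at the junction between $u'$ and $z_i$ lies to the left of the (unique) occurrence of $w_0$, hence cannot by itself produce a second occurrence of $w_0$. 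The right hairpin completion case is symmetric via Proposition~\ref{prop:presentation_RH}.

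I expect the main obstacle to be the junction analysis in the inductive step: when we prepend $u'$ to $z_i$, a fresh occurrence of $\alpha$ (or $\overline\alpha$) may straddle the boundary between $u'$ and the old word, and I must argue such an occurrence never combines with existing material to yield a second factor equal to $w_0$. The cleanest way to dispatch this is not to track positions by hand but to reduce everything to the minimal-factor lemma: show that the unique minimal factor of $z_{i+1}$ from $\alpha\Sigma^*\cap\Sigma^*\overline\alpha$ coincides, as an occurrence, with that of $z_i$ (the prepended $u'$ contributes no occurrence of $\overline\alpha$ that is not already "closed off" to the left of $w_0$, because $u'$ being a complemented $\overline\alpha$-suffix means its own occurrences of $\overline\alpha$ are internal to a copy of $w_0$'s structure), and then conclude as above that any two occurrences of $w_0$ force a proper factor of $w_0$ in $\alpha\Sigma^*\cap\Sigma^*\overline\alpha$, contradicting $u_1 = v_1 = \lambda$ together with non-crossing-ness. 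Once the junction case is handled, the remainder is bookkeeping.
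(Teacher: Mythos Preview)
Your overall strategy coincides with the paper's: invoke Proposition~\ref{prop:nonx_HC} to see that every $w\in\HC_\alpha^*(w_0)$ is non-crossing, then use the preceding lemma to pin down occurrences of $w_0$ via the unique minimal factor of $w$ in $\alpha\Sigma^*\cap\Sigma^*\overline\alpha$. The ``at least once'' direction is fine. The gap is in how you close the ``at most once'' direction.

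You correctly observe that two hypothetical occurrences of $w_0$ in $w$ at positions $p<q$ overlap in a word $f$ that is simultaneously a proper prefix and a proper suffix of $w_0$ and lies in $\alpha\Sigma^*\cap\Sigma^*\overline\alpha$. But the contradiction you then name --- ``a proper factor of $w_0$ in $\alpha\Sigma^*\cap\Sigma^*\overline\alpha$, contradicting $u_1=v_1=\lambda$ together with non-crossing-ness'' --- is not a contradiction at all: any non-crossing $(m,n)$-word with $m,n\ge 2$ already has such proper factors (e.g.\ its own minimal factor $v_0$). What actually finishes the argument is one more application of the \emph{same} lemma, this time to $w_0$: the minimal factor $v$ of $w$ coincides (as a word) with the minimal factor $v_0$ of $w_0$, and since $w_0$ is non-crossing, $v_0$ occurs at exactly one position inside $w_0$. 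Hence any occurrence of $w_0$ in $w$ is anchored at the unique occurrence of $v$ in $w$ shifted by that fixed offset, so $p=q$. This is the step the paper's two-line proof is implicitly using; once you add it, the entire inductive ``junction analysis'' you outline becomes unnecessary and can be dropped.
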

\begin{proof}
	From the two facts that any word in $\HC_\alpha^*(w_0)$ is non-crossing (Proposition~\ref{prop:nonx_HC}) and that these words contain at least one occurrence of $w_0$ as a factor by definition of hairpin completion, we can reach this conclusion. 
\end{proof}

	\section{Iterated hairpin completion of non-crossing words}

This section contains the main contribution of this paper: characterizations of the regularity of iterated hairpin completion of a non-crossing $(m, n)$-word $w_0 \in \alpha \Sigma^* \cap \Sigma^* \calpha$ (recall that $\alpha \neq \calpha$ is assumed). 
Throughout this section, $w_0$ is thus assumed with $\Pref_\alpha(w_0) = \{u_1, \ldots, u_m\}$ and $\Suff_{\calpha}(w_0) = \{\overline{v_1}, \ldots, \overline{v_m}\}$. 

Let us begin with a proof that one-sided hairpin completion of a non-crossing word is regular (Theorem~\ref{thm:nonx_oneside_regular}). 
Then we will show that the iterated hairpin completion of a non-crossing $(m, 1)$-word for any $m \ge 1$ or $(2, 2)$-word is always regular (Theorems~\ref{thm:m1nonx_regular} and \ref{thm:22nonx_regular}). 
Using these results and combinatorial results shown in Section~\ref{sec:apref_casuf}, we characterize the set of all non-crossing $(3, 2)$-words whose iterated hairpin completion is regular, in terms of commutativity (Theorem~\ref{thm:32nonx_iff_regular}). 

\begin{theorem}\label{thm:nonx_oneside_regular}
	For a non-crossing word $w_0 \in \alpha \Sigma^* \cap \Sigma^* \overline{\alpha}$, both $\LHC_\alpha^*(w_0)$ and $\RHC_\alpha^*(w_0)$ are regular. 
\end{theorem}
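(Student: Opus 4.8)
The plan is to show that $\RHC_\alpha^*(w_0)$ is regular, since the claim for $\LHC_\alpha^*(w_0)$ then follows by Proposition~\ref{prop:presentation_RH} (the complement of a regular language is regular). By definition, every one-step right hairpin completion extends the current word on the right by the complement of one of its $\alpha$-prefixes, and by Theorem~\ref{thm:nonx_initial_once} (via Proposition~\ref{prop:nonx_HC}) the occurrence of $w_0$ is never destroyed; in particular, no newly created suffix $\overline{u}$ can contain $\alpha$ (it lies entirely to the right of the unique occurrence of $w_0$, hence to the right of the rightmost $\alpha$), so the set of available $\alpha$-prefixes never grows: every word in $\RHC_\alpha^*(w_0)$ has exactly the same $\alpha$-prefix set $\{u_1,\dots,u_m\}$ as $w_0$. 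Therefore $\RHC_\alpha^*(w_0) \subseteq w_0\{\overline{u_1},\dots,\overline{u_m}\}^*$, and the real task is to pin down exactly which words in this set are reachable.

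Next I would describe the reachable set explicitly. Writing a typical element as $w_0\overline{x_1}\,\overline{x_2}\cdots$, after the first step we may append any $\overline{u_i}$ with $u_i$ a proper $\alpha$-prefix, but once a suffix $\overline{x_1}\cdots\overline{x_j}$ has been appended, the word is $w_0\overline{x_1}\cdots\overline{x_j}$ and its $\alpha$-prefixes that can initiate the next step are those $u$ with $u\alpha \le_p w_0\overline{x_1}\cdots\overline{x_j}$. Here Lemma~\ref{lem:aprefix_prefix} is the key structural tool: it says any such $u$ has the form $x_1\cdots x_i z$ for some $i$ and some $z \in \{u_1,\dots,u_{\ind(x_{i+1})-1}\}$ (after identifying the appended blocks with $\alpha$-prefixes of $w_0$ in reverse, which is legitimate since $\alpha \le_p x_j\cdots x_1\alpha \le_p w_0$ by Corollary~\ref{cor:apref_casuff_transitive} read from the right). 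Concretely, this means that from $w_0\overline{x_1}\cdots\overline{x_j}$ with $x_j = u_\ell$ one can only append $\overline{u_p}$ for $p \le \ell$, so the sequence of indices of appended blocks is weakly decreasing and stabilises; more precisely the reachable language is generated by a finite set of rules "after block $u_\ell$ you may append block $u_p$ with $p \le \ell$" together with the provision that $u_m$ itself may or may not be appendable at the first step (depending on overlap between the rightmost $\alpha$ and the suffix $\overline{\alpha}$). All of this is governed by finitely many cases, so $\RHC_\alpha^*(w_0)$ is a finite union of sets of the form $w_0 \overline{u_{i_1}}\cdots\overline{u_{i_r}}\,\overline{u_{p}}^{\,*}$-style expressions — in any event a language recognised by a finite automaton whose states record only the index of the last appended block.

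To finish I would make the automaton explicit: states $0,1,\dots,m$ (plus a start state), state $\ell$ meaning "the last appended block was $u_\ell$"; from the start state read $w_0$ (as a fixed string) into state $m$; from state $\ell$, for each $p \le \ell$ with $u_p \ne \lambda$, read $\overline{u_p}$ into state $p$; all states are accepting. One must check two finiteness/consistency points: first, that $\overline{u_p}$ being an appendable extension from a word ending in $\overline{u_\ell}$ depends only on $\ell$ and $p$ (not on the full history) — this follows from Lemma~\ref{lem:aprefix_prefix} because whether $u_p\alpha \le_p x_1\cdots x_i\alpha\cdots$ is a purely local prefix condition among $\alpha$-prefixes of $w_0$; and second, that the first-step subtlety about $\overline{u_m}$ (the overlap issue flagged after Lemma~\ref{lem:length_nonoverlap_apref_casuf}) only affects whether the edge $m \to m$ exists, which is a fixed yes/no. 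The main obstacle I expect is the bookkeeping in the second step: proving rigorously that "reachable" coincides with "index sequence weakly decreasing after the first block," i.e. that the local rule of Lemma~\ref{lem:aprefix_prefix} does not allow some clever re-expansion that appends a longer block later; handling the case $m=1$ (where $\RHC_\alpha^*(w_0)=\{w_0\}$ trivially) and the edge cases where several $u_i$ coincide or equal $\lambda$ cleanly is where the argument needs care, but none of it requires more than the lemmas already established.
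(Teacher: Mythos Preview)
Your central claim is wrong. You assert that from $w_0\overline{x_1}\cdots\overline{x_j}$ with $x_j = u_\ell$ one can only append $\overline{u_p}$ with $p \le \ell$, so that the sequence of appended indices is weakly decreasing. This is a misapplication of Lemma~\ref{lem:aprefix_prefix}: that lemma decomposes $\alpha$-prefixes of a word of the shape $x_1\cdots x_s\alpha$, but the word whose $\alpha$-prefixes matter here is $w_0\overline{x_1}\cdots\overline{x_j}$, and these are not the same situation. Since $w_0$ is non-crossing, the tail $\overline{\alpha}\,\overline{x_1}\cdots\overline{x_j}$ contains no occurrence of $\alpha$, so the $\alpha$-prefixes of $w_0\overline{x_1}\cdots\overline{x_j}$ are exactly $u_1,\dots,u_m$, regardless of which block was appended last. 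In particular, from $w_0\overline{u_2}$ one may append $\overline{u_m}$ in a single step; your automaton, which forbids the transition from state $2$ to state $m$, would reject the reachable word $w_0\overline{u_2}\,\overline{u_m}$. The parenthetical about ``identifying the appended blocks with $\alpha$-prefixes of $w_0$ in reverse'' does not help: it is precisely because all occurrences of $\alpha$ live inside the fixed factor $w_0$ that there is \emph{no} suffix-side constraint on which $u_i$ may trigger the next right completion.

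Your overall plan (observe that the $\alpha$-prefix set never grows, then build a finite automaton) is sound and can be repaired; the correct automaton is in fact simpler than you propose, since once the word has been extended at all every $u_i$ is available and the tail language is essentially $\{\overline{u_2},\dots,\overline{u_m}\}^*$ modulo a finite first-step restriction on $\overline{u_m}$. The paper, however, bypasses the explicit construction entirely: from the same observation that the longest $\alpha$-prefix never grows it concludes that $\RHC_\alpha^*(w_0)$ coincides with the iterated \emph{bounded} hairpin completion of $w_0$, and regularity then follows at once from the closure result of~\cite{Kopecki10}.
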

\begin{proof}
	First, we prove the regularity of $\RHC_\alpha^*(w_0)$. 
	Let $w$ be an $\alpha$-prefix of $w_0$. 
	A right hairpin completion of $w_0$ can produce $w_0\overline{w}$. 
	Note that the suffix $\bar{\alpha}\overline{w}$ of this resulting word does not contain $\alpha$ due to the non-crossing assumption on $v$, and this means that the longest $\alpha$-prefix of $w_0\overline{w}$ is the same as that of $w_0$. 
	Thus, the language $\RHC_\alpha^*(w_0)$ can be obtained by iterated bounded hairpin completion from $v$, and hence, is regular \cite{Kopecki10}. 

	For the regularity of $\LHC_\alpha^*(w_0)$, it suffices to observe that $\overline{w_0}$ is also non-crossing. 
	Using the result just proved, $\RHC_\alpha^*(\overline{w_0})$ is regular, and according to Proposition~\ref{prop:presentation_RH}, $\LHC_\alpha^*(w_0) = \overline{\RHC_\alpha^*(\overline{w_0})}$. 
	Note that the class of regular languages is closed under $\bar{\hspace{2mm}}$. 
\end{proof}

	\subsection{Iterated hairpin completion of $(m, 1)$ non-crossing words}

In this subsection, we consider the case $n = 1$ ($w_0$ is an $(m, 1)$-word), and prove that $\HC_\alpha^*(w_0)$ is regular. 
For $m = 1$, it is easy to see that hairpin completion cannot generate any word but $w_0$, that is, $\HC_\alpha^*(w_0) = \{w_0\}$. 
Hence, we assume $m \ge 2$. 

Lemma~\ref{lem:length_nonoverlap_apref_casuf} means that right hairpin completion can extend $w_0$ to the right by any of $\overline{u_1}, \overline{u_2}, \ldots, \overline{u_{m-1}}$, 
In contrast, the operation can extend $w_0$ to the right by $\overline{u_m}$ if and only if $|u_m| + 2|\alpha| + |v_1| \ge |w_0|$, i.e., the $\alpha$ to the right of $u_m$ does not overlap with the suffix $\calpha$ of $w_0$. 
As a result, if $m = 2$ but this inequality does not hold, then $\HC_\alpha^*(w_0) = \{w_0\}$. 
Therefore, we can advance our discussion on the assumption that $w_0 \to_{\RHC} w_0 \overline{u_2}$ is valid. 

Note that $w_0 \overline{u_2}$ is a non-crossing $(m, 2)$-word. 
Applying Lemma~\ref{lem:length_nonoverlap_apref_casuf} to this word, we can see that $|u_m| + 2|\alpha| < |w_0 \overline{u_2}|$. 
Hence, hairpin completion can extend $w_0 \overline{u_2}$ further to the right by not only by any of $\overline{u_1}, \overline{u_2}, \ldots, \overline{u_{m-1}}$ but also by $\overline{u_m}$. 

Let us define the following regular language: 
\[
\begin{array}{ll}
	R_{m1}(w_0) = \{w_0\} \cup \bigl\{x_s \cdots x_1 w_0 \overline{y_1} \ \overline{y_2} \cdots \overline{y_t} \bigm|
		& y_1 \in \begin{cases}
			\{u_1, \ldots, u_{m-1}, u_m\} 	& \text{if $|u_m| + 2|\alpha| \le |w_0|$} \\
			\{u_1, \ldots, u_{m-1}\}	& \text{otherwise} 
			\end{cases} \\
		& s \ge 0, t \ge 1, x_s, \ldots, x_1, y_2, \ldots, y_t \in \{u_1, \ldots, u_m\}, \\
		& \mbox{and } \max_{1 \le i \le s}\{\ind(x_i)\} \le \max_{1 \le j \le t}\{\ind(y_j)\} \bigr\}. \\
\end{array}
\]
We claim that this language is the language obtained from $w_0$ by iterated hairpin completion. 

First, we prove that $\HC_\alpha^*(w_0) \supseteq R_{m1}(w_0)$. 
Let $w \in R_{m1}(w_0)$. 
By definition, any word in $R_{m1}(w_0)$ can be factorized as $w = x_s \cdots x_1 w_0 \overline{y_1} \ \overline{y_2} \cdots \overline{y_t}$. 
Compare the leftmost factor $x_s$ and the complement of the rightmost factor $\overline{y_t}$ with respect to their index. 
Assume that $\ind(x_s) \le \ind(y_t)$. 
Then $w \ge_s \bar{\alpha} \overline{y_t} \ge_s \bar{\alpha} \overline{x_s}$. 
Hence, one-step left hairpin completion can derive $w$ from the word $x_{s-1} \cdots x_1 w_0 \overline{y_1} \cdots \overline{y_t}$. 
In the case when $\ind(x_1) > \ind(y_t)$, the same argument implies that $w \in \RHC_\alpha(x_s \cdots x_1 w_0 \overline{y_1} \cdots \overline{y_{t-1}})$. 
Due to $\max_{1 \le i \le s}\{\ind(x_i)\} \le \max_{1 \le j \le t}\{\ind(y_j)\}$, the repetition of this process eventually reduces $w_0$ into a word $w_0 \overline{y_1} \cdots \overline{y_j}$ for some $1 \le j \le t$. 
Because of the condition on $y_1$ and our discussion above, $w_0 \to_{\RHC} w_0 \overline{y_1} \to_{\RHC} \cdots \to_{\RHC} w_0 \overline{y_1} \cdots \overline{y_j}$ is valid. 
Thus, $w \in \HC_\alpha^*(w_0)$. 

Secondly, we prove the opposite inclusion by induction on the length of derivation by hairpin completion. 
Clearly $w_0 \in L(w_0)$. 
Let us assume that a word in $\HC_\alpha^*(w_0)$ can be written as $x_s \cdots x_1 w_0 y_1 \cdots y_t$ with $\max_{1 \le i \le s}\{\ind(x_i)\} \le \max_{1 \le j \le t}\{\ind(y_j)\}$. 
Let $j = \max_{1 \le j \le t}\{\ind(y_j)\}$. 
If left hairpin completion extends this word to the left by $x$, then $\bar{\alpha} \overline{y_1} \cdots \overline{y_t} \ge_s \bar{\alpha}\bar{x}$ and this means $x \in \{u_1, \ldots, u_j\}^+$ (see Lemma~\ref{lem:aprefix_prefix}). 
Thus, there exist $x_{s'}, \ldots, x_{s+1} \in \{u_1, \ldots, u_j\}$ such that $x = x_{s'} \cdots x_{s+1}$ and $\max\{\ind(x_{s'}), \ldots, \ind(x_{s+1}), \ind(x_s), \ldots, \ind(x_1)\} \le j$. 
It it trivial that this inequality remains valid in the right hairpin completion.

\begin{theorem}\label{thm:m1nonx_regular}
	For any $m \ge 1$ and a non-crossing $(m, 1)$ word $w_0 \in \alpha \Sigma^* \overline{\alpha}$, the language $\HC_\alpha^*(w_0)$ is regular. 
\end{theorem}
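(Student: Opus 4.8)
The plan is to prove Theorem~\ref{thm:m1nonx_regular} by establishing that $\HC_\alpha^*(w_0) = R_{m1}(w_0)$, since $R_{m1}(w_0)$ is manifestly a regular language (it is described by a star over a finite set of factors on the left and right of $w_0$, subject to a condition on maximum indices that only needs a bounded amount of state to track). The bulk of the text preceding the theorem statement already carries out both inclusions, so the proof proper is mostly a matter of assembling the pieces: first dispose of the degenerate cases ($m=1$, and $m=2$ with the rightmost $\alpha$ overlapping the suffix $\calpha$), where $\HC_\alpha^*(w_0)=\{w_0\}$ is trivially regular; then invoke the two inclusion arguments for the generic case.

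For the inclusion $R_{m1}(w_0) \subseteq \HC_\alpha^*(w_0)$, the idea is to take a word $w = x_s \cdots x_1 w_0 \overline{y_1}\cdots\overline{y_t} \in R_{m1}(w_0)$ and peel off its outermost factor one at a time, always removing whichever of $x_s$ or $\overline{y_t}$ has the larger index; Corollary~\ref{cor:apref_casuff_transitive} (via Proposition~\ref{prop:aprefix_basic}) guarantees that the appropriate prefix/suffix of the shortened word is long enough to support the single hairpin-completion step that reintroduces the removed factor. The condition $\max_i \ind(x_i) \le \max_j \ind(y_j)$ is exactly what ensures we never get stuck: we can always exhaust the left side before (or together with) the right side, arriving at $w_0 \overline{y_1}\cdots\overline{y_j}$, and then the side condition on $y_1$ together with Lemma~\ref{lem:length_nonoverlap_apref_casuf} (and its consequence that $w_0\overline{y_1}$ is a non-crossing $(m,2)$-word to which the lemma reapplies) ensures that $w_0 \to_{\RHC}^* w_0\overline{y_1}\cdots\overline{y_j}$ is legitimate.

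For the reverse inclusion $\HC_\alpha^*(w_0) \subseteq R_{m1}(w_0)$, the plan is induction on the number of hairpin-completion steps. The base case $w_0 \in R_{m1}(w_0)$ is immediate. For the inductive step, any word reachable in one more step is obtained from some $x_s\cdots x_1 w_0 \overline{y_1}\cdots\overline{y_t} \in R_{m1}(w_0)$ by a left or right hairpin completion; Lemma~\ref{lem:aprefix_prefix} (in the left case) and the non-crossing structure (in the right case, where the suffix $\bar\alpha\overline{y_1}\cdots\overline{y_t}$ contains no $\alpha$ beyond what comes from the $\overline{y_i}$'s, so the newly added block is again a product of $u_i$'s whose indices are bounded by $\max_j\ind(y_j)$) show that the extension factorizes into blocks from $\{u_1,\ldots,u_m\}$ respecting the max-index constraint, keeping us inside $R_{m1}(w_0)$.

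The main obstacle I anticipate is the bookkeeping around the boundary case of $\overline{u_m}$ and $u_m$: a right hairpin completion extending $w_0$ by $\overline{u_m}$ is only possible when $|u_m|+2|\alpha| \le |w_0|$, whereas once any extension has occurred the resulting word is long enough that $\overline{u_m}$ (and $u_m$ on the left) becomes available. This is precisely why the definition of $R_{m1}(w_0)$ singles out $y_1$ with a case distinction while placing no restriction on $y_2,\ldots,y_t$ or on any $x_i$; the delicate point in both inclusions is to verify that this asymmetry is faithfully matched by what hairpin completion can and cannot do at the very first step. Checking that the max-index condition is genuinely preserved under every left step — in particular that introducing new left blocks $x_{s'},\ldots,x_{s+1}$ does not raise $\max_i\ind(x_i)$ above $\max_j\ind(y_j)$ — rests on Lemma~\ref{lem:aprefix_prefix} telling us those new blocks all have index at most $\max_j\ind(y_j)$, and this is the one spot where the argument must be stated carefully rather than waved through.
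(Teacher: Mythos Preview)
Your proposal is correct and follows exactly the paper's approach: establish $\HC_\alpha^*(w_0)=R_{m1}(w_0)$ by the two inclusions sketched in the discussion preceding the theorem, with Lemma~\ref{lem:aprefix_prefix} doing the factorization work and the case split on $y_1$ handling the only delicate first step. One small correction to your parenthetical on the right-hairpin case of the induction: the newly appended right blocks are \emph{not} bounded in index by $\max_j\ind(y_j)$ --- the relevant $\alpha$'s for a right completion lie in the \emph{prefix} $x_s\cdots x_1 u_m\alpha$, so the new blocks can have index up to $m$ --- but this is harmless, since enlarging the right-hand maximum only makes the constraint $\max_i\ind(x_i)\le\max_j\ind(y_j)$ easier to satisfy, which is exactly why the paper dismisses this direction as trivial.
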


The key idea in the above discussion is that if a word in $\HC_\alpha^*(w_0)$ begins with the longest $\alpha$-prefix $u_m$ of $w_0$, then hairpin completion can extend it to the right by any of $\alpha$-prefix of $w_0$. 
This idea has a broader range of applications. 
Let $w_0 \in \alpha \Sigma^* \cap \Sigma^* \calpha$ be a non-crossing $(m, n)$-word for some $m, n \ge 1$ with $\Pref_\alpha(w_0) = \{u_1, \ldots, u_m\}$ and $\Suff_{\calpha}^*(w_0) = \{\overline{v_1}, \ldots, \overline{v_n}\}$. 
Proposition~\ref{prop:mirror} says that if $u_m = v_n$, then $\Suff_{\calpha}^*(w_0) = \overline{\Pref_\alpha^*(w_0)}$. 
For $m \ge 2$, the rightmost occurrence of $\alpha$ on $w_0$ does not overlap with the suffix $\calpha$ of $w_0$ (Lemma~\ref{lem:length_nonoverlap_apref_casuf}). 
Thus, $\HC_\alpha^*(w_0) = \{u_1, \ldots, u_m\}^* w_0 \{\overline{u_1}, \ldots, \overline{u_m}\}^*$. 

\begin{corollary}\label{cor:mn_palindrome-like_regular}
	Let $w_0 \in \alpha \Sigma^* \cap \Sigma^* \calpha$ be a non-crossing $(m, n)$-word. 
	If $u_m = v_n$, then $\HC_\alpha^*(w_0)$ is regular. 
\end{corollary}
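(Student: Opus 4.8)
The plan is to reduce Corollary~\ref{cor:mn_palindrome-like_regular} to the facts already assembled just before its statement, so the ``proof'' is mostly a matter of organizing those observations carefully. First I would dispose of the degenerate cases. If $m=1$ (equivalently $n=1$, since $u_m=v_n$ forces $m=n$ by Proposition~\ref{prop:mirror}), then $\Pref_\alpha(w_0)=\{\lambda\}=\Suff_{\calpha}(w_0)$ after the standard normalization $u_1=v_1=\lambda$, so $\alpha$ occurs exactly once in $w_0$ as a prefix and $\calpha$ once as a suffix; one checks that no hairpin completion applies (the single $\alpha$ and single $\calpha$ must overlap, or there is simply nothing to bind), hence $\HC_\alpha^*(w_0)=\{w_0\}$, which is trivially regular. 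So assume $m=n\ge 2$.

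Next I would record the structural consequences. By Proposition~\ref{prop:mirror}, $u_m=v_n$ gives $u_i=v_i$ for all $1\le i\le m$, hence $\Suff_{\calpha}(w_0)=\overline{\Pref_\alpha(w_0)}$, i.e.\ the multiset of $\calpha$-suffixes is exactly the complement-mirror of the multiset of $\alpha$-prefixes. Since $m,n\ge 2$, Corollary~\ref{cor:1st_step_mn_ge2} (together with Lemma~\ref{lem:length_nonoverlap_apref_casuf}, which guarantees the relevant occurrences do not overlap the boundary primers) tells us that one step of hairpin completion from $w_0$ can append $\overline{u_i}$ on the right or prepend $v_i=u_i$ on the left, for \emph{every} $i$ up to $m$ — crucially including the longest one $u_m$. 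This is exactly the ``the longest $\alpha$-prefix being producible unlocks all of them'' phenomenon highlighted after Theorem~\ref{thm:m1nonx_regular}, and here it holds from the very first step on both sides.

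The heart of the argument is then the claimed identity $\HC_\alpha^*(w_0)=\{u_1,\ldots,u_m\}^*\,w_0\,\{\overline{u_1},\ldots,\overline{u_m}\}^*$. For the inclusion $\supseteq$, I would argue that any word $x_s\cdots x_1\,w_0\,\overline{y_1}\cdots\overline{y_t}$ with all $x_i,y_j\in\{u_1,\ldots,u_m\}$ lies in $\HC_\alpha^*(w_0)$: by Corollary~\ref{cor:apref_casuff_transitive} every prefix $y_1\cdots y_j$ of the right block is again an $\alpha$-prefix-composite with $\alpha$ as a prefix of $y_1\cdots y_j\alpha$, and since $u_m=v_n$ each such composite equals some $\overline{v}$-type suffix witness, so right hairpin completion can append the $\overline{y_j}$'s one at a time; symmetrically (via Proposition~\ref{prop:presentation_RH} applied to $\overline{w_0}$, which is again non-crossing with $\overline{u_m}=\overline{v_n}$) the left block can be built up. Here the absence of overlap obstructions for all intermediate words follows because appending any $\overline{y}$ or prepending any $y$ keeps the word a non-crossing $(m,m')$-word with $m'\ge 2$, so Lemma~\ref{lem:length_nonoverlap_apref_casuf} keeps applying. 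For the inclusion $\subseteq$ I would induct on the derivation length: by Theorem~\ref{thm:nonx_initial_once}, $w_0$ occurs exactly once in any $w\in\HC_\alpha^*(w_0)$, so $w=x\,w_0\,y$ uniquely; a left hairpin completion appends to the left the complement of some $\calpha$-suffix of the current word, and by Lemma~\ref{lem:apref_casuf_apref} every such suffix-complement decomposes as a concatenation of elements of $\Pref_\alpha(w_0)\cup\overline{\Suff_{\calpha}(w_0)}=\{u_1,\ldots,u_m\}$ (the two sets coincide here!), so the new left part is again in $\{u_1,\ldots,u_m\}^*$; the right side is symmetric. The point that makes this clean — and what distinguishes this case from the general $(m,n)$-word — is that because $u_m=v_n$ there is no $\maxind$ side-condition coupling the two blocks: having produced $u_m$ (hence $\overline{u_m}=\overline{v_m}$) on either side once, nothing is ever blocked again, so the two blocks vary independently over $\{u_1,\ldots,u_m\}^*$.

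The main obstacle I anticipate is the bookkeeping in the $\subseteq$ direction: one must be sure that a single hairpin-completion step never introduces a factor from $\{u_1,\dots,u_m\}^*$ that ``straddles'' the $w_0$ in the middle (i.e.\ that the unique-occurrence factorization $w=x\,w_0\,y$ is respected by the decomposition), and that the suffix/prefix being copied really is fully governed by Lemma~\ref{lem:apref_casuf_apref} rather than interacting with $w_0$'s interior. Proposition~\ref{prop:nonx_HC} and Theorem~\ref{thm:nonx_initial_once} are exactly what pin this down, so I expect the write-up to be short; the only mild care needed is the base-case overlap check and confirming $\{u_1,\dots,u_m\}=\Pref_\alpha(w_0)\cup\overline{\Suff_{\calpha}(w_0)}$ under the hypothesis, after which regularity is immediate since $\{u_1,\ldots,u_m\}^*\,w_0\,\{\overline{u_1},\ldots,\overline{u_m}\}^*$ is plainly regular.
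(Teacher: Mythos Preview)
Your approach coincides with the paper's: both establish the identity $\HC_\alpha^*(w_0)=\{u_1,\ldots,u_m\}^*\,w_0\,\{\overline{u_1},\ldots,\overline{u_m}\}^*$, which the paper asserts in the paragraph immediately preceding the corollary with essentially no further argument. Your $\subseteq$ direction via Theorem~\ref{thm:nonx_initial_once} and Lemma~\ref{lem:apref_casuf_apref} is sound (the key point being that non-crossingness forces every $\alpha$-prefix of the current word to satisfy $z\alpha\le_p x_s\cdots x_1 u_m\alpha$, so Lemma~\ref{lem:aprefix_prefix} applies).

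There is, however, a genuine gap in your $\supseteq$ argument. You build the right block first (correct: each intermediate word begins with $w_0$, so every $y_j\in\Pref_\alpha(w_0)$ remains an $\alpha$-prefix), and then invoke symmetry via Proposition~\ref{prop:presentation_RH} to build the left block. But that proposition only says the left block can be built \emph{starting from $w_0$}; it does not say it can be built starting from $w_0\overline{y_1}\cdots\overline{y_t}$. And in general it cannot. Take $\alpha=ab$, $u_2=v_2=abc$, $u_3=v_3=abcabd$ over an alphabet where $\bar a,\bar b,\ldots$ are fresh letters. Then from $w_0\overline{u_2}$ one cannot prepend $u_3$ by left hairpin completion: one would need $u_3\alpha=abcabdab$ to be a prefix of $u_2\,\overline{w_0}$, but $u_2\,\overline{w_0}$ begins with $abc\cdot u_3\alpha=abcabcab\ldots$, and these differ at the sixth letter. (The target $u_3 w_0\overline{u_2}$ \emph{is} reachable, via $w_0\to_{\LHC} u_3 w_0\to_{\RHC} u_3 w_0\overline{u_2}$; the point is only that a fixed ``one side first'' order fails.)

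The fix is precisely the outside-in strategy already used for $R_{m1}(w_0)$ in the paper. To derive $x_s\cdots x_1 w_0\overline{y_1}\cdots\overline{y_t}$, compare $|x_s|$ with $|y_t|$. If $|x_s|\le|y_t|$ then $x_s\alpha\le_p y_t\alpha$ (both are prefixes of $w_0$), hence $\calpha\,\overline{x_s}$ is a suffix of $\calpha\,\overline{y_t}$, which in turn is a suffix of $x_{s-1}\cdots x_1 w_0\overline{y_1}\cdots\overline{y_t}$ (the word preceding $\overline{y_t}$ ends in $\calpha$ by Proposition~\ref{prop:aprefix_basic}); so that shorter word left-hairpin-completes to the target. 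If $|x_s|>|y_t|$, the symmetric step peels off $\overline{y_t}$ instead. Induct on $s+t$. This is exactly where your observation that ``there is no $\maxind$ side-condition'' pays off: because $\{u_i\}=\{v_i\}$, the comparison can always peel a factor from \emph{some} side, and nothing obstructs the recursion down to $w_0$.
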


	\subsection{Iterated hairpin completion of $(2, 2)$ non-crossing words}

In contrast to the result obtained in the previous subsection, Example~\ref{ex:nonx_cs} shows that there exists an $(m, 2)$ non-crossing word whose iterated hairpin completion is non-regular with $m = 3$. 
This result motivates the study of $(2, 2)$ non-crossing words reported here. 
Let $w_0 \in \alpha \Sigma^* \cap \Sigma^* \calpha$ be a non-crossing $(2, 2)$-word. 
We can employ Corollary~\ref{cor:1st_step_mn_ge2} to see that $\HC_\alpha(w_0) = \{w_0, v_2 w_0, w_0 \overline{u_2}\}$. 
This further implies that the suffix $\calpha$ of any word in $\HC_\alpha^*(w_0)$ can bind with the second $\alpha$ on the (unique) factor $w_0$ on the word for right hairpin completion. 

Let us define the following regular language: 
\[
	R_{22L} = v_2^* (v_2 w_0) \overline{v_2}^* \cup (v_2^+ u_2)^* v_2^* (v_2 w_0) \overline{v_2}^* (\overline{u_2} \ \overline{v_2}^+)^+.
\]
We will show that this language is exactly the set of words obtained by iterated hairpin completion from $v_2 w_0$. 

In order to prove that $\HC_\alpha^*(v_2 w_0) \supseteq R_{22L}$, it suffices to present the following process: 
\begin{eqnarray*}
	v_2 w_0 &\to_{\RHC}^*& 	v_2 w_0 \overline{v_2}^{j_0} \\
		&\to_{\RHC}& 	v_2 w_0 \overline{v_2}^{j_0} \overline{u_2} \ \overline{v_2} \\
		&\to_{\RHC}^*& 	v_2 w_0 \overline{v_2}^{j_0} \overline{u_2} \ \overline{v_2}^{j_1} \\ 
		&\to_{\RHC}& 	v_2 w_0 \overline{v_2}^{j_0} \overline{u_2} \ \overline{v_2}^{j_1} \overline{u_2} \ \overline{v_2} \\
		&\to_{\RHC}^*& 	v_2 w_0 \overline{v_2}^{j_0} \overline{u_2} \ \overline{v_2}^{j_1} \cdots \overline{u_2} \ \overline{v_2}^{j_{t-1}} \overline{u_2} \ \overline{v_2} \\
		&\to_{\LHC}^*& 	v_2^{i_0} v_2 w_0 \overline{v_2}^{j_0} \overline{u_2} \ \overline{v_2}^{j_1} \cdots \overline{u_2} \ \overline{v_2}^{j_{t-1}} \overline{u_2} \ \overline{v_2} \\ 
		&\to_{\LHC}& 	v_2 u_2 v_2^{i_0} v_2 w_0 \overline{v_2}^{j_0} \overline{u_2} \ \overline{v_2}^{j_1} \cdots \overline{u_2} \ \overline{v_2}^{j_{t-1}} \overline{u_2} \ \overline{v_2} \\
		&\to_{\LHC}^*& 	v_2^{i_1} u_2 v_2^{i_0} v_2 w_0 \overline{v_2}^{j_0} \overline{u_2} \ \overline{v_2}^{j_1} \cdots \overline{u_2} \ \overline{v_2}^{j_{t-1}} \overline{u_2} \ \overline{v_2} \\
		&\to_{\LHC}^*& 	v_2^{i_s} u_2 \cdots v_2^{i_1} u_2 v_2^{i_0} v_2 w_0 \overline{v_2}^{j_0} \overline{u_2} \ \overline{v_2}^{j_1} \cdots \overline{u_2} \ \overline{v_2}^{j_{t-1}} \overline{u_2} \ \overline{v_2} \\
		&\to_{\RHC}^*& 	v_2^{i_s} u_2 \cdots v_2^{i_1} u_2 v_2^{i_0} v_2 w_0 \overline{v_2}^{j_0} \overline{u_2} \ \overline{v_2}^{j_1} \cdots \overline{u_2} \ \overline{v_2}^{j_{t-1}} \overline{u_2} \ \overline{v_2}^{j_t}.
\end{eqnarray*}

Next, we prove the opposite inclusion by induction on the length of derivation by hairpin completion from $v_2 w_0$. 
Obviously, $v_2 w_0 \subseteq R_{22L}$. 
Assume that all words obtained from $v_2 w_0$ by at most $n$-times hairpin completion are in $R_{22L}$. 
Let $w_n$ be such a word and consider a word $w_{n+1}$ such that $w_n \to_{\HC}~w_{n+1}$. 
Consider the case when this hairpin completion is right one. 
The rightmost occurrence of $\alpha$ on $w_n$ is the second $\alpha$ on its (unique) factor $w_0$. 
Therefore, if we let $w_{n+1} = w_n \overline{x}$ and then $x \alpha \le_p (v_2^+ u_2)^*v_2^*v_2u_2$. 
Since $u_2$ and $\overline{v_2}$ are the respective shortest nonempty $\alpha$-prefix and $\calpha$-suffix of $w_0$, Lemma~\ref{lem:apref_casuf_apref} implies that $x \in (v_2^+ u_2)^* v_2^*$. 
Note that $R_{22L}$ is closed under catenating a word in $\overline{(v_2^+ u_2)^* v_2^*}$ to the right. 
Thus, $w_{n+1} \in R_{22L}$. 
The case when $w_n \to_{\LHC} w_{n+1}$ can be proved in a symmetric manner. 

Due to the symmetry of $u_2$ and $\overline{v_2}$, we can easily construct a regular language $R_{22R}$ which is equivalent to $H_\alpha^*(w_0 \overline{u_2})$. 
Now the regularity of $\HC_\alpha^*(w_0)$ has been proved.

\begin{theorem}\label{thm:22nonx_regular}
	For a $(2, 2)$ non-crossing word $w_0 \in \alpha \Sigma^* \overline{\alpha}$, the language $\HC_\alpha^*(w_0)$ is regular. 
\end{theorem}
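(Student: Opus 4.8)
The plan is to reduce the regularity of $\HC_\alpha^*(w_0)$ for a non-crossing $(2,2)$-word to the regularity of its two ``half-completions'', following the pattern already used for the $(m,1)$ case. By Corollary~\ref{cor:1st_step_mn_ge2}, one step of hairpin completion from $w_0$ yields exactly $\{w_0, v_2 w_0, w_0\overline{u_2}\}$, and by Theorem~\ref{thm:nonx_initial_once} every word reachable from $w_0$ contains $w_0$ as a unique factor. The key observation, which I would isolate first, is that on any word $w \in \HC_\alpha^*(w_0)$ the rightmost occurrence of $\alpha$ is the second $\alpha$ of this unique copy of $w_0$ (because $w_0$ is non-crossing and its $\overline{\alpha}$-suffix, hence everything right of the second $\alpha$ on $w_0$ and any complemented material appended afterwards, is $\alpha$-free); symmetrically the leftmost $\overline{\alpha}$ is the first $\overline{\alpha}$ of $w_0$. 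Consequently every right-completion step appends the complement of a word $x$ with $x\alpha \le_p$ some product of copies of $u_2$ and $\overline{v_2}$, and Lemma~\ref{lem:apref_casuf_apref} (together with the fact, to be noted, that $u_2$ and $\overline{v_2}$ are the shortest nonempty $\alpha$-prefix and $\overline{\alpha}$-suffix of $w_0$) forces $x \in (v_2^+ u_2)^* v_2^*$; symmetrically left-completion prepends words from $(v_2^+ u_2)^* v_2^*$.

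With this machinery in place, I would argue $\HC_\alpha^*(w_0) = \{w_0\} \cup \HC_\alpha^*(v_2 w_0) \cup \HC_\alpha^*(w_0 \overline{u_2})$, since after the first nontrivial step the derivation never again ``touches'' the other side: once $v_2$ has been prepended the word begins with $v_2 w_0$ forever (and symmetrically for $w_0\overline{u_2}$), so the three branches do not interact and can be analyzed independently. It therefore suffices to show $\HC_\alpha^*(v_2 w_0)$ is regular; the claim $\HC_\alpha^*(v_2 w_0) = R_{22L}$ is exactly what the two inclusions sketched in the excerpt establish. The forward inclusion $R_{22L} \subseteq \HC_\alpha^*(v_2 w_0)$ is witnessed by the explicit derivation already displayed: pump $\overline{v_2}$'s on the right, insert $\overline{u_2}\,\overline{v_2}$ blocks, then pump and insert $v_2$'s and $u_2$'s on the left, being careful that the ``$\maxind$''-type ordering constraint is automatically met because $u_2$ and $v_2$ both have index $2$, so any left factor can be appended as long as some right factor of index $2$ (e.g.\ a copy of $\overline{u_2}$) is present — which is why the second disjunct of $R_{22L}$ carries the trailing $(\overline{u_2}\,\overline{v_2}^+)^+$.

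For the reverse inclusion $\HC_\alpha^*(v_2 w_0) \subseteq R_{22L}$ I would induct on derivation length: $v_2 w_0 \in R_{22L}$, and given $w_n \in R_{22L}$ with $w_n \to_{\HC} w_{n+1}$, the localization of the rightmost $\alpha$ and leftmost $\overline{\alpha}$ plus Lemma~\ref{lem:apref_casuf_apref} show that a right step appends a word in $\overline{(v_2^+u_2)^* v_2^*}$ and a left step prepends a word in $(v_2^+u_2)^*v_2^*$; one then checks by inspection that $R_{22L}$ is closed under both such operations (this is the one place requiring a short case analysis on which disjunct $w_n$ lies in, and whether a $\overline{u_2}$ block already exists on the relevant side). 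Finally, by the $u_2 \leftrightarrow \overline{v_2}$ symmetry (formally via Proposition~\ref{prop:presentation_RH} applied to $\overline{w_0}$, which is again a non-crossing $(2,2)$-word), the same argument gives a regular $R_{22R} = \HC_\alpha^*(w_0\overline{u_2})$, and regularity of the finite union follows. The main obstacle I anticipate is not any single step but the bookkeeping in the induction: verifying that $R_{22L}$ is genuinely closed under prepending/appending the allowed factors requires handling the transition between the first disjunct (no $\overline{u_2}$ yet, so left extensions are still forbidden) and the second disjunct, and making sure the ordering constraint is faithfully captured by the regular expression's shape — this is exactly where an off-by-one in the exponents or a missed case would break the proof.
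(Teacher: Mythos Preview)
Your proposal is correct and follows essentially the same route as the paper: decompose $\HC_\alpha^*(w_0)$ as $\{w_0\}\cup\HC_\alpha^*(v_2w_0)\cup\HC_\alpha^*(w_0\overline{u_2})$ via Corollary~\ref{cor:1st_step_mn_ge2}, prove $\HC_\alpha^*(v_2w_0)=R_{22L}$ by the explicit derivation for one inclusion and induction on derivation length (using the localization of the rightmost $\alpha$ together with Lemma~\ref{lem:apref_casuf_apref}) for the other, and then handle $\HC_\alpha^*(w_0\overline{u_2})$ by symmetry. One small remark: the claim that the three branches ``do not interact'' is neither needed nor obviously true in general (they may well overlap when $u_2=v_2$), but the union decomposition holds regardless simply because $\HC_\alpha^*(\HC_\alpha(w_0))=\HC_\alpha^*(w_0)$, so this does not affect the argument.
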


	\subsection{Iterated hairpin completion of $(3, 2)$ non-crossing words}

Theorem~\ref{thm:22nonx_regular} and Example~\ref{ex:nonx_cs} motivate our investigation of non-crossing $(3, 2)$ words. 
Actually, Theorem~\ref{thm:32nonx_iff_regular}, a main contribution of this paper, provides a characterization of the regularity of iterated hairpin completion of a non-crossing $(3, 2)$-word in terms of the commutativity of the $\alpha$-prefixes and $\calpha$-suffixes of the word. 

Let $w_0 \in \alpha \Sigma^* \cap \Sigma^* \overline{\alpha}$ be a non-crossing $(3, 2)$-word (so $\alpha \neq \overline{\alpha}$) with $\Pref_\alpha(w_0) = \{\lambda, u_2, u_3\}$ and $\Suff_{\calpha}(w_0) = \{\lambda, \overline{v_2}\}$. 
Note that $u_2$ ($v_2$) must be primitive; otherwise, its primitive root is also an $\alpha$-prefix (resp.~$\calpha$-suffix) of $w_0$ and $w_0$ would not be a $(3, 2)$-word any more. 
As a result, $u_2$ commute with $v_2$ ($u_3$) if and only if $u_2 = v_2$ (resp.~$u_3 = u_2^2$). 
Recall also that $u_3 \neq v_2$ must hold for $w_0$ to be $(3, 2)$-word (Proposition~\ref{prop:mirror}).
Thus, if $u_3$ and $v_2$ commute, then $u_3 = v_2^2$ and $u_2 = v_2$. 
In other words, the commutativity between $u_3$ and $v_2$ is reduced to the commutativity between $u_2$ and $u_3$ and the commutativity between $u_2$ and $v_2$, and hence, not essential. 

Corollary~\ref{cor:1st_step_mn_ge2} states that $\HC_\alpha(w_0) = \{w_0\} \cup \{v_2 w_0, w_0 \overline{u_2}, w_0 \overline{u_3}\}$. 
Let us ask the question of whether iterated hairpin completion can generate a same word from $w_0 \overline{u_2}$ and $w_0 \overline{u_3}$. 
We partially answer this question in a broader setting for arbitrary $m \ge 3$ and $n \ge 1$. 

\begin{lemma}\label{lem:mn_nonx_include_or_emptyset}
	Let $w_0 \in \alpha \Sigma^* \cap \Sigma^* \calpha$ be a non-crossing $(m, n)$-word for some $m \ge 3$ and $n \ge 1$ with $\Pref_\alpha(w_0) = \{u_1, \ldots, u_m\}$. 
	For integers $i, j$ with $1 < i < j$, if $u_j \in \{u_2, \ldots, u_{j-1}\}u_i$, then $\HC_\alpha^*(w_0 \overline{u_j}) \subseteq \HC_\alpha^*(w_0 \overline{u_i})$; otherwise, $\HC_\alpha^*(w_0 \overline{u_j}) \cap \Sigma^* w_0 \overline{u_i} \Sigma^* = \emptyset$. 
\end{lemma}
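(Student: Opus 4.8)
The plan is to reduce the hypothesis to a prefix relation between the complemented $\alpha$-prefixes and then to split into the two announced cases. By Lemma~\ref{lem:suf_relation_aprefs} (and since $u_1 u_i = u_i \neq u_j$ as $i\neq j$), the condition $u_j\in\{u_2,\ldots,u_{j-1}\}u_i$ is equivalent to $u_i\alpha\ge_s u_j\alpha$; because $|u_i|<|u_j|$ and both words end in $\alpha$, this is in turn equivalent to $u_i\le_s u_j$, that is, to $\overline{u_i}\le_p\overline{u_j}$. So the dichotomy of the lemma is exactly ``$\overline{u_i}$ is a prefix of $\overline{u_j}$'' versus its negation, and in the first case $w_0\overline{u_j}$ and $w_0\overline{u_i}$ differ only in their right tails.

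Suppose first $u_j=u_k u_i$ with $2\le k\le j-1$. Then $\overline{u_j}=\overline{u_i}\,\overline{u_k}$, so $w_0\overline{u_j}=(w_0\overline{u_i})\overline{u_k}$, and it suffices to prove $w_0\overline{u_i}\to_{\RHC}w_0\overline{u_i}\,\overline{u_k}$; transitivity of $\to_{\HC}^*$ then gives $\HC_\alpha^*(w_0\overline{u_j})\subseteq\HC_\alpha^*(w_0\overline{u_i})$. For that single step I would observe that $u_k\in\Pref_\alpha(w_0)\subseteq\Pref_\alpha(w_0\overline{u_i})$; that $w_0\overline{u_i}$, being obtainable from $w_0$ by a right hairpin completion (legitimate since $i\le m-1$, by Lemma~\ref{lem:length_nonoverlap_apref_casuf}), ends in $\overline\alpha$; and that the occurrence of $\alpha$ directly after $u_k$ in $w_0$ ends at position $|u_k|+|\alpha|\le|u_{m-1}|+|\alpha|<|w_0|-|\alpha|\le|w_0\overline{u_i}|-|\alpha|$ (again Lemma~\ref{lem:length_nonoverlap_apref_casuf}), hence does not overlap the trailing $\overline\alpha$. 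Thus right hairpin completion may append $\overline{u_k}$, as required.

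Now suppose $\overline{u_i}\not\le_p\overline{u_j}$. First, the factor $w_0\overline{u_j}$ persists in every word of $\HC_\alpha^*(w_0\overline{u_j})$: a left hairpin completion only prepends a word and a right one only appends a word, so an induction on derivation length keeps this factor present. Hence every $w\in\HC_\alpha^*(w_0\overline{u_j})$ contains $w_0$ as a factor, and by Theorem~\ref{thm:nonx_initial_once} this occurrence is unique (applied to $w_0$, using $w_0\overline{u_j}\in\HC_\alpha^*(w_0)$ which holds when $j\le m-1$ or $n\ge2$; in the borderline case $j=m$, $n=1$ with $w_0\overline{u_m}$ not obtainable from $w_0$, one checks that $w_0\overline{u_m}$ is itself a non-crossing word in $\alpha\Sigma^*\cap\Sigma^*\overline\alpha$ in which $w_0$ occurs exactly once, and applies the theorem to it). Assume for contradiction that some such $w$ also had $w_0\overline{u_i}$ as a factor. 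The occurrence of $w_0$ inside the fixed $w_0\overline{u_j}$-factor of $w$ and the occurrence of $w_0$ inside the hypothetical $w_0\overline{u_i}$-factor are then the same occurrence of $w_0$ in $w$; hence the word immediately following that occurrence has both $\overline{u_i}$ and $\overline{u_j}$ as a prefix, so (as $|u_i|<|u_j|$) $\overline{u_i}\le_p\overline{u_j}$, contradicting the case assumption. Therefore $\HC_\alpha^*(w_0\overline{u_j})\cap\Sigma^*w_0\overline{u_i}\Sigma^*=\emptyset$.

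The main obstacle is the legitimacy of the single right hairpin completion step in the first case: one has to be certain that $w_0\overline{u_i}$ genuinely ends in $\overline\alpha$ — immediate when $|u_i|\ge|\alpha|$, since then $\alpha\le_p u_i$ by Proposition~\ref{prop:aprefix_basic}, but for short $\alpha$-prefixes it rests on the periodicity facts about $\alpha$-prefixes established in Section~\ref{sec:apref_casuf} — and that the occurrence of $\alpha$ after $u_k$ stays clear of the trailing $\overline\alpha$, which is precisely what the length bound of Lemma~\ref{lem:length_nonoverlap_apref_casuf} secures. In the second case the only delicate point is invoking Theorem~\ref{thm:nonx_initial_once} correctly in the borderline situation where $w_0\overline{u_j}$ is not directly reachable from $w_0$; everything else there is a routine ``occurrences of $w_0$ coincide'' argument.
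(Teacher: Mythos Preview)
Your argument is correct and follows essentially the same route as the paper: derive $w_0\overline{u_i}\to_{\RHC}w_0\overline{u_j}$ from $u_j=u_ku_i$ via Lemma~\ref{lem:length_nonoverlap_apref_casuf}, and in the other direction use the uniqueness of the occurrence of $w_0$ (Theorem~\ref{thm:nonx_initial_once}) together with Lemma~\ref{lem:suf_relation_aprefs} to force $\overline{u_i}\le_p\overline{u_j}$. Two small remarks: the ``obstacle'' you flag about $w_0\overline{u_i}$ ending in $\overline\alpha$ dissolves immediately without periodicity, since $w_0\in\Sigma^*\overline\alpha$ gives that $w_0\overline{u_i}$ ends in $\overline{\alpha}\,\overline{u_i}=\overline{u_i\alpha}\in\Sigma^*\overline\alpha$ by Proposition~\ref{prop:aprefix_basic}; and your explicit treatment of the borderline case $j=m$, $n=1$ is more careful than the paper's proof, which tacitly assumes $w_0\overline{u_j}\in\HC_\alpha^*(w_0)$.
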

\begin{proof}
	Let $u_j = x u_i$ for some $x \in \{u_2, \ldots, u_{j-1}\}$.
	Lemma~\ref{lem:length_nonoverlap_apref_casuf} implies that $w_0 \overline{u_i} \to_{\RHC} w_0 \overline{u_i} \ \overline{x} = w_0 \overline{u_j}$ is possible. 
	Thus, the inclusion holds. 
	Conversely, if the intersection is not empty, then Theorem~\ref{thm:nonx_initial_once} implies that $\calpha \ \overline{u_j} = \calpha \ \overline{u_i} \ \overline{y}$ for some $y \in \Sigma^+$. 
	Then, due to Lemma~\ref{lem:suf_relation_aprefs}, this equation gives $y \in \{u_2, \ldots, u_{j-1}\}$; thus, $u_j \in \{u_2, \ldots, u_{j-1}\}u_i$. 
\end{proof}

We can employ Lemma~\ref{lem:mn_nonx_include_or_emptyset} in our current setting of non-crossing $(3, 2)$-words to observe that if $u_3 = u_2^2$, then $\HC_\alpha^*(w_0 \overline{u_3}) \subseteq \HC_\alpha^*(w_0 \overline{u_2})$; otherwise, $\HC_\alpha^*(w_0 \overline{u_3}) \cap \Sigma^* w_0 \overline{u_2} \Sigma^* = \emptyset$. 
Thus, for example, if $u_3 \neq u_2^2$, then $\HC_\alpha^*(w_0\overline{u_3}) \cap \HC_\alpha^*(w_0 \overline{u_2}) = \emptyset$. 

In this subsection, we first prove that the commutativity of $u_2$ with $v_2$ or with $u_3$ is a sufficient condition for $\HC_\alpha^*(w_0)$ to be regular. 

\begin{lemma}\label{lem:22nonx_u2isv2_regular}
	If $u_2 = v_2$, then the language $\HC_\alpha^*(w_0)$ is regular. 
\end{lemma}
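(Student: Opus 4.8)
The plan is to reduce the case $u_2 = v_2$ to the $(m,1)$ situation already handled, by exploiting the fact that when the shortest nonempty $\alpha$-prefix equals the shortest nonempty $\calpha$-suffix, one of the two ``long'' completions collapses into the other. First I would invoke Corollary~\ref{cor:1st_step_mn_ge2} to get $\HC_\alpha(w_0) = \{w_0, v_2w_0, w_0\overline{u_2}, w_0\overline{u_3}\}$, and then observe that $v_2w_0 = u_2w_0$ and that by Lemma~\ref{lem:length_nonoverlap_apref_casuf} the word $w_0\overline{u_2}$ is a non-crossing $(3,2)$-word whose set of $\calpha$-suffixes is $\{\lambda,\overline{v_2},\overline{u_2}\,\overline{v_2}\}=\{\lambda,\overline{u_2},\overline{u_2}\,\overline{u_2}\}$ (using $u_2=v_2$), i.e. a word in which the $\calpha$-suffixes already form a power chain of $\overline{u_2}$. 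The key structural claim I would isolate is: \emph{every word in $\HC_\alpha^*(w_0)$ has the form $x\,w_0\,\overline{y}$ where $x \in \{u_2,u_3\}^*$ and $\overline{y}$ is a product of complements of $\alpha$-prefixes of $w_0$} — this follows from Theorem~\ref{thm:nonx_initial_once} (the factor $w_0$ occurs exactly once), from Lemma~\ref{lem:apref_casuf_apref} applied to the prefix/suffix structure on each side, and from the observation that after the first right completion the rightmost $\alpha$ is the third $\alpha$ on $w_0$, so all three of $\overline{u_1},\overline{u_2},\overline{u_3}$ become available on the right side forever after (and symmetrically on the left, since $u_2=v_2$ makes the left-available pieces $\{v_2, v_2u_2\}=\{u_2,u_2u_2\}$, generating $\{u_2,u_3\}^*$-like chains once $u_3=u_2^2$ or generating a free monoid on $\{u_2, u_3\}$ when $u_3\neq u_2^2$).

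Concretely I would write down an explicit regular candidate language $R$ — built as a finite union of expressions of the shape $X^* w_0 Y^*$ and $X^* w_0 Y^+ (\text{tail})$ analogous to $R_{m1}(w_0)$ and $R_{22L}$ in the previous subsections — parametrized by which of $u_2, u_3$ have become ``active'' on each side, together with a balancing condition of the $\maxind$ type linking the left and right factors. Then the proof splits, exactly as in Theorems~\ref{thm:m1nonx_regular} and~\ref{thm:22nonx_regular}, into two inclusions: (i) $\HC_\alpha^*(w_0)\supseteq R$, proved by exhibiting, for each word in $R$, an explicit derivation that first pumps out the right side by repeated right completions (legitimate because $|u_3|+2|\alpha|<|w_0\overline{u_2}|$ by Lemma~\ref{lem:length_nonoverlap_apref_casuf}), then pumps out the left side by left completions, then tops up the right side again; and (ii) $\HC_\alpha^*(w_0)\subseteq R$, proved by induction on derivation length, using Lemma~\ref{lem:apref_casuf_apref} to constrain every one-step extension (a left extension by $x$ forces $x\alpha\le_p (\text{product of }\alpha\text{-prefixes})\,\alpha$, hence $x$ lies in the expected submonoid), and checking that $R$ is closed under each such extension and that the $\maxind$ balancing condition is preserved.

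The case analysis inside this is genuinely small: since $u_2=v_2$ is primitive, commutativity forces $u_3\in\{u_2^2, \text{non-power}\}$, and by Lemma~\ref{lem:mn_nonx_include_or_emptyset} either $\HC_\alpha^*(w_0\overline{u_3})\subseteq\HC_\alpha^*(w_0\overline{u_2})$ (when $u_3=u_2^2$), which makes the whole iterated completion essentially one-sided-plus-mirror and directly reducible to Corollary~\ref{cor:mn_palindrome-like_regular}-style reasoning, or $\HC_\alpha^*(w_0\overline{u_3})$ and $\HC_\alpha^*(w_0\overline{u_2})$ are ``disjoint'' on the relevant factor, so the language is a finite union of the two independent sublanguages $\HC_\alpha^*(v_2w_0)$, $\HC_\alpha^*(w_0\overline{u_2})$, $\HC_\alpha^*(w_0\overline{u_3})$ (up to $w_0$ itself), each of which is of the $(m,1)$-like or $(2,2)$-like shape already shown regular. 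Either way the union is regular, and closure of $\REG$ under finite union finishes it.

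The main obstacle I expect is the bookkeeping in inclusion (ii): keeping precise track of which $\alpha$-prefixes and $\calpha$-suffixes are available on which side after an arbitrary interleaving of left and right completions, and verifying that no ``new'' $\alpha$-prefix or $\calpha$-suffix is ever created by the catenation of pieces (so that the available alphabet of extension pieces stays finite) — this is where Lemma~\ref{lem:suf_relation_aprefs}, Lemma~\ref{lem:factor_relation_shortests}, and the primitivity of $u_2=v_2$ do the real work, and where a sloppy argument could overlook a spurious occurrence of $\alpha$ straddling a boundary between two catenated factors. Making the $\maxind$ balancing invariant precise enough to survive the induction, while still describing a regular set, is the delicate part; everything else is a routine adaptation of the $(m,1)$ and $(2,2)$ arguments.
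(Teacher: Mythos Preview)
Your high-level instinct---reduce to the $(m,1)$ case---is exactly right, but you never actually carry that reduction out; instead you drift into building an explicit regular candidate $R$ and verifying it by induction, which is far more work than needed and, in the form you sketch, has a real gap.

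The paper's proof is a three-line reduction that you missed. Write $w_0 = w\,\overline{v_2}$; then $w$ is a non-crossing $(3,1)$-word (it has the same three $\alpha$'s and has lost the rightmost $\calpha$), so $\HC_\alpha^*(w)$ is regular by Theorem~\ref{thm:m1nonx_regular}. Because $u_2 = v_2$, we have $w_0 = w\,\overline{u_2}$, and Lemma~\ref{lem:length_nonoverlap_apref_casuf} guarantees $w \to_{\RHC} w\,\overline{u_2} = w_0$, so $\HC_\alpha^*(w_0) \subseteq \HC_\alpha^*(w)$. Now Lemma~\ref{lem:mn_nonx_include_or_emptyset} (applied to $w$, with $i=2$ and $j=3$) gives the dichotomy: either $\HC_\alpha^*(w\,\overline{u_3}) \subseteq \HC_\alpha^*(w\,\overline{u_2})$ or $\HC_\alpha^*(w\,\overline{u_3}) \cap \Sigma^* w\,\overline{u_2}\Sigma^* = \emptyset$. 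In either case one gets $\HC_\alpha^*(w_0) = \HC_\alpha^*(w) \cap \Sigma^* w_0 \Sigma^*$, an intersection of two regular languages.

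The gap in your plan is the sentence claiming that $\HC_\alpha^*(v_2 w_0)$, $\HC_\alpha^*(w_0\overline{u_2})$, $\HC_\alpha^*(w_0\overline{u_3})$ are ``of the $(m,1)$-like or $(2,2)$-like shape already shown regular.'' They are not: $v_2 w_0$ is a $(4,2)$-word, and $w_0\overline{u_2}$ is a $(3,3)$-word (its $\calpha$-suffixes are $\lambda,\overline{u_2},\overline{u_2}^2$), so neither Theorem~\ref{thm:m1nonx_regular} nor Theorem~\ref{thm:22nonx_regular} applies to them directly, and Corollary~\ref{cor:mn_palindrome-like_regular} applies to $w_0\overline{u_2}$ only when $u_3 = u_2^2$. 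Your explicit-$R$ route could presumably be pushed through with enough bookkeeping, but the clean move is to strip the suffix $\overline{v_2}$ first and inherit regularity from the $(3,1)$ word underneath.
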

\begin{proof}
	Let $w_0 = w \overline{v_2}$ for some $w \in \alpha \Sigma^* \cap \Sigma^* \calpha$. 
	Observe that $w$ is a non-crossing $(3, 1)$-word with $u_2, u_3$ being its nonempty $\alpha$-prefix. 
	Lemma~\ref{lem:length_nonoverlap_apref_casuf} implies that $|u_2| + 2|\alpha| < |w|$, which means that hairpin completion can extend $w$ to the right by $\overline{u_2}$ and result in $w_0$. 
	If $|u_3| + 2|\alpha| \le |w|$, then hairpin completion can also generate $w \overline{u_3}$, but it is not essential in the following discussion whether this is possible or not. 
	Let us consider only the case when it is possible. 
	Then $\HC_\alpha^*(w)$, which is regular due to Theorem~\ref{thm:m1nonx_regular}, is $\{w\} \cup \HC_\alpha^*(w\overline{u_2}) \cup \HC_\alpha^*(w\overline{u_3})$. 
	As we have seen above, if $w\overline{u_3} \in \HC_\alpha(w)$, then either $\Sigma^* w \overline{u_2} \Sigma^* \cap \HC_\alpha^*(w\overline{u_3}) = \emptyset$ or $\HC_\alpha^*(w\overline{u_2}) \supseteq \HC_\alpha^*(w\overline{u_3})$. 
	In any case, $\HC_\alpha^*(w_0) = \HC_\alpha^*(w) \cap \Sigma^* w \overline{u_2} \Sigma^*$, and hence, is regular. 
\end{proof}

Now it is easy to see that $\HC_\alpha^*(w_0)$ is regular when $u_3$ commutes with $v_2$. 
Since $w_0$ is $(3, 2)$-word, $v_2$ must be primitive and $u_3$ is equal to either $v_2$ or $v_2^2$. 
In the former case, $u_2$ is a proper prefix of $v_2$ so that $w_0$ has $\overline{u_2}$ and would not be a $(3, 2)$-word. 
Thus, the latter must be the case. 
In this case, the prefix $v_2$ of $u_3$, which is the primitive root of $u_3$, is an $\alpha$-prefix of $w_0$ (Lemma~\ref{lem:primitive_apref}), and hence, in order for $w_0$ to be a $(3, 2)$-word, $u_2 = v_2$ must hold, and this brings the conclusion according to Lemma~\ref{lem:22nonx_u2isv2_regular}. 

\begin{lemma}
	If $u_3 = u_2^2$, then the language $\HC_\alpha^*(w_0)$ is regular. 
\end{lemma}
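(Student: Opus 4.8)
The plan is to follow the strategy used for the $(2,2)$ case (Theorem~\ref{thm:22nonx_regular}): break $\HC_\alpha^*(w_0)$ into finitely many pieces, each the iterated hairpin completion of a short ``seed'' word, and exhibit each such piece as an explicit regular language. What makes this possible is that, since $u_3=u_2^2$, a right extension by $\overline{u_3}$ is redundant --- it is simulated by two successive extensions by $\overline{u_2}$ --- so on the right we deal with a single chain of seeds rather than two incomparable ones.

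Concretely, I would first apply Corollary~\ref{cor:1st_step_mn_ge2} (valid since $m=3,n=2\ge2$) to get $\HC_\alpha(w_0)=\{w_0,\,v_2w_0,\,w_0\overline{u_2},\,w_0\overline{u_3}\}$, hence $\HC_\alpha^*(w_0)=\{w_0\}\cup\HC_\alpha^*(v_2w_0)\cup\HC_\alpha^*(w_0\overline{u_2})\cup\HC_\alpha^*(w_0\overline{u_3})$. Since $u_3=u_2^2\in\{u_2\}u_2$, Lemma~\ref{lem:mn_nonx_include_or_emptyset} yields $\HC_\alpha^*(w_0\overline{u_3})\subseteq\HC_\alpha^*(w_0\overline{u_2})$, so $\HC_\alpha^*(w_0)=\{w_0\}\cup\HC_\alpha^*(v_2w_0)\cup\HC_\alpha^*(w_0\overline{u_2})$, and it remains to prove that $\HC_\alpha^*(v_2w_0)$ and $\HC_\alpha^*(w_0\overline{u_2})$ are regular. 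These two cases are handled by mirror-image arguments (left versus right extension, as with $R_{22L}$ and $R_{22R}$ in the proof of Theorem~\ref{thm:22nonx_regular}), so I would write out only one, say $\HC_\alpha^*(w_0\overline{u_2})$.

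I would then describe $\HC_\alpha^*(w_0\overline{u_2})$ by an explicit regular expression $R$. By Theorem~\ref{thm:nonx_initial_once} each of its words has the form $x\,w_0\,y$ with a \emph{unique} occurrence of $w_0$; and since $w_0$ is non-crossing with $\Pref_\alpha(w_0)=\{\lambda,u_2,u_2^2\}$ and $\Suff_{\calpha}(w_0)=\{\lambda,\overline{v_2}\}$, Lemma~\ref{lem:apref_casuf_apref} (applied with $u_2$ and $\overline{v_2}$, the shortest nonempty $\alpha$-prefix and $\calpha$-suffix of $w_0$) constrains, after \emph{any} sequence of steps, which words can be appended on the right or prepended on the left; carrying this out produces a regular $R$ built from the blocks $\overline{u_2}$, $\overline{u_2}\,\overline{u_2}$, $\overline{v_2}$ and their complements --- the analogue of $R_{22L}$, now with the doubled block $\overline{u_2}\,\overline{u_2}$ allowed (because $u_2^2\in\Pref_\alpha(w_0)$) but nothing longer (because $u_2^3\notin\Pref_\alpha(w_0)$). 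I would prove $\HC_\alpha^*(w_0\overline{u_2})=R$ by the two inclusions familiar from the $(2,2)$ proof: ``$\supseteq$'' by exhibiting, for an arbitrary word of $R$, an explicit interleaving of left and right hairpin completions producing it --- the only new move being a right completion that appends $\overline{u_3}=\overline{u_2}\,\overline{u_2}$ at once, legitimate because $u_2^2\in\Pref_\alpha(w_0)$ and, by Lemma~\ref{lem:length_nonoverlap_apref_casuf}, the relevant occurrence of $\alpha$ does not overlap the hairpin --- and ``$\subseteq$'' by induction on the derivation length, using Theorem~\ref{thm:nonx_initial_once} to locate the factor $w_0$ and Lemma~\ref{lem:apref_casuf_apref} to show that one further hairpin completion keeps the word in $R$, i.e.\ that $R$ is closed under one step of hairpin completion.

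The step I expect to be the main obstacle is pinning down $R$ exactly and verifying this inductive closure: disentangling the interplay between left and right extensions enough to see that $u_3=u_2^2$ genuinely prevents the ``counting'' that makes Example~\ref{ex:nonx_cs} non-regular, and handling the degenerate cases (for instance $u_2$ or $v_2$ beginning with $\alpha$, or being shorter than $k$, which affect whether the extended word still ends in $\overline{\alpha}$ and hence which short $\calpha$-suffixes are available at the next step). The conceptual heart --- that $u_3=u_2^2$ makes the two would-be right-hand derivation cones $\HC_\alpha^*(w_0\overline{u_2})$ and $\HC_\alpha^*(w_0\overline{u_3})$ nested rather than incomparable (Lemma~\ref{lem:mn_nonx_include_or_emptyset}), unlike the two incomparable cones behind Example~\ref{ex:nonx_cs} --- is easy; the bookkeeping is where the work lies.
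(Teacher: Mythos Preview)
Your initial decomposition matches the paper's exactly: use Corollary~\ref{cor:1st_step_mn_ge2} and Lemma~\ref{lem:mn_nonx_include_or_emptyset} to reduce to showing that $\HC_\alpha^*(v_2w_0)$ and $\HC_\alpha^*(w_0\overline{u_2})$ are regular. From there, however, the two arguments diverge.

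The paper does \emph{not} build explicit regular expressions for these two languages. Instead it first disposes of the case $u_2=v_2$ via Lemma~\ref{lem:22nonx_u2isv2_regular}, and then, assuming $u_2\neq v_2$, reduces each piece to an already-solved case by intersection with a regular language. For $\HC_\alpha^*(w_0\overline{u_2})$ it writes $w_0=u_2w$ with $w$ a non-crossing $(2,2)$-word, checks (using $u_2\neq v_2$ and Theorem~\ref{thm:nonx_initial_once}) that $\HC_\alpha^*(w_0\overline{u_2})=\HC_\alpha^*(w)\cap\Sigma^*u_2w\Sigma^*$, and invokes Theorem~\ref{thm:22nonx_regular}. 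For $\HC_\alpha^*(v_2w_0)$ it writes $w_0=w'\overline{v_2}$ with $w'$ a $(3,1)$-word, so that $v_2w'$ is a non-crossing $(4,1)$-word, shows $\HC_\alpha^*(v_2w_0)=\HC_\alpha^*(v_2w')\cap\Sigma^*v_2w'\overline{v_2}\Sigma^*$, and invokes Theorem~\ref{thm:m1nonx_regular}. This sidesteps all the bookkeeping you anticipate.

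Your plan to build an $R_{22L}$-style expression directly could be made to work, but one point is wrong: the two pieces are \emph{not} handled by mirror-image arguments. Since $w_0$ is a $(3,2)$-word rather than a $(2,2)$-word, the seed $w_0\overline{u_2}$ is a $(3,3)$-word (with $\Pref_\alpha=\{\lambda,u_2,u_2^2\}$ and $\Suff_{\calpha}=\{\lambda,\overline{u_2},\overline{v_2}\,\overline{u_2}\}$) whereas $v_2w_0$ is a $(4,2)$-word (with $\Pref_\alpha=\{\lambda,v_2,v_2u_2,v_2u_2^2\}$ and $\Suff_{\calpha}=\{\lambda,\overline{v_2}\}$); there is no involution exchanging them, so you cannot ``write out only one''. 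You would have to carry out the explicit-$R$ construction and closure-under-one-step argument twice, for two structurally different seeds. That is feasible, but it is precisely the labor the paper's reduction-to-lower-cases trick avoids.
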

\begin{proof}
	Lemma~\ref{lem:22nonx_u2isv2_regular} makes it sufficient to consider the case when $u_2$ does not commute with $v_2$. 
	Since $\HC_\alpha^*(w_0) = \{w_0\} \cup \HC_\alpha^*(v_2 w_0) \cup \HC_\alpha^*(w_0 \overline{u_2}) \cup \HC_\alpha^*(w_0 \overline{u_2}^2)$ (when the reader check this, recall Lemma~\ref{lem:length_nonoverlap_apref_casuf}) and $\HC_\alpha^*(w_0\overline{u_2}) \supseteq \HC_\alpha^*(w_0\overline{u_2}^2)$, we will show the regularity of the second and third terms of this equation and that is enough for our purpose. 

	First, we prove that $\HC_\alpha^*(w_0 \overline{u_2})$ is regular. 
	Let $w_0 = u_2 w$, where $w \in \alpha \Sigma^* \cap \Sigma^* \calpha$ is a $(2, 2)$-word with $\Pref_\alpha(w) = \{\lambda, u_2\}$ and $\Suff_{\calpha}(w) = \{\lambda, \overline{v_2}\}$. 
	We can easily check that 
	\[
		\HC_\alpha^*(w) = \{w, w\overline{u_2}, w \overline{u_2}^2\} \cup \HC_\alpha^*(u_2 w \overline{u_2}) \cup \HC_\alpha^*(u_2 v_2 w \overline{u_2}) \cup \HC_\alpha^*(v_2w). 
	\]
	As done in the proof of Lemma~\ref{lem:22nonx_u2isv2_regular}, the non-commutativity between $u_2$ and $v_2$ implies that $(\HC_\alpha^*(u_2 v_2 w \overline{u_2}) \cup \HC_\alpha^*(v_2w)) \cap \Sigma^* u_2 w \Sigma^* = \emptyset$. 
	Thus, $\HC_\alpha^*(w) \cap \Sigma^* u_2 w \Sigma^* = \HC_\alpha^*(w_0 \overline{u_2})$. 
	Since $w$ is a non-crossing $(2, 2)$-word so that $\HC_\alpha^*(w)$ is regular (Theorem~\ref{thm:22nonx_regular}), and hence, so is $\HC_\alpha^*(w_0 \overline{u_2})$. 

	Next, we prove the regularity of $\HC_\alpha^*(v_2 w_0)$. 
	We can let $w_0 = w'\overline{v_2}$ for some $(3, 1)$-word $w'$. 
	This means that $v_2 w'$ is a $(4, 1)$-word with $\Pref_\alpha(v_2w') = \{\lambda, v_2, v_2u_2, v_2u_2^2\}$ and the empty $\calpha$-suffix. 
	Thus, 
	\[
		\HC_\alpha^*(v_2w') = \{v_2w'\} \cup \HC_\alpha^*(v_2w'\overline{v_2}) \cup \HC_\alpha^*(v_2w'\overline{u_2} \ \overline{v_2}) \cup \HC_\alpha^*(v_2w'\overline{u_2}^2 \overline{v_2}). 
	\]
	Using the essentially same argument as above, we obtain $\HC_\alpha^*(v_2w') \cap \Sigma^* v_2 w' \overline{v_2} \Sigma^* = \HC_\alpha^*(v_2w_0)$. 
	Since the iterated hairpin completion of non-crossing $(4, 1)$-word is regular (Theorem~\ref{thm:m1nonx_regular}), $\HC_\alpha^*(v_2w')$ is regular and so is $\HC_\alpha^*(v_2w_0)$. 

	Combining what have been proved in the previous two paragraphs together, we conclude the regularity of $\HC_\alpha^*(w_0)$. 
\end{proof}

To summarize the results obtained so far, any of two of the $\alpha$-prefixes and the complements of $\calpha$-suffixes of $w_0$, i.e., $u_2, u_3, v_2$, must not commute in order for $\HC_\alpha^*(w_0)$ not to be regular. 

\begin{lemma}
	If $u_3 = u_2 v_2$, then the language $\HC_\alpha^*(w_0)$ is regular. 
\end{lemma}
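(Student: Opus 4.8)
The plan is to handle the case $u_3 = u_2 v_2$ by the same divide-and-conquer strategy used in the previous two lemmas, namely to decompose $\HC_\alpha^*(w_0)$ into finitely many pieces of the form $\HC_\alpha^*(w')$ for words $w'$ that are non-crossing $(m',n')$-words already known to have regular iterated hairpin completion (Theorems~\ref{thm:m1nonx_regular} and \ref{thm:22nonx_regular}), together with intersections by regular languages of the form $\Sigma^* w' \Sigma^*$, and then to invoke closure of the regular languages under union and intersection. By Corollary~\ref{cor:1st_step_mn_ge2} and Lemma~\ref{lem:length_nonoverlap_apref_casuf} we have $\HC_\alpha^*(w_0) = \{w_0\} \cup \HC_\alpha^*(v_2 w_0) \cup \HC_\alpha^*(w_0 \overline{u_2}) \cup \HC_\alpha^*(w_0 \overline{u_3})$. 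Since $u_3 = u_2 v_2 \in \{u_2\} v_2$, Lemma~\ref{lem:mn_nonx_include_or_emptyset} (in the symmetric, ``$v$-side'' form, or after complementing via Proposition~\ref{prop:presentation_RH}) does \emph{not} directly apply to $u_3$ versus $u_2$ because $u_3 = u_2 v_2$ is a product involving $v_2$, not only earlier $\alpha$-prefixes; so I expect that $\HC_\alpha^*(w_0\overline{u_3})$ is genuinely a separate branch and must be treated on its own. As in Lemma~\ref{lem:22nonx_u2isv2_regular}, Lemma~\ref{lem:22nonx_u2isv2_regular} lets us assume $u_2 \ne v_2$ (no commutativity among the pair $u_2, v_2$), since otherwise we are already done.

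First I would treat $\HC_\alpha^*(w_0 \overline{u_2})$. Write $w_0 = u_2 w$, where $w \in \alpha\Sigma^* \cap \Sigma^*\calpha$; one checks $w$ is a non-crossing word with $\Pref_\alpha(w) = \{\lambda, u_2, v_2\}$ (here $u_3 = u_2 v_2$ becomes the $\alpha$-prefix $v_2$ of $w$ after stripping the leading $u_2$) — wait, more carefully, $\Pref_\alpha(w_0) = \{\lambda, u_2, u_2 v_2\}$ so $\Pref_\alpha(w) = \{\lambda, v_2\}$ after removing the prefix $u_2$, while $\Suff_{\calpha}(w) = \{\lambda, \overline{v_2}\}$; thus $w$ is a non-crossing $(2,2)$-word and $\HC_\alpha^*(w)$ is regular by Theorem~\ref{thm:22nonx_regular}. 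Now decompose $\HC_\alpha^*(w)$ via Corollary~\ref{cor:1st_step_mn_ge2} into $\{w\} \cup \HC_\alpha^*(v_2 w) \cup \HC_\alpha^*(w\overline{u_2})$, and use the non-commutativity of $u_2$ with $v_2$ (exactly as in Lemma~\ref{lem:22nonx_u2isv2_regular}, invoking Theorem~\ref{thm:nonx_initial_once} and Lemma~\ref{lem:suf_relation_aprefs} to show that a word containing $u_2 w$ as a factor cannot lie in $\HC_\alpha^*(v_2 w)$ unless $v_2$ commutes with $u_2$) to conclude $\HC_\alpha^*(w) \cap \Sigma^* u_2 w \Sigma^* = \HC_\alpha^*(u_2 w \overline{u_2}) = \HC_\alpha^*(w_0 \overline{u_2})$, whence the latter is regular.

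Next I would treat $\HC_\alpha^*(v_2 w_0)$. Write $w_0 = w'\overline{v_2}$ with $w'$ a $(3,1)$-word having $\Pref_\alpha(w') = \{\lambda, u_2, u_2 v_2\}$. Then $v_2 w'$ is a non-crossing $(4,1)$-word with $\Pref_\alpha(v_2 w') = \{\lambda, v_2, v_2 u_2, v_2 u_2 v_2\}$ and empty $\calpha$-suffix, so $\HC_\alpha^*(v_2 w')$ is regular by Theorem~\ref{thm:m1nonx_regular}. Decomposing as before, $\HC_\alpha^*(v_2 w') = \{v_2 w'\} \cup \HC_\alpha^*(v_2 w' \overline{v_2}) \cup \HC_\alpha^*(v_2 w'\overline{u_2}\,\overline{v_2}) \cup \HC_\alpha^*(v_2 w'\overline{v_2 u_2}\,\overline{v_2}) \cup \HC_\alpha^*(v_2 w'\overline{v_2 u_2 v_2}\,\overline{v_2})$, and using the non-commutativity of $u_2$ with $v_2$ once more to kill the stray branches that would have to produce a factor $v_2 w' \overline{v_2}$, I get $\HC_\alpha^*(v_2 w') \cap \Sigma^* v_2 w' \overline{v_2} \Sigma^* = \HC_\alpha^*(v_2 w' \overline{v_2}) = \HC_\alpha^*(v_2 w_0)$, hence regular. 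The branch $\HC_\alpha^*(w_0 \overline{u_3})$ I would dispose of by observing that from $w_0 \overline{u_2}$ right hairpin completion adds $\overline{v_2}$ (since $v_2$ is a $\calpha$-suffix of $w_0$ and hence $\overline{v_2}$ appears after the rightmost $\alpha$ of $w_0$, or rather using $u_3 = u_2 v_2$ and Lemma~\ref{lem:apref_casuf_apref}), giving $w_0 \overline{u_2}\,\overline{v_2} = w_0 \overline{u_3}$, so $\HC_\alpha^*(w_0\overline{u_3}) \subseteq \HC_\alpha^*(w_0 \overline{u_2})$ and contributes nothing new. Assembling: $\HC_\alpha^*(w_0) = \{w_0\} \cup \HC_\alpha^*(v_2 w_0) \cup \HC_\alpha^*(w_0\overline{u_2})$ is a finite union of regular languages, hence regular.

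The main obstacle I anticipate is verifying that the auxiliary words $w$, $w'$, $v_2 w'$ really have the $\alpha$-prefix / $\calpha$-suffix sets claimed — in particular that stripping or prepending $v_2$ and $u_2$ introduces no \emph{new} occurrences of $\alpha$ or $\calpha$ (an overlap could create a spurious primer occurrence straddling the boundary). This needs a short argument from primitivity of $u_2, v_2$, the non-commutativity $u_2 \ne v_2$, and the length bound of Lemma~\ref{lem:length_nonoverlap_apref_casuf}; once the factor structure is pinned down, the rest is the now-familiar ``intersect with $\Sigma^* w' \Sigma^*$ and kill non-commuting branches'' routine.
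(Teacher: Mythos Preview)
Your overall divide-and-conquer plan matches the paper's, and the treatment of $\HC_\alpha^*(v_2 w_0)$ is essentially right (modulo the extra phantom branch you listed). But two of your three branches are handled incorrectly, and the errors are not just bookkeeping.

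\textbf{The branch $\HC_\alpha^*(w_0\overline{u_3})$.} You argue that $w_0\overline{u_2}\,\overline{v_2} = w_0\overline{u_3}$ and hence $\HC_\alpha^*(w_0\overline{u_3}) \subseteq \HC_\alpha^*(w_0\overline{u_2})$. This is wrong: the involution is antimorphic, so $\overline{u_3} = \overline{u_2 v_2} = \overline{v_2}\,\overline{u_2}$, not $\overline{u_2}\,\overline{v_2}$. Moreover $v_2$ is \emph{not} an $\alpha$-prefix of $w_0\overline{u_2}$ (the set is still $\{\lambda,u_2,u_2v_2\}$), so right hairpin completion cannot append $\overline{v_2}$ there. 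In fact, since $u_3 = u_2 v_2 \ne u_2^2$ under the standing assumption $u_2 \ne v_2$, Lemma~\ref{lem:mn_nonx_include_or_emptyset} gives $\HC_\alpha^*(w_0\overline{u_3}) \cap \Sigma^* w_0 \overline{u_2}\Sigma^* = \emptyset$, so the two branches are disjoint and $\HC_\alpha^*(w_0\overline{u_3})$ must be analysed separately. The paper does this by writing $w_0\overline{v_2}\,\overline{u_2} = u_2 v_2 w_2$ for a non-crossing $(1,4)$-word $w_2$ and intersecting $\HC_\alpha^*(w_2)$ (regular by Theorem~\ref{thm:m1nonx_regular}) with $\Sigma^* u_2 v_2 w_2 \Sigma^*$.

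\textbf{The branch $\HC_\alpha^*(w_0\overline{u_2})$.} After writing $w_0 = u_2 w$ you correctly find $\Pref_\alpha(w) = \{\lambda, v_2\}$, but then you decompose $\HC_\alpha^*(w)$ using $w\overline{u_2}$. Since $u_2 \notin \Pref_\alpha(w)$, the word $w\overline{u_2}$ is not obtainable from $w$ by hairpin completion; indeed $\HC_\alpha^*(w) = v_2^* w\,\overline{v_2}^*$ by Corollary~\ref{cor:mn_palindrome-like_regular}, and (using $u_2\ne v_2$) no element of this set contains $u_2 w$ as a factor. So $\HC_\alpha^*(w)\cap \Sigma^* u_2 w \Sigma^*$ is empty, not $\HC_\alpha^*(w_0\overline{u_2})$. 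The paper avoids this detour entirely: it observes directly that $w_0\overline{u_2}$ has $\Pref_\alpha = \{\lambda, u_2, u_2 v_2\}$ and $\Suff_{\calpha} = \{\lambda, \overline{u_2}, \overline{v_2}\,\overline{u_2}\}$, so the longest $\alpha$-prefix $u_2 v_2$ equals the complement of the longest $\calpha$-suffix, and Corollary~\ref{cor:mn_palindrome-like_regular} immediately gives regularity.
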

\begin{proof}
	Due to Lemma~\ref{lem:22nonx_u2isv2_regular}, it suffices to consider this problem under the assumption $u_2 \neq v_2$, which is equivalent to that $u_2$ does not commute with $v_2$ under our problem setting.  

	We have $\HC_\alpha^*(w_0) = \{w_0\} \cup \HC_\alpha^*(v_2 w_0) \cup \HC_\alpha^*(w_0 \overline{u_2}) \cup \HC_\alpha^*(w_0 \overline{v_2} \ \overline{u_2})$. 
	As done before, we will check that the second, third, and fourth terms of the union above are regular. 
	The regularity of the third one is from $\Pref_\alpha(w_0 \overline{u_2}) = \{\lambda, u_2, u_2 v_2\}$ and $\Suff_{\calpha}(w_0 \overline{u_2}) = \{\lambda, \overline{u_2}, \overline{v_2} \ \overline{u_2}\}$ and Corollary~\ref{cor:mn_palindrome-like_regular}. 

	In order to check that the second term is regular, let $w_0 = w_1 \overline{v_2}$, where $w_1$ is a $(3,1)$-word. 
	Then $v_2w_1$ is a $(4, 1)$-word, and 
	\[
		\HC_\alpha^*(v_2 w_1) = \{v_2 w_1\} \cup \HC_\alpha^*(v_2w_1 \overline{v_2}) \cup \HC_\alpha^*(v_2 w_1 \overline{u_2} \ \overline{v_2}) \cup \HC_\alpha^*(v_2 w_1 \overline{v_2} \ \overline{u_2} \ \overline{v_2}). 
	\]
	Since $v_2 w_1 \overline{v_2} \to_{\RHC} v_2 w_1 \overline{v_2} \ \overline{u_2} \ \overline{v_2}$ and $\HC_\alpha^*(v_2 w_1 \overline{u_2} \ \overline{v_2}) \cap \Sigma^* v_2 w_1 \overline{v_2} \Sigma^* = \emptyset$, we have $\HC_\alpha^*(v_2 w_0) = \HC_\alpha(v_2 w_1 \overline{v_2}) = \HC_\alpha^*(v_2 w_1) \cap \Sigma^* v_2 w_1 \overline{v_2} \Sigma^*$. 
	The regularity of $\HC_\alpha^*(v_2 w_1)$ is due to Theorem~\ref{thm:m1nonx_regular} so that $\HC_\alpha(v_2 w_0)$ is regular. 

	What remains to be considered is the fourth term. 
	One can let $w_0 \overline{v_2} \ \overline{u_2} = u_2 v_2 w_2$ for some non-crossing $(1, 4)$-word $w_2$. 
	Then $\HC_\alpha^*(w_2) = \{w_2\} \cup \HC_\alpha^*(u_2 w_2) \cup \HC_\alpha^*(u_2 v_2 w_2) \cup \HC_\alpha^*(u_2 v_2^2 w_2)$ holds, and we can easily see that $\HC_\alpha^*(u_2v_2w_2) = \HC_\alpha^*(w_2) \cap \Sigma^* u_2 v_2 w_2 \Sigma^*$. 
	The regularity of $\HC_\alpha^*(w_0 \overline{v_2} \ \overline{u_2})$ was proved. 
\end{proof}

\begin{theorem}\label{thm:32nonx_iff_regular}
	Let $w_0 \in \alpha \Sigma^* \cap \Sigma^* \calpha$ be a non-crossing $(3, 2)$-word with $\Pref_\alpha(w_0) = \{\lambda, u_2, u_3\}$ and $\Suff_{\calpha}(w_0) = \{\lambda, \overline{v_2}\}$. 
	Then $\HC_\alpha^*(w_0)$ is regular if and only if one of the following three conditions holds: 
	\begin{enumerate}
	\item	$u_2$ commutes with $v_2$; 
	\item	$u_2$ commutes with $u_3$; 
	\item	$u_3 = u_2 v_2$. 
	\end{enumerate}
\end{theorem}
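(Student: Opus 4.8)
The plan is to prove the two directions separately. The ``if'' direction is already essentially complete: the three preceding lemmas establish that $\HC_\alpha^*(w_0)$ is regular when $u_2 = v_2$ (equivalently $u_2$ commutes with $v_2$, since both words are primitive), when $u_3 = u_2^2$ (equivalently $u_2$ commutes with $u_3$), and when $u_3 = u_2 v_2$. So for the ``if'' direction I would simply remark that conditions~1, 2, 3 are precisely the hypotheses of those three lemmas (using the observation, already recorded in the text, that primitivity of $u_2, v_2$ turns commutativity into literal equality after taking primitive roots), and invoke them.

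The substance of the theorem is the ``only if'' direction: if none of the three conditions holds, then $\HC_\alpha^*(w_0)$ is non-regular. So assume $u_2, u_3, v_2$ pairwise do not commute and $u_3 \neq u_2 v_2$. The strategy is to exhibit a regular language $P$ such that $\HC_\alpha^*(w_0) \cap P$ is provably non-regular (a pumping-lemma argument), mimicking Example~\ref{ex:nonx_cs}. First I would describe the ``shape'' of a typical word reachable from $w_0$: starting from $w_0$, right hairpin completion repeatedly appends complements of $\alpha$-prefixes $\overline{u_2}, \overline{u_3}$ and the complement of the $\calpha$-suffix $\overline{v_2}$ of $w_0$, while left hairpin completion prepends $v_2$ and the $\alpha$-prefixes built out of $u_2, u_3, v_2$; by Theorem~\ref{thm:nonx_initial_once} the factor $w_0$ stays unique, so every reachable word has the form $(\text{left block})\,w_0\,(\text{right block})$, where the blocks are catenations of $v_2, u_2, u_3$ (resp.\ their complements) constrained by the index/\,$\maxind$ bookkeeping developed in Section~\ref{sec:apref_casuf} and by Lemma~\ref{lem:apref_casuf_apref}. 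The key point, exactly as in Lemma~\ref{lem:mn_nonx_include_or_emptyset}, is that because $u_3 \notin \{u_2\}u_2$ and $u_3 \ne u_2 v_2$ and $u_3 \ne v_2 u_2$ (the last two follow from non-commutativity plus primitivity once one rules out the degenerate length coincidences), the only way a factor $\overline{u_3}$ can sit adjacent to a growing block is to be ``produced directly'' from the rightmost $\alpha$ of $w_0$; it cannot be re-factored as a concatenation of shorter available pieces. Likewise on the left. This rigidity forces a counting relation between the number of $u_3$-blocks (or $v_2$-blocks produced by a particular primer) appearing on the left and on the right of $w_0$ — a relation of the form ``exponent on the left $=$ exponent on the right'' — which is exactly what breaks regularity.

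Concretely, I would pick $P$ to be a regular language of the form $L \cdot w_0 \cdot R$ where $L$ and $R$ are chosen so that $P \cap \HC_\alpha^*(w_0) = \{\, t_L(i)\, w_0\, t_R(i) : i \ge 1 \,\}$ for word functions $t_L, t_R$ with $|t_L(i)|$ and $|t_R(i)|$ both growing linearly in $i$ and no shorter description available; the intersection is then non-regular by the pumping lemma, hence $\HC_\alpha^*(w_0)$ is non-regular because $\REG$ is closed under intersection. To build $P$ I would use the concrete derivation patterns: prepend some $u_3$'s and $u_2$'s alternating with $v_2$'s to force a left block like $(v_2^{+} u_3)^i$-ish, append the mirror on the right, and let $P$ pin down all the ``free'' exponents except the one pair that must match. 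The bulk of the work — and the main obstacle — is the careful combinatorial case analysis showing that under the three non-commutativity hypotheses there is genuinely \emph{no} alternative derivation of the pinned words that would let the left exponent and right exponent differ; this is where one must rule out, using Fine--Wilf (the last lemma of Section~\ref{sec:apref_casuf}) and the primitivity of $u_2, v_2$ together with $u_3 \ne u_2^2, u_3 \ne u_2 v_2$, all the sporadic ways $u_3$ (and products involving it) could coincide with products of the other generators. I expect this case analysis, rather than the pumping argument itself, to consume most of the proof.
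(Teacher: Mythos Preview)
Your overall strategy is correct and matches the paper's: the ``if'' direction is exactly the three preceding lemmas, and for the ``only if'' direction you intersect $\HC_\alpha^*(w_0)$ with a carefully chosen regular language so as to force a matching-exponent set, then invoke closure of \REG\ under intersection.

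Where your proposal diverges is in the \emph{choice} of the regular filter, and this matters for the workload you anticipate. You sketch something shaped like $(v_2^{+}u_3)^i$ on each side and warn that the ensuing case analysis will be the bulk of the proof. The paper makes a much simpler choice: it takes
\[
R \;=\; u_3\,u_2^{\ge 2}\,v_2\,w_0\,\overline{u_2}^{\,\ge 2}\,\overline{u_3},
\]
with a \emph{single} $u_3$ as a bookend on each side, and shows that
\[
\HC_\alpha^*(w_0)\cap R \;=\; \{\,u_3\,u_2^{\,i}\,v_2\,w_0\,\overline{u_2}^{\,i}\,\overline{u_3}\mid i\ge 2\,\}.
\]
With this $R$, the combinatorial analysis is short: one first uses Lemma~\ref{lem:mn_nonx_include_or_emptyset} (since $u_3\neq u_2^2$) to discard the branch $\HC_\alpha^*(w_0\overline{u_3})$; then argues that any word of $R$ reachable from $w_0$ must arise via $w_0\to_{\RHC} w_0\overline{u_2}\to_{\RHC}^* w_0\overline{u_2}^{\,i}\to_{\RHC} w_0\overline{u_2}^{\,i}\overline{u_3}$, after which one can only extend to the left, and the left extension is forced (by Lemmas~\ref{lem:suf_relation_aprefs}, \ref{lem:factor_relation_shortests}, \ref{lem:apref_casuf_apref}) to be exactly $u_3 u_2^{\,i} v_2$. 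The key observation that does most of the work is that $u_3\alpha\le_p u_2 x\alpha$ with $x\in\{u_2,u_3,v_2\}^+$ would force $u_3\in u_2\{u_2,v_2\}$, contradicting the negation of conditions~2 and~3; and that $\overline{\alpha}\,\overline{u_2}\le_p \overline{\alpha}\,\overline{u_3}$ would force $u_3=u_2^2$. No Fine--Wilf argument or sporadic-coincidence hunt is needed beyond what is already packaged in the Section~\ref{sec:apref_casuf} lemmas. So your plan is sound, but you should expect the proof to be considerably shorter than you predict once you pick the simpler filter.
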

\begin{proof}
	Let $R = u_3u_2^{\ge 2} v_2 w_0 \overline{u_2}^{\ge 2} \overline{u_3}$, which is a regular language. 
	Under the assumption that none of the conditions 1-3 holds, $L := \HC_\alpha^*(w_0) \cap R = \{u_3u_2^i v_2 w_0 \overline{u_2}^i \overline{u_3} \mid i \ge 2\}$ holds.
	As mentioned previously, if the second condition does not hold, which is equivalent to $u_3 \neq u_2^2$, then $HC_\alpha^*(w_0 \overline{u_3})$ cannot contain any word in the above intersection. 
	Thus, $L = (\HC_\alpha^*(w_0 \overline{u_2}) \cap R) \cup (\HC_\alpha^*(v_2 w_0) \cap R)$. 
	Using Lemmas~\ref{lem:suf_relation_aprefs} and \ref{lem:factor_relation_shortests}, we can easily prove the emptiness of the second intersection of the above sum. 
	This check is left to the reader, and the authors recommend them to check at least $\HC_\alpha^*(v_2 w_0 \overline{u_3} \ \overline{v_2}) \cap R = \emptyset$ because this check involves the important fact that $\calpha \ \overline{u_2} \le_p \calpha \ \overline{u_3}$ implies $u_3 = u_2^2$ and causes a contradiction. 
	As a result, we have $L = \HC_\alpha^*(w_0 \overline{u_2}) \cap R$. 
	Informally speaking, in order to produce a word in $R$ from $w_0$, we first have to extend $w_0$ to the right by $\overline{u_2}$. 

	Now we can extend $w_0 \overline{u_2}$ to the right by $\overline{u_2}$ $i$-times to obtain $w_0 \overline{u_2}^i$. 
	If this obtained word is extended to the left, then the word will be in $u_2 \Sigma^* w_0 \Sigma^* \overline{u_2}$. 
	Let us check that $u_2 \Sigma^* w_0 \Sigma^* \overline{u_2} \cap u_3 \Sigma^* w_0 \Sigma^* \overline{u_3} = \emptyset$. 
	If the intersection is not empty, then $u_3 \alpha \le_p u_2 x \alpha$ for some $x \in \{u_2, u_3, v_2\}^+$. 
	Due to Lemma~\ref{lem:apref_casuf_apref}, $u_3 \in u_2 \{u_2, v_2\}^+$, but actually we can say $u_3 \in u_2 \{u_2, v_2\}$ for $u_3$ is the second shortest nonempty $\alpha$-prefix of $w_0$. 
	However, this means that either the condition 1 or 2 holds, and contradicts our assumption. 
	Thus, we have only one choice; extending $w_0 \overline{u_2}^i$ to the right by $\overline{u_3}$. 

	As mentioned above, $\calpha \ \overline{u_2} \le_p \calpha \ \overline{u_3}$ cannot hold so that we cannot extend $w_0 \overline{u_2}^i \overline{u_3}$ further to the right to obtain a word in $R$. 
	Thus, we should extend this word to the left either by $u_3 u_2^j$ for some $j \le i$ or by $u_3 u_2^i v_2$. 
	Lemmas~\ref{lem:suf_relation_aprefs} and \ref{lem:factor_relation_shortests} prove that the former choice will not lead us to any word in $R$. 
	Now it suffices to mention that extending $u_3 u_2^i v_2 w_0 \overline{u_2}^i \overline{u_3}$ further to the left because such an extension force the contradictory relation $\calpha \ \overline{u_2} \le_p \calpha \ \overline{u_3}$ to hold. 
\end{proof}

	\section{Conclusion}

In this paper, we focused on finding conditions that a word $w_0 \in \alpha \Sigma^* \cap \Sigma^* \calpha$ must satisfy so that its iterated hairpin completion $\HC_\alpha^*(w_0)$ is a regular language. 
We classified the set of all non-crossing words according to the number $m$ of occurrences of $\alpha$ and the number $n$ of occurrences of $\calpha$ on a given word. 
For the cases when $n = 1$ and when $m = n = 2$, we proved that the iterated hairpin completion of a non-crossing $(m, n)$-word is regular. 
We also found a necessary and sufficient condition under which the iterated hairpin completion of a non-crossing $(3, 2)$-word is regular. 
This approach can be generalized to arbitrary non-crossing $(m, n)$-words, with the cases $(m, 1)$ and $(2, 2)$ being the induction base of an inductive proof. 
Future works include considering the same problem for crossing-words. 
In this case, Lemma~\ref{lem:length_nonoverlap_apref_casuf} or Theorem~\ref{thm:nonx_initial_once} does not hold any more, and hence, it may get harder to analyze the derivation processes of how a word is obtained from a given word $w_0$ by iterated hairpin completion. 
In addition, we investigated only the case when the suffix of length $k$ of an initial word $w_0$ is the complement of its prefix of the same length, but we eventually have to consider $w_0$ in $\alpha \Sigma^* \cap \Sigma^* \overline{\beta}$, where $\beta$ might not be equal to $\alpha$ (double-primer hairpin completion). 
We can easily observe that one-step hairpin completion with respect to $\alpha$ ($\beta$) derives a word in $\beta \Sigma^* \cap \Sigma^* \overline{\beta}$ (resp.~$\alpha \Sigma^* \cap \Sigma^* \calpha$) from $w_0$. 
Thus, results obtained in this study of single-primer hairpin completion are important step towards this most general setting of the regularity test problem of iterated hairpin completion of a single word. 
Another direction of research is to consider stopper sequences as in Whiplash PCR \cite{HAKSY00, SKKGYISH99}.

	\bibliographystyle{plain}
	\bibliography{sekibib}

\end{document}